\begin{document}
\newcommand{\fr}[2]{\frac{\;#1\;}{\;#2\;}}
\newtheorem{theorem}{Theorem}[section]
\newtheorem{lemma}{Lemma}[section]
\newtheorem{proposition}{Proposition}[section]
\newtheorem{corollary}{Corollary}[section]
\newtheorem{conjecture}{Conjecture}[section]
\newtheorem{remark}{Remark}[section]
\newtheorem{definition}{Definition}[section]
\newtheorem{notation}{Notation}[section]
\numberwithin{equation}{section}
\newcommand{\Aut}{\mathrm{Aut}\,}
\newcommand{\CSupp}{\mathrm{CSupp}\,}
\newcommand{\Supp}{\mathrm{Supp}\,}
\newcommand{\rank}{\mathrm{rank}\,}
\newcommand{\col}{\mathrm{col}\,}
\newcommand{\len}{\mathrm{len}\,}
\newcommand{\leftlen}{\mathrm{leftlen}\,}
\newcommand{\rightlen}{\mathrm{rightlen}\,}
\newcommand{\length}{\mathrm{length}\,}
\newcommand{\wt}{\mathrm{wt}\,}
\newcommand{\diff}{\mathrm{diff}\,}
\newcommand{\lcm}{\mathrm{lcm}\,}
\newcommand{\dom}{\mathrm{dom}\,}
\newcommand{\SUPP}{\mathrm{SUPP}\,}
\newcommand{\supp}{\mathrm{supp}\,}
\newcommand{\End}{\mathrm{End}\,}
\newcommand{\Hom}{\mathrm{Hom}\,}
\newcommand{\ran}{\mathrm{ran}\,}
\title{Fourier-Reflexive Partitions Induced by Poset Metric$^\ast$}
\author{Yang Xu$^1$ \,\,\,\,\,\, Haibin Kan$^2$\,\,\,\,\,\,Guangyue Han$^3$}

\maketitle

\renewcommand{\thefootnote}{\fnsymbol{footnote}}

\footnotetext{\hspace*{-12mm} \begin{tabular}{@{}r@{}p{13.4cm}@{}}
$^\ast$ & A preliminary version of this work has been presented in IEEE Symposium on Information Theory (ISIT) 2021.\\
$^1$ & Shanghai Key Laboratory of Intelligent Information Processing, School of Computer Science, Fudan University,
Shanghai 200433, China.\\
&Department of Mathematics, Faculty of Science, The University of Hong Kong, Pokfulam Road, Hong Kong, China. {E-mail:12110180008@fudan.edu.cn} \\
$^2$ & Shanghai Key Laboratory of Intelligent Information Processing, School of Computer Science, Fudan University,
Shanghai 200433, China.\\
&Fudan-Zhongan Joint Laboratory of Blockchain and Information Security, Shanghai Engineering Research Center of Blockchain, Fudan University, Shanghai 200433, China. {E-mail:hbkan@fudan.edu.cn} \\
$^3$ & Department of Mathematics, Faculty of Science, The University of Hong Kong, Pokfulam Road, Hong Kong, China. {E-mail:ghan@hku.hk} \\
\end{tabular}}

\vskip 3mm

 {\hspace*{-6mm}\bf Abstract---}\! Let $\mathbf{H}$ be the cartesian product of a family of finite abelian groups indexed by a finite set $\Omega$. A given poset (i.e., partially ordered set) $\mathbf{P}=(\Omega,\preccurlyeq_{\mathbf{P}})$ gives rise to a poset metric on $\mathbf{H}$, which further leads to a partition $\mathcal{Q}(\mathbf{H},\mathbf{P})$ of $\mathbf{H}$. We prove that if $\mathcal{Q}(\mathbf{H},\mathbf{P})$ is Fourier-reflexive, then its dual partition $\Lambda$ coincides with the partition of $\hat{\mathbf{H}}$ induced by $\mathbf{\overline{P}}$, the dual poset of $\mathbf{P}$, and moreover, $\mathbf{P}$ is necessarily hierarchical. This result establishes a conjecture proposed by Gluesing-Luerssen in \cite{4}. We also show that with some other assumptions, $\Lambda$ is finer than the partition of $\hat{\mathbf{H}}$ induced by $\mathbf{\overline{P}}$. In addition, we give some  necessary and sufficient conditions for $\mathbf{P}$ to be hierarchical, and for the case that $\mathbf{P}$ is hierarchical, we give an explicit criterion for determining whether two codewords in $\hat{\mathbf{H}}$ belong to the same block of $\Lambda$. We prove these results by relating the involved partitions with certain family of polynomials, a generalized version of which is also proposed and studied to generalize the aforementioned results.

\section{Introduction}
\setlength{\parindent}{2em}

For a nonempty set $E$, a \textit{partition} of $E$ is a collection of nonempty disjoint subsets of $E$ whose union is $E$. MacWilliams identities based on partitions of finite abelian groups have been proven by Zinoviev and Ericson in \cite{16}, and by Gluesing-Luerssen in \cite{4}. To state their results more precisely, we first introduce some notations.

For a partition $\Psi$ of $E$, we will refer to any member $A\in\Psi$ as a \textit{block} of $\Psi$, and for any $u,v\in E$, we write $u\sim_{\Psi}v$ if $u$ and $v$ belong to the same block of $\Psi$. Moreover, for any two partitions $\Psi_1$, $\Psi_2$ of $E$, $\Psi_1$ is said to be \textit{finer} than $\Psi_2$ if any block of $\Psi_1$ is contained in some block of $\Psi_2$.

Now let $H$ be a finite abelian group. The \textit{character group} of $H$, denoted by $\hat{H}$, is the set of all the group homomorphisms from $H$ to $\mathbb{C}^{*}$ (the multiplicative group of $\mathbb{C}$), equipped with multiplication
$$\text{$(\chi\psi)(a)=\chi(a)\psi(a)$ for all $\chi,\psi\in\hat{H}$,~$a\in H$.}$$
It is well known that $H\cong\hat{H}$ as groups (see [5, Chapter 5, Theorem 6.4]).

For a partition $\Gamma$ of $H$, following [4, Definition 2.1], the \textit{dual partition} of $\Gamma$, denoted by $\widehat{\Gamma}$, is the partition of $\hat{H}$ such that for any $\chi,\psi\in\hat{H}$, $\chi\sim_{\widehat{\Gamma}}\psi$ if and only if
\begin{equation}\text{$\sum_{b\in B}\chi(b)=\sum_{b\in B}\psi(b)$~for all $B\in\Gamma$}.\end{equation}
For $A\in\widehat{\Gamma}$, $B\in \Gamma$, the \textit{generalized Krawtchouk coefficient} $K_{(A,B)}$ is defined as
\begin{equation}\text{$K_{(A,B)}=\sum_{b\in B}\chi(b)$ for any chosen $\chi\in A$}.\end{equation}
For an \textit{additive code} (i.e., a subgroup) $D\subseteq H$, the dual code of $D$, denoted by $^{\bot}D$, is defined as
$$^{\bot}D=\{\chi\mid \chi\in\hat{H},~\chi(d)=1~\text{for all $d\in D$}\}.$$
It has been proven in [4, Theorem 2.7] that
\begin{equation}\text{$|^{\bot}D||D\cap B|=\sum_{A\in\widehat{\Gamma}}|^{\bot}D\cap A| K_{(A,B)}$ for all $B\in\Gamma$.}\end{equation}
The identity in (1.3) provides a general framework for recovering known or deriving new MacWilliams identities, and for many special cases, computing the generalized Krawtchouk coefficients (1.2) leads to corresponding MacWilliams identities in more explicit forms. As an example, consider $H=\mathbb{F}^{n}$ for some finite field $\mathbb{F}$ and $n\in\mathbb{Z}^{+}$, and let $\Gamma$ be the partition of $H$ induced by Hamming weight. Then, (1.3) recovers the classical MacWilliams identity relating the weight distributions of a linear code with those of its dual code (see \cite{10}). More relevant results and examples can be found in [4, Sections 2, 3 and 5]. A MacWilliams identity for $F$-partition (see [16, Definition 1]) has been established in [16, Theorem 1]. A different approach towards MacWilliams identities focusing on numerical weights due to Ravagnani can be found in \cite{14}, where the author proves a MacWilliams identity based on regular support (see [14, Theorem 29]), which also recovers many MacWilliams identities in coding theory  (see [14, Section 6]).

Now, we identify $\hat{\hat{H}}$ with $H$ via the canonical isomorphism $\rho:H\longrightarrow\hat{\hat{H}}$ defined as
$$\text{$(\rho(a))(\chi)=\chi(a)$ for all $a\in H$,~$\chi\in\hat{H}$}$$
(see [5, Chapter 5, Theorem 6.4]). Then, the \textit{bidual partition of $\Gamma$}, denoted by $\widehat{\widehat{\Gamma}}$, is the partition of $H$ such that for any $b,d\in H$, $b\sim_{\widehat{\widehat{\Gamma}}}d$ if and only if
\begin{equation}\text{$\sum_{\chi\in A}\chi(b)=\sum_{\chi\in A}\chi(d)$~for all $A\in\widehat{\Gamma}$}.\end{equation}
With the above notions, we state the definition of Fourier-reflexive partition, which has been introduced in [4, Definition 2.1].

\begin{definition}
{$\Gamma$ is said to be Fourier-reflexive if $\widehat{\widehat{\Gamma}}=\Gamma$.
}
\end{definition}

The following general result, which has been established in [4, Theorem 2.4], provides a simple and useful criterion for Fourier-reflexivity of partitions.

\begin{theorem}
{$|\Gamma|\leqslant|\widehat{\Gamma}|$, $\widehat{\widehat{\Gamma}}$ is finer than $\Gamma$, and moreover, $\Gamma$ is Fourier-reflexive if and only if $|\Gamma|=|\widehat{\Gamma}|$.
}
\end{theorem}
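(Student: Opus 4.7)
The plan is to encode each partition by its subspace of block-constant functions and to study the Fourier transform between these subspaces. Let $V_{\Gamma}:=\mathrm{span}\{\mathds{1}_B:B\in\Gamma\}\subseteq\mathbb{C}^H$ and $V_{\widehat{\Gamma}}:=\mathrm{span}\{\mathds{1}_A:A\in\widehat{\Gamma}\}\subseteq\mathbb{C}^{\hat{H}}$, so that $\dim V_{\Gamma}=|\Gamma|$ and $\dim V_{\widehat{\Gamma}}=|\widehat{\Gamma}|$. I will introduce the Fourier transform $\mathcal{F}\colon\mathbb{C}^H\to\mathbb{C}^{\hat{H}}$, $\mathcal{F}(f)(\chi)=\sum_{b\in H}f(b)\chi(b)$, which is a linear bijection. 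The key computation is $\mathcal{F}(\mathds{1}_B)(\chi)=\sum_{b\in B}\chi(b)$, a quantity which by the defining property (1.1) of $\widehat{\Gamma}$ depends only on the block $A_\chi$ of $\widehat{\Gamma}$ containing $\chi$, and so equals $K_{(A_{\chi},B)}$. Hence $\mathcal{F}(V_{\Gamma})\subseteq V_{\widehat{\Gamma}}$, and injectivity of $\mathcal{F}$ immediately gives $|\Gamma|=\dim\mathcal{F}(V_{\Gamma})\leqslant\dim V_{\widehat{\Gamma}}=|\widehat{\Gamma}|$.

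For the assertion that $\widehat{\widehat{\Gamma}}$ is finer than $\Gamma$, I set $g_A(b):=\sum_{\chi\in A}\chi(b)$ and invoke Fourier inversion in the form $\mathds{1}_B(b)=\frac{1}{|H|}\sum_{\chi}\mathcal{F}(\mathds{1}_B)(\chi)\overline{\chi(b)}=\frac{1}{|H|}\sum_{A\in\widehat{\Gamma}}K_{(A,B)}\overline{g_A(b)}$. If $b\sim_{\widehat{\widehat{\Gamma}}}d$, condition (1.4) forces $g_A(b)=g_A(d)$ for every $A\in\widehat{\Gamma}$, and therefore $\mathds{1}_B(b)=\mathds{1}_B(d)$ for every $B\in\Gamma$; consequently $b$ and $d$ lie in the same block of $\Gamma$.

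The ``only if'' direction of the Fourier-reflexivity criterion is immediate: if $\Gamma=\widehat{\widehat{\Gamma}}$ then $|\Gamma|=|\widehat{\widehat{\Gamma}}|$, and combining $|\Gamma|\leqslant|\widehat{\Gamma}|$ with the same bound applied to $\widehat{\Gamma}$ (which yields $|\widehat{\Gamma}|\leqslant|\widehat{\widehat{\Gamma}}|=|\Gamma|$) forces $|\Gamma|=|\widehat{\Gamma}|$. For the ``if'' direction, assume $|\Gamma|=|\widehat{\Gamma}|$. Then the inclusion $\mathcal{F}(V_{\Gamma})\subseteq V_{\widehat{\Gamma}}$ is an equality by dimension counting, so each $\mathds{1}_A\in V_{\widehat{\Gamma}}$ with $A\in\widehat{\Gamma}$ is of the form $\mathcal{F}(f_A)$ for some $f_A\in V_{\Gamma}$. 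Inverting explicitly yields $f_A=\mathcal{F}^{-1}(\mathds{1}_A)=\frac{1}{|H|}\overline{g_A}$, and since $f_A\in V_{\Gamma}$ is constant on every $\Gamma$-block, so is $g_A$. Hence $b\sim_{\Gamma}d$ implies $g_A(b)=g_A(d)$ for every $A$, i.e., $b\sim_{\widehat{\widehat{\Gamma}}}d$; together with the refinement from the previous step this yields $\Gamma=\widehat{\widehat{\Gamma}}$.

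The most delicate point is the careful tracking of complex conjugates and the $1/|H|$ factor in the inversion formula, together with the observation that the condition ``$g_A$ is $\Gamma$-block-constant for every $A$'' is exactly what is needed to reverse the refinement $\widehat{\widehat{\Gamma}}\preccurlyeq\Gamma$. Apart from this bookkeeping, every step is a clean application of dimension counting on the subspaces $V_{\Gamma}$ and $V_{\widehat{\Gamma}}$, together with the fact that the characters form a basis of $\mathbb{C}^H$.
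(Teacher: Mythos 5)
Your proof is correct. Note, however, that the paper itself does not prove this statement: Theorem 1.1 is quoted from Gluesing-Luerssen [4, Theorem 2.4], so there is no in-paper proof to compare against. Your argument — identifying each partition with its space of block-constant functions, observing that the Fourier transform maps $\mathrm{span}\{\mathds{1}_B : B\in\Gamma\}$ into $\mathrm{span}\{\mathds{1}_A : A\in\widehat{\Gamma}\}$ because $\chi\mapsto\sum_{b\in B}\chi(b)$ is constant on $\widehat{\Gamma}$-blocks by definition, and then using injectivity, Fourier inversion, and dimension counting — is a complete and self-contained proof in the same spirit as the cited reference. All the delicate points check out: the inversion formula with the $1/|H|$ factor and the conjugate is applied correctly, the ``if'' direction correctly upgrades the inclusion to an equality of spaces and deduces that each $g_A$ is $\Gamma$-block-constant, and the two mutual refinements indeed force $\Gamma=\widehat{\widehat{\Gamma}}$.
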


Fourier-reflexivity leads to some interesting and useful properties. For example, for a Fourier-reflexive partition $\Gamma$, its generalized Krawtchouk matrix $(K_{(A,B)}\mid (A,B)\in\widehat{\Gamma}\times\Gamma)$ is invertible, whose inverse is essentially determined by the generalized Krawtchouk matrix of $\widehat{\Gamma}$ (see [3, Section 2.3]). Fourier-reflexive partitions can be alternatively characterized in terms of association schemes (see [3, Theorem 2.9], [4, Section 2], [17, Theorem 1]).

In this paper, we follow [4, Section 5] to further study partitions induced by poset metric (see \cite{1}). Let $\Omega$ be a finite set and let $\mathbf{P}=(\Omega,\preccurlyeq_{\mathbf{P}})$ be a poset. For any $B\subseteq \Omega$, $B$ is said to be an \textit{ideal} of $\mathbf{P}$ if for any $b\in B$ and $a\in \Omega$, $a\preccurlyeq_{\mathbf{P}}b$ implies $a\in B$. We let $\mathcal{I}(\mathbf{P})$ denote the set of all ideals of $\mathbf{P}$. For any $B\subseteq \Omega$, we let $\max_{\mathbf{P}}(B)$ (\textit{resp}. $\min_{\mathbf{P}}(B)$) denote the set of all the maximal (\textit{resp}. minimal) elements with respect to $\preccurlyeq_{\mathbf{P}}$ in $B$, and let $\langle B\rangle_{\mathbf{P}}$ denote the ideal $\{a\mid a\in \Omega~s.t.~\exists~b\in B,~a\preccurlyeq_{\mathbf{P}}b\}$. The \textit{dual poset of $\mathbf{P}$} will be denoted by $\mathbf{\overline{P}}=(\Omega,\preccurlyeq_{\mathbf{\overline{P}}})$, where $$\text{$u\preccurlyeq_{\mathbf{\overline{P}}} v\Longleftrightarrow v\preccurlyeq_{\mathbf{P}}u$ for all $u,v\in \Omega$.}$$
For any $y\in\Omega$, we let $\len(y)$ denote the largest cardinality of a chain in $\mathbf{P}$ containing $y$ as its greatest element. The following notion of \textit{hierarchical poset} has been introduced in \cite{4,7,9,11,13}.

\begin{definition}
{$\mathbf{P}$ is said to be hierarchical if for any $u,v\in \Omega$ with $\len(u)+1\leqslant\len(v)$, it holds that $u\preccurlyeq_{\mathbf{P}} v$.
}
\end{definition}

Let $\left(H_{i}\mid i\in \Omega\right)$ be a family of finite abelian groups and let $\mathbf{H}\triangleq\prod_{i\in \Omega}H_{i}$. For any \textit{codeword} $\beta\in \mathbf{H}$, we define $\supp(\beta)$ as
$$\supp(\beta)=\{i\mid i\in \Omega,~\beta_{(i)}\neq1_{H_{i}}\}.$$
Following \cite{1,4,12}, for any codeword $\beta$, the $\mathbf{P}$-weight of $\beta$, denoted by $\wt_{\mathbf{P}}(\beta)$, is defined as \begin{equation}\wt_{\mathbf{P}}(\beta)=|\langle\supp(\beta)\rangle_{\mathbf{P}}|.\end{equation}
$\mathbf{P}$-weight naturally gives rise to a partition of $\mathbf{H}$, as detailed in the following notation.

\begin{notation}
Let $\mathcal{Q}(\mathbf{H},\mathbf{P})$ denote the partition of $\mathbf{H}$ such that for any $\beta,\gamma\in\mathbf{H}$, $\beta$ and $\gamma$ belong to the same block of $\mathcal{Q}(\mathbf{H},\mathbf{P})$ if and only if $\wt_{\mathbf{P}}(\beta)=\wt_{\mathbf{P}}(\gamma)$.
\end{notation}

The following conjecture has been proposed in [4, Section 5] (see the paragraph right after [4, Theorem 5.4]).

\begin{conjecture}
{If $|H_{i}|\geqslant2$ for all $i\in\Omega$ and $\mathcal{Q}(\mathbf{H},\mathbf{P})$ is Fourier-reflexive, then $\mathbf{P}$ is hierarchical.
}
\end{conjecture}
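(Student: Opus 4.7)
The plan is to argue by contrapositive: assume $\mathbf{P}$ is not hierarchical and exhibit the failure of Fourier-reflexivity. The central device is to attach to each $T\subseteq\Omega$ a single polynomial recording all block sums for any character with support $T$. Using character orthogonality on each factor $H_i$, one verifies that for any $\chi\in\hat{\mathbf{H}}$ with $T=\supp(\chi)$,
\begin{equation*}
\sum_{\beta:\wt_{\mathbf{P}}(\beta)=k}\chi(\beta)\;=\sum_{\substack{I\in\mathcal{I}(\mathbf{P}),\,|I|=k\\ T\cap I\,\subseteq\,\max_{\mathbf{P}}(I)}}(-1)^{|T\cap I|}\prod_{i\in\max_{\mathbf{P}}(I)\setminus T}(|H_i|-1)\prod_{i\in I\setminus\max_{\mathbf{P}}(I)}|H_i|,
\end{equation*}
so the block sum depends on $\chi$ only through $T$. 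Setting $P_T(x):=\sum_{k\geq 0}\bigl(\sum_{\beta:\wt_{\mathbf{P}}(\beta)=k}\chi(\beta)\bigr)x^k$, two characters $\chi,\psi$ lie in the same block of $\widehat{\mathcal{Q}(\mathbf{H},\mathbf{P})}$ if and only if $P_{\supp(\chi)}=P_{\supp(\psi)}$.

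Next I would show that $P_T$ determines $|\langle T\rangle_{\mathbf{\overline{P}}}|$; for singleton supports this is immediate from the degree, since $\Omega\setminus(\langle T\rangle_{\mathbf{\overline{P}}}\setminus T)$ turns out to be the largest ideal satisfying the constraint $T\cap I\subseteq\max_{\mathbf{P}}(I)$, so the top nonvanishing coefficient sits at index $|\Omega|-|\langle T\rangle_{\mathbf{\overline{P}}}|+1$; an analogous extremal-coefficient analysis handles general $T$. This yields that $\widehat{\mathcal{Q}(\mathbf{H},\mathbf{P})}$ is finer than $\mathcal{Q}(\hat{\mathbf{H}},\mathbf{\overline{P}})$. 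Combined with the size-reversing bijection $I\mapsto\Omega\setminus I$ between $\mathcal{I}(\mathbf{P})$ and $\mathcal{I}(\mathbf{\overline{P}})$, which forces $|\mathcal{Q}(\mathbf{H},\mathbf{P})|=|\mathcal{Q}(\hat{\mathbf{H}},\mathbf{\overline{P}})|$, Theorem 1.1 upgrades the hypothesis of Fourier-reflexivity to the equality $\widehat{\mathcal{Q}(\mathbf{H},\mathbf{P})}=\mathcal{Q}(\hat{\mathbf{H}},\mathbf{\overline{P}})$. The conjecture thus reduces to showing that this equality forces $\mathbf{P}$ to be hierarchical.

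This last reduction is where I expect the main combinatorial burden to sit. Contrapositively, given $u,v\in\Omega$ with $\len(u)+1\leq\len(v)$ and $u\not\preccurlyeq_{\mathbf{P}}v$, the aim is to produce $T,T'\subseteq\Omega$ with $|\langle T\rangle_{\mathbf{\overline{P}}}|=|\langle T'\rangle_{\mathbf{\overline{P}}}|$ but $P_T\neq P_{T'}$, contradicting the forced equality above. The natural first candidate is singleton supports at two maximal elements of differing length (both automatically of $\mathbf{\overline{P}}$-weight $1$); if $\max_{\mathbf{P}}(\Omega)$ contains no such pair, I would peel off the top level of $\mathbf{P}$ and iterate on the strictly smaller subposet, or enrich $T,T'$ by elements from below along chains rooted at $u$ and $v$, so that the obstruction $u\not\preccurlyeq v$ manifests as a coefficient disparity at some carefully chosen height. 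Verifying that every non-hierarchical $\mathbf{P}$ admits such a separating pair, and that the explicit formula above actually distinguishes them---tracking the interplay between the constraint $T\cap I\subseteq\max_{\mathbf{P}}(I)$ and the failure of hierarchy---is where the technical weight of the proof lies.
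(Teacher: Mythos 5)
Your setup is sound and coincides with the paper's own device: your $P_T$ is exactly the polynomial $F(\chi)$ of Notation 2.1, your explicit block-sum formula is Proposition 2.1/Theorem 2.1, and the equivalence between lying in the same block of the dual partition and equality of polynomials is Lemma 2.1. The genuine gap is the step ``$P_T$ determines $|\langle T\rangle_{\overline{\mathbf{P}}}|$,'' i.e.\ that $\widehat{\mathcal{Q}(\mathbf{H},\mathbf{P})}$ is finer than $\mathcal{Q}(\hat{\mathbf{H}},\overline{\mathbf{P}})$ assuming only $|H_i|\geqslant 2$. That is precisely Conjecture 1.2, which the paper establishes only under extra hypotheses (all $h_i$ equal and $\geqslant 3$, Theorem 2.2, or $\mathbf{P}$ hierarchical, Corollary 2.2), and your proposed ``extremal-coefficient analysis'' does not deliver it: by Proposition 2.2, $\deg(P_T)=|\Omega|-|D|+|\min_{\mathbf{P}}(D)|$ with $D=\langle T\rangle_{\overline{\mathbf{P}}}$, so equal degrees force only equality of $|D|-|\min_{\mathbf{P}}(D)|$ (singletons are special because there $|\min_{\mathbf{P}}(D)|=1$), and the leading coefficient is $\pm\prod_i h_i\cdot\prod_i(h_i-1)$ over certain index sets, from which the needed cardinalities can be recovered only when all $h_i$ are equal and $\geqslant 3$ (the gcd argument of Lemma 2.4); the information is lost, e.g., when some $h_i=2$, since factors $h_i-1=1$ are invisible, or when the group sizes are mixed. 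Note also that you invoke no consequence of Fourier-reflexivity at this stage, so you are in effect asserting the general form of Conjecture 1.2, which the paper leaves unresolved.

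The paper instead uses the Fourier-reflexivity hypothesis \emph{before} comparing $\Lambda$ with $\mathcal{Q}(\hat{\mathbf{H}},\overline{\mathbf{P}})$: Theorem 3.4 shows $D\mapsto\pi(\Omega,D)$ is strictly decreasing, with respect to a total order $\curlyeqprec$ on $\mathbb{R}[x]$, along chains in $\mathcal{I}(\overline{\mathbf{P}})$, so a maximal chain $\emptyset=A_{(0)}\subsetneqq\cdots\subsetneqq A_{(n)}=\Omega$ yields $n+1$ pairwise distinct polynomials; Fourier-reflexivity gives $|\Lambda|=n+1$ (Corollary 2.1, Lemma 3.4), whence by counting $\pi(\Omega,D)$ depends only on $|D|$, and Proposition 3.3 (via Lemma 2.6) then forces $\mathbf{P}$ to be hierarchical; Appendix A gives an alternative association-scheme proof. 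Your final reduction---that $\Lambda=\mathcal{Q}(\hat{\mathbf{H}},\overline{\mathbf{P}})$ implies $\mathbf{P}$ hierarchical---is correct and could be closed by citing [4, Theorems 5.4 and 5.5] (re-proved in the paper as Proposition 3.3/Theorem 3.3), but as written it is only a plan (``peel off the top level,'' ``carefully chosen height''), not an argument. So the proposal is missing its central ingredient: either a proof of Conjecture 1.2 in full generality, or a chain/counting argument of the paper's type that exploits $|\Lambda|=n+1$ directly.
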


Now we identify $\hat{\mathbf{H}}$ with $\prod_{i\in \Omega}\hat{H_{i}}$ by viewing each $\alpha\in\prod_{i\in \Omega}\hat{H_{i}}$ as an element of $\hat{\mathbf{H}}$ via
\begin{equation}\text{$\alpha(\beta)=\prod_{i\in\Omega}\alpha_{(i)}\left(\beta_{(i)}\right)$ for all $\beta\in \mathbf{H}$}\end{equation}
(see [4, Section 2]). With this identification, for any $\alpha\in\hat{\mathbf{H}}$, $\supp(\alpha)$ is the set $\{i\mid i\in \Omega,~\alpha_{(i)}\neq1_{\hat{H_{i}}}\}$. Furthermore, we can also consider poset weight for codewords in $\hat{\mathbf{H}}$ as in (1.5) and hence Notation 1.1 also applies. For a poset $\mathbf{P}=(\Omega,\preccurlyeq_{\mathbf{P}})$, MacWilliams identities relating the $\mathbf{P}$-weight defined over $\mathbf{H}$ with the $\mathbf{\overline{P}}$-weight defined over $\hat{\mathbf{H}}$ have been established in \cite{4,7,13}, and the following conjecture has been proposed in [4, Section 5]:
\begin{conjecture}
{If $|H_{i}|\geqslant2$ for all $i\in\Omega$, then the dual partition of $\mathcal{Q}(\mathbf{H},\mathbf{P})$ is finer than $\mathcal{Q}(\hat{\mathbf{H}},\mathbf{\overline{P}})$.
}
\end{conjecture}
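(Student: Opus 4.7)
For each character $\alpha \in \hat{\mathbf{H}}$, I would associate the polynomial
\[
f_\alpha(x) := \sum_{\beta \in \mathbf{H}} \alpha(\beta)\, x^{\wt_{\mathbf{P}}(\beta)}.
\]
By the definition of the dual partition, $\alpha \sim_{\widehat{\mathcal{Q}(\mathbf{H},\mathbf{P})}} \alpha'$ if and only if $f_\alpha = f_{\alpha'}$. So the conjecture amounts to showing that $f_\alpha$ determines $\wt_{\mathbf{\overline{P}}}(\alpha) = |\langle \supp(\alpha) \rangle_{\mathbf{\overline{P}}}|$.

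First I would decompose $f_\alpha$ over the ideal lattice. Setting $S := \supp(\alpha)$ and grouping $\beta$ by $I = \langle \supp(\beta) \rangle_{\mathbf{P}}$, the factorization $\hat{\mathbf{H}} = \prod_i \hat{H}_i$ together with the bijection $\{\beta : \langle \supp(\beta) \rangle_{\mathbf{P}} = I\} \leftrightarrow \{T : \max_{\mathbf{P}}(I) \subseteq T \subseteq I\}$ via the support yields
\[
f_\alpha(x) = \sum_{I \in \mathcal{I}(\mathbf{P})} x^{|I|}\, T_I, \qquad T_I = \prod_{i \in \max_{\mathbf{P}}(I)} c_i \cdot \prod_{i \in I \setminus \max_{\mathbf{P}}(I)}(1 + c_i),
\]
where $c_i = -1$ for $i \in S$ and $c_i = |H_i| - 1$ otherwise. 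Since $1 + c_i$ vanishes exactly on $S$, only ideals $I$ with $I \cap S \subseteq \max_{\mathbf{P}}(I)$ contribute, and a short argument identifies these as exactly the sub-ideals of $K := \Omega \setminus \{a \in \Omega : \exists\, s \in S,\, s \prec_{\mathbf{P}} a\}$. Moreover $K$ is itself an ideal of $\mathbf{P}$, and $K \cap S = \max_{\mathbf{P}}(K) \cap S = \min_{\mathbf{P}}(S)$.

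Next I would extract $\wt_{\mathbf{\overline{P}}}(\alpha)$ from $f_\alpha$. Under $|H_i| \geq 2$, every factor in $T_K$ is nonzero, so $\deg f_\alpha = |K|$ and thus $|K|$ is a function of $f_\alpha$. Using $\langle S \rangle_{\mathbf{\overline{P}}} = S \cup \{a : \exists\, s \in S,\, s \prec_{\mathbf{P}} a\}$ with intersection $S \setminus \min_{\mathbf{P}}(S)$, I get $\wt_{\mathbf{\overline{P}}}(\alpha) = (|\Omega| - |K|) + |\min_{\mathbf{P}}(S)|$, so it remains only to recover $|\min_{\mathbf{P}}(S)|$ from $f_\alpha$. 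The leading coefficient
\[
T_K = (-1)^{|\min_{\mathbf{P}}(S)|} \prod_{i \in \max_{\mathbf{P}}(K) \setminus S}(|H_i| - 1) \prod_{i \in K \setminus \max_{\mathbf{P}}(K)} |H_i|
\]
is a real number whose sign already yields the parity of $|\min_{\mathbf{P}}(S)|$; to fix the exact value I would combine the magnitude of $T_K$ with the next few coefficients of $f_\alpha$, which encode the subposet structure of $K$ and the embedding $\min_{\mathbf{P}}(S) \subseteq \max_{\mathbf{P}}(K)$, possibly recasting the problem through the auxiliary family of polynomials advertised in the abstract.

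The main obstacle will be this last determination: a priori two different pairs $(K, \min_{\mathbf{P}}(S))$ could yield identical $f_\alpha$, so one must argue that $|\min_{\mathbf{P}}(S)|$ nevertheless agrees between them. I expect this is where the supplementary hypotheses signalled by the phrase ``with some other assumptions'' in the abstract enter, and where either a careful coefficient-by-coefficient comparison or an induction on $|\Omega|$ (peeling off a maximal element of $\mathbf{P}$ and tracking how $f_\alpha$ and the filter size transform) will close the gap.
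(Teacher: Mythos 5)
Your setup reproduces the paper's own machinery almost verbatim: your $f_\alpha$ is the paper's $F(\alpha)$ (Notation 2.1), the equivalence $\alpha\sim\alpha'\Longleftrightarrow f_\alpha=f_{\alpha'}$ is Lemma 2.1, your decomposition over ideals with the vanishing criterion $I\cap S\subseteq\max_{\mathbf{P}}(I)$ is Proposition 2.1 together with Lemma 2.3 and Theorem 2.1 (your $K$ is the paper's $X=(\Omega-D)\cup\min_{\mathbf{P}}(D)$), and your degree and leading-coefficient formulas are Proposition 2.2. All of that is correct. The problem is that the proposal stops exactly where the statement becomes hard: recovering $|\min_{\mathbf{P}}(S)|$ (equivalently $|D|=\wt_{\mathbf{\overline{P}}}(\alpha)$) from $f_\alpha$. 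Your suggested remedies do not close this in the stated generality: the sign of the leading coefficient only gives the parity of $|\min_{\mathbf{P}}(S)|$, and its magnitude is a product of factors $h_i$ and $h_j-1$ which, once the $h_i$ may differ or equal $2$, does not determine the exponent structure (a factor $h_i-1=1$ is invisible, and mixed sizes allow coincidences such as $h_i=h_j-1$); there is no closed form for ``the next few coefficients'' that isolates $|\min_{\mathbf{P}}(S)|$ in general.

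More importantly, the statement is Conjecture 1.2, which the paper itself does not prove under the hypothesis $|H_i|\geqslant2$ alone; it is established there only under supplementary assumptions, and those assumptions enter precisely at your gap. Theorem 2.2 assumes $h_i=h_l=a\geqslant3$ for all $i,l$, so the leading coefficient has absolute value $a^{x}(a-1)^{y}$ with $a$ and $a-1$ coprime and both at least $2$, which pins down $x$ and $y$ and hence, combined with the degree, forces $|D|=|B|$ (Lemma 2.4). Corollary 2.2 instead assumes $\mathbf{P}$ hierarchical and uses an explicit product formula for the whole polynomial (Lemma 3.2, Proposition 3.4, Theorem 2.5), not just its leading terms. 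So the missing determination in your plan is not a routine verification to be filled by comparing a few more coefficients or by an induction peeling off maximal elements: it is the actual open content of the conjecture. A correct writeup must either adopt one of the paper's additional hypotheses (equal alphabet sizes at least $3$, or $\mathbf{P}$ hierarchical) and then run the corresponding argument, or supply a genuinely new idea beyond what you have outlined.
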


We note that with the help of Theorem 1.1 and [4, Theorem 5.4], one can verify that Conjecture 1.2 implies Conjecture 1.1.

In this paper, we prove Conjecture 1.1, and with the additional assumption that either $|H_{i}|=|H_{l}|\geqslant3$ for all $i,l\in\Omega$ or $\mathbf{P}$ is hierarchical, we show that Conjecture 1.2 holds true. We also establish some sufficient and necessary conditions for $\mathbf{P}$ to be hierarchical, and for the case that $\mathbf{P}$ is hierarchical, we give an explicit criterion for determining whether two codewords in $\hat{\mathbf{H}}$ belong to the same block of $\Lambda$, where $\Lambda$ denotes the dual partition of $\mathcal{Q}(\mathbf{H},\mathbf{P})$. We use a polynomial approach to prove these results and some of their generalizations, as detailed in the next two paragraphs.

In Section 2, using the Krawtchouk coefficients of $\mathcal{Q}(\mathbf{H},\mathbf{P})$, we first relate each $\alpha\in\hat{\mathbf{H}}$ with a polynomial $F(\alpha)$ (Notation 2.1). A key observation underpinning our approach is that any two codewords $\alpha,\gamma\in\hat{\mathbf{H}}$ belong to the same block of $\Lambda$ if and only if $F(\alpha)=F(\gamma)$ (Lemma 2.1). Based on this observation, we study the partition $\Lambda$ via the family of polynomials $(F(\alpha)\mid\alpha\in\hat{\mathbf{H}})$. In Section 2.1, we compute $F(\alpha)$ (Theorem 2.1, Propositions 2.1 and 2.2). In Section 2.2, by examining the leading coefficients, we prove Conjecture 1.2 with the additional assumption that $|H_{i}|=|H_{l}|\geqslant3$ for all $i,l\in\Omega$ (Theorem 2.2). In Section 2.3, we give a necessary and sufficient condition for $\mathbf{P}$ to be hierarchical in terms of $\overline{\mathbf{P}}$-weight defined over $\hat{\mathbf{H}}$ and the degree of $F(\alpha)$ (Theorem 2.3). In Section 2.4, we state Theorem 2.4 which establishes Conjecture 1.1. In Section 2.5, for the case that $\mathbf{P}$ is hierarchical, we give a necessary and sufficient condition for two codewords $\alpha,\gamma\in\hat{\mathbf{H}}$ belong to the same block of $\Lambda$ in terms of $\langle\supp(\alpha)\rangle_{\overline{\mathbf{P}}}$ and $\langle\supp(\gamma)\rangle_{\overline{\mathbf{P}}}$, and as a consequence, we prove Conjecture 1.2 with the additional assumption that $\mathbf{P}$ is hierarchical (Theorem 2.5, Corollary 2.2).

In Section 3, with respect to the same poset $\mathbf{P}=(\Omega,\preccurlyeq_{\mathbf{P}})$, we propose and study a generalized version of $F(\alpha)$, which we denote by $\pi(Y,D)$, where $D\subseteq Y\subseteq\Omega$ (see Equation (3.2)). Some basic properties of such polynomials are given in Section 3.1 (Theorems 3.1 and 3.2). In Section 3.2, we study conditions for $\mathbf{P}$ to be hierarchical in terms of $\pi(\cdot,\cdot)$. We compute $\pi(\Omega,D)$ for the case that $\mathbf{P}$ is hierarchical (Lemma 3.2), and give generalizations of Theorem 2.3, Theorem 2.5 and [4, Theorems 5.4 and 5.5] (Proposition 3.2, Proposition 3.4, Theorem 3.3). In Section 3.3, we prove that if the base field of $\pi(Y,D)$ is set to be $\mathbb{R}$, then for $D,A\in\mathcal{I}(\overline{\mathbf{P}})$ with $A\subsetneqq D$, $\pi(\Omega,D)$ is ``smaller'' than $\pi(\Omega,A)$ with respect to certain total order defined on $\mathbb{R}[x]$ (Theorem 3.4), and as an application, we prove a generalization of Theorem 2.4 with the help of the set $\{\pi(\Omega,M)\mid M\in\mathcal{I}(\overline{\mathbf{P}})\}$ (Theorem 3.5). Finally, in Section 3.4, we prove Theorems 2.4 and 2.5 as consequences of Theorem 3.5 and Proposition 3.4, respectively.

We remark that using the method introduced by Oh in \cite{11} and by Machado, Pinheiro and Firer in \cite{9}, together with the result established by Zinoviev and Ericson in \cite{17}, Conjecture 1.1 can be alternatively proved in terms of association schemes. In appendix A, we give such a proof, which closely follows the spirits in the proofs of [11, Theorem 2.9] and [9, Theorem 3, part $\mathfrak{B}_3$]. In terms of establishing Conjecture 1.1, the association scheme approach is more direct, and the proof is shorter. On the other hand, as presented in Section 2, other than establishing Conjecture 1.1, our polynomial approach can also be used to establish other results which do not require the Fourier-reflexivity of $\mathcal{Q}(\mathbf{H},\mathbf{P})$. Also, as presented in Section 3, by studying $\pi(\cdot,\cdot)$, which is a generalized version of $F(\alpha)$, we are able to prove generalizations of some coding-theoretic results presented in Section 2, as well as to prove results which are perhaps of interest in their own right (e.g., Theorems 3.2 and 3.4).

\section{The partition $\mathcal{Q}(\mathbf{H},\mathbf{P})$}

\setlength{\parindent}{2em}

Throughout the paper, for any $a,b\in\mathbb{Z}$, we let $[a,b]$ denote the set $\{i\mid i\in\mathbb{Z},~a\leqslant i\leqslant b\}$. Note that if $a\geqslant b+1$, then $[a,b]=\emptyset$.

In this section, we let $\Omega$ be a finite set with $|\Omega|=n$, and let $\left(H_{i}\mid i\in \Omega\right)$ be a family of finite abelian groups. For any $i\in\Omega$, we let $h_{i}\triangleq|H_{i}|$. As in Section 1, we consider $\mathbf{H}=\prod_{i\in \Omega}H_{i}$ and we will use the identification $\hat{\mathbf{H}}=\prod_{i\in \Omega}\hat{H_{i}}$ as in (1.6).

From now on until the end of this section, we fix a poset $\mathbf{P}=(\Omega,\preccurlyeq_{\mathbf{P}})$, and we consider the partition $\mathcal{Q}(\mathbf{H},\mathbf{P})$. Furthermore, we let $\Lambda$ denote the dual partition of $\mathcal{Q}(\mathbf{H},\mathbf{P})$.

We introduce the following notation, where we relate each $\alpha\in\hat{\mathbf{H}}$ with a polynomial.

\begin{notation}
{For any $\alpha\in\hat{\mathbf{H}}$, let $F(\alpha)\triangleq\sum_{l=0}^{n}\left(\sum_{(\beta\in \mathbf{H},\wt_{\mathbf{P}}(\beta)=l)}\alpha(\beta)\right)x^{l}$.
}
\end{notation}

\subsection{The polynomial $F(\alpha)$}
\setlength{\parindent}{2em}
First of all, with the help of Notation 2.1, (1.1) can be immediately reformulated for $\mathcal{Q}(\mathbf{H},\mathbf{P})$, as detailed in the following lemma.

\setlength{\parindent}{0em}
\begin{lemma}
{For any $\alpha,\gamma\in\hat{\mathbf{H}}$, $\alpha\sim_{\Lambda}\gamma\Longleftrightarrow F(\alpha)=F(\gamma)$.
}
\end{lemma}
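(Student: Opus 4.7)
The plan is to prove the lemma by directly unpacking the definitions on both sides and verifying that they are term-by-term identical. The blocks of $\mathcal{Q}(\mathbf{H},\mathbf{P})$ are by definition the level sets of the $\mathbf{P}$-weight function, so the key point is that the coefficient of $x^{l}$ in $F(\alpha)$ is literally the character sum over the $l$-th level set, which is exactly the quantity appearing in the definition (1.1) of the dual partition applied to $\Gamma=\mathcal{Q}(\mathbf{H},\mathbf{P})$.

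More concretely, first I would record that since $\wt_{\mathbf{P}}(\beta)=|\langle\supp(\beta)\rangle_{\mathbf{P}}|$ takes values in $[0,n]$, the blocks of $\mathcal{Q}(\mathbf{H},\mathbf{P})$ are precisely the nonempty members of the family $\{B_{l}\mid l\in[0,n]\}$, where
\[
B_{l}\triangleq\{\beta\in\mathbf{H}\mid\wt_{\mathbf{P}}(\beta)=l\}.
\]
By the definition of $\Lambda$ in (1.1), $\alpha\sim_{\Lambda}\gamma$ holds if and only if $\sum_{\beta\in B_{l}}\alpha(\beta)=\sum_{\beta\in B_{l}}\gamma(\beta)$ for every nonempty $B_{l}$; and since the sum over an empty set is $0$, the nonemptiness restriction can be dropped without changing the statement. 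So $\alpha\sim_{\Lambda}\gamma$ is equivalent to
\[
\sum_{\beta\in B_{l}}\alpha(\beta)=\sum_{\beta\in B_{l}}\gamma(\beta)\quad\text{for all }l\in[0,n].
\]

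On the polynomial side, Notation 2.1 says that the coefficient of $x^{l}$ in $F(\alpha)$ is $\sum_{\beta\in B_{l}}\alpha(\beta)$, and similarly for $F(\gamma)$. Comparing coefficients of the polynomials $F(\alpha)$ and $F(\gamma)$, which are of degree at most $n$, the equality $F(\alpha)=F(\gamma)$ is equivalent to the same family of scalar equalities for $l\in[0,n]$. The two characterizations therefore coincide, which proves the lemma. There is no substantive obstacle here; the only point requiring a moment of care is the bookkeeping between ``all blocks of $\mathcal{Q}(\mathbf{H},\mathbf{P})$'' and ``all $l\in[0,n]$'', handled by the observation about empty level sets above.
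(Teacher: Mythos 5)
Your proposal is correct and is exactly the argument the paper has in mind: the paper states Lemma 2.1 without proof, calling it an immediate reformulation of the definition (1.1) via Notation 2.1, which is precisely your coefficient-by-coefficient comparison. Your extra care about empty level sets is sound bookkeeping but not a substantive difference.
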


\setlength{\parindent}{0em}
\begin{remark}
{Lemma 2.1 is the key observation underpinning our approach in this paper. Based on Lemma 2.1, we will study the partition $\Lambda$ via the family of polynomials $(F(\alpha)\mid\alpha\in\hat{\mathbf{H}})$.
}
\end{remark}

\setlength{\parindent}{2em}
The following lemma is an immediate corollary of [6, Proposition 1.1 and Lemma 1.2] (also see [12, Proposition 7]).

\setlength{\parindent}{0em}
\begin{lemma}
{\bf{(1)}}\,\,For any $U\in\mathcal{I}(\mathbf{P})$ with $U\subsetneqq \Omega$, there exists $V\in\mathcal{I}(\mathbf{P})$ such that $U\subseteq V$ and $|V|=|U|+1$.

{\bf{(2)}}\,\,For any $V\in\mathcal{I}(\mathbf{P})$ with $V\neq\emptyset$, there exists $U\in\mathcal{I}(\mathbf{P})$ such that $U\subseteq V$ and $|U|=|V|-1$.

{\bf{(3)}}\,\,$\mathcal{I}(\mathbf{\overline{P}})=\{\Omega-A\mid A\in \mathcal{I}(\mathbf{P})\}$.
\end{lemma}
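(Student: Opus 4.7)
The plan is to prove each of the three parts separately using only elementary properties of finite posets and the definition of an ideal, since no deep machinery is required.

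For part \textbf{(1)}, I would exploit the finiteness of $\Omega \setminus U$, which is nonempty by the hypothesis $U \subsetneqq \Omega$. I would pick any element $v$ that is minimal in $\Omega \setminus U$ with respect to $\preccurlyeq_{\mathbf{P}}$ and set $V = U \cup \{v\}$. Clearly $|V|=|U|+1$. To verify $V\in\mathcal{I}(\mathbf{P})$, take $a \preccurlyeq_{\mathbf{P}} b$ with $b \in V$. If $b \in U$, then $a \in U \subseteq V$ because $U$ is itself an ideal. If $b = v$ and $a \neq v$, then $a \in U$: otherwise $a \in \Omega \setminus U$ would contradict the minimality of $v$, since $a \precneqq_{\mathbf{P}} v$. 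Either way $a \in V$.

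For part \textbf{(2)}, the argument is dual. Since $V$ is finite and nonempty, pick any element $u$ that is maximal in $V$ with respect to $\preccurlyeq_{\mathbf{P}}$, and set $U = V \setminus \{u\}$. Then $|U| = |V|-1$. To check that $U \in \mathcal{I}(\mathbf{P})$, let $a \preccurlyeq_{\mathbf{P}} b$ with $b \in U$; then $b \in V$, so $a \in V$ by the ideal property of $V$. It remains to rule out $a = u$. If $a = u$, then $u \preccurlyeq_{\mathbf{P}} b$ with $b \in V$, and the maximality of $u$ in $V$ forces $u = b$, which contradicts $b \in U = V \setminus \{u\}$. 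Hence $a \in U$.

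For part \textbf{(3)}, this is a direct manipulation of definitions, and it is the part where one must be careful with the direction of the implications rather than with any combinatorics. Fix $A \subseteq \Omega$ and set $A' = \Omega - A$. Unfolding the definitions, $A \in \mathcal{I}(\mathbf{P})$ means $b \in A$ and $a \preccurlyeq_{\mathbf{P}} b$ imply $a \in A$. Taking the contrapositive and relabeling, this is equivalent to: $a \notin A$ and $a \preccurlyeq_{\mathbf{P}} b$ imply $b \notin A$, i.e., $a \in A'$ and $b \preccurlyeq_{\mathbf{\overline{P}}} a$ imply $b \in A'$. This is precisely the statement $A' \in \mathcal{I}(\mathbf{\overline{P}})$. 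Since the correspondence $A \mapsto \Omega - A$ is involutive, we obtain equality of sets $\mathcal{I}(\mathbf{\overline{P}}) = \{\Omega - A \mid A \in \mathcal{I}(\mathbf{P})\}$.

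I do not anticipate any genuine obstacle here; the only mild care required is to keep the minimality/maximality argument in parts (1) and (2) straight, and to track the reversal of $\preccurlyeq$ when passing to $\overline{\mathbf{P}}$ in part (3).
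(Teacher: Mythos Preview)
Your proof is correct. The paper does not actually prove this lemma but simply cites it as ``an immediate corollary of [6, Proposition 1.1 and Lemma 1.2] (also see [12, Proposition 7])''; your explicit elementary argument---choosing a minimal element of $\Omega\setminus U$ for (1), a maximal element of $V$ for (2), and unwinding the contrapositive for (3)---is precisely the standard verification one would expect those references to contain, so there is no substantive divergence.
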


\setlength{\parindent}{2em}
Using Lemma 2.2 and Theorem 1.1, we establish the following corollary.

\setlength{\parindent}{0em}
\begin{corollary}
If $h_{i}\geqslant2$ for all $i\in\Omega$, then $|\mathcal{Q}(\mathbf{H},\mathbf{P})|=n+1$, $|\Lambda|\geqslant n+1$, and moreover, $\mathcal{Q}(\mathbf{H},\mathbf{P})$ is Fourier-reflexive $\Longleftrightarrow$ $|\Lambda|=n+1$.
\end{corollary}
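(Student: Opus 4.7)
The plan is to deduce everything from Theorem 1.1 once we know that $|\mathcal{Q}(\mathbf{H},\mathbf{P})| = n+1$, so the only real work is counting the distinct values of $\wt_{\mathbf{P}}$ on $\mathbf{H}$. Since $\wt_{\mathbf{P}}(\beta) = |\langle \supp(\beta)\rangle_{\mathbf{P}}| \in [0,n]$ by definition, the partition $\mathcal{Q}(\mathbf{H},\mathbf{P})$ has at most $n+1$ blocks, so the task is to show each value $k \in [0,n]$ is actually attained.

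First I would observe that for any $I \in \mathcal{I}(\mathbf{P})$, one has $\langle I\rangle_{\mathbf{P}} = I$, so if there exists $\beta \in \mathbf{H}$ with $\supp(\beta) = I$, then $\wt_{\mathbf{P}}(\beta) = |I|$. Under the hypothesis $h_i \geqslant 2$, every $H_i$ contains a nonidentity element, hence such a $\beta$ exists for any subset $I \subseteq \Omega$, in particular for any ideal. Thus it suffices to produce an ideal of $\mathbf{P}$ of size $k$ for each $k \in [0,n]$.

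Next I would get these ideals by iterating Lemma 2.2(1): starting from the ideal $\emptyset$, successively enlarge it one element at a time to obtain a chain $\emptyset = I_0 \subsetneqq I_1 \subsetneqq \cdots \subsetneqq I_n = \Omega$ in $\mathcal{I}(\mathbf{P})$ with $|I_k| = k$. Combined with the previous paragraph, this yields codewords of every $\mathbf{P}$-weight $0,1,\dots,n$, proving $|\mathcal{Q}(\mathbf{H},\mathbf{P})| = n+1$.

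Finally, Theorem 1.1 gives $|\mathcal{Q}(\mathbf{H},\mathbf{P})| \leqslant |\Lambda|$, hence $|\Lambda| \geqslant n+1$, and the Fourier-reflexivity equivalence $|\mathcal{Q}(\mathbf{H},\mathbf{P})| = |\Lambda|$ translates directly into $|\Lambda| = n+1$. There is no serious obstacle; the mildly delicate point is ensuring that the hypothesis $h_i \geqslant 2$ is genuinely used (which is where we need nontrivial elements to realize a prescribed support), and that we invoke Lemma 2.2(1) rather than attempting an ad hoc construction of ideals.
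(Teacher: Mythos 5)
Your proof is correct and follows essentially the same route as the paper: realize every ideal cardinality in $[0,n]$ as a $\mathbf{P}$-weight using $h_i\geqslant 2$ (to prescribe supports) together with Lemma 2.2, then conclude via Theorem 1.1. The paper's own argument is just a terser version of exactly this.
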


\begin{proof}
Since $h_{i}\geqslant2$ for all $i\in\Omega$, we deduce that for any $I\subseteq \Omega$, there exists $\beta\in\mathbf{H}$ with $\supp(\beta)=I$. This fact, together with Lemma 2.2, implies that $|\mathcal{Q}(\mathbf{H},\mathbf{P})|=|\Omega|+1=n+1$. The rest immediately follows from Theorem 1.1.
\end{proof}

\setlength{\parindent}{2em}
Now we compute $F(\alpha)$, and we begin with the following proposition.

\begin{proposition}
{Let $\alpha\in \hat{\mathbf{H}}$ with $\langle \supp(\alpha)\rangle_{\mathbf{\overline{P}}}=D$, and let $I\in\mathcal{I}(\mathbf{P})$. Then, we have
\begin{eqnarray*}
\begin{split}
&\sum_{(\beta\in\mathbf{H},\langle \supp(\beta)\rangle_{\mathbf{P}}=I)}\alpha(\beta)\\
&=\begin{cases}
(-1)^{|I\cap D|}\left(\prod_{i\in I-\max_{\mathbf{P}}(I)}h_{i}\right)\left(\prod_{i\in \max_{\mathbf{P}}(I)-D}(h_{i}-1)\right),&I\cap D\subseteq \max_{\mathbf{P}}(I);\\
0,&I\cap D\not\subseteq \max_{\mathbf{P}}(I).
\end{cases}
\end{split}
\end{eqnarray*}
}
\end{proposition}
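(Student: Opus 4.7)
The plan is to evaluate the sum coordinatewise by exploiting the factorization $\alpha(\beta)=\prod_{i\in\Omega}\alpha_{(i)}(\beta_{(i)})$ coming from (1.6). The first step is to replace the constraint $\langle\supp(\beta)\rangle_{\mathbf{P}}=I$ by an equivalent pointwise condition on $\supp(\beta)$. Since $I$ is an ideal, one checks easily that $\langle\supp(\beta)\rangle_{\mathbf{P}}=I$ holds if and only if $\max_{\mathbf{P}}(I)\subseteq\supp(\beta)\subseteq I$: the containment $\supp(\beta)\subseteq I$ forces $\langle\supp(\beta)\rangle_{\mathbf{P}}\subseteq I$, while $\max_{\mathbf{P}}(I)\subseteq\supp(\beta)$ gives $I=\langle\max_{\mathbf{P}}(I)\rangle_{\mathbf{P}}\subseteq\langle\supp(\beta)\rangle_{\mathbf{P}}$. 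This decouples the coordinates.

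With the reformulated constraint, $\beta_{(i)}$ is forced to be $1_{H_{i}}$ for $i\notin I$, may be chosen freely from $H_{i}$ for $i\in I-\max_{\mathbf{P}}(I)$, and must be chosen from $H_{i}-\{1_{H_{i}}\}$ for $i\in\max_{\mathbf{P}}(I)$. Hence the sum factors as
\[
\prod_{i\in I-\max_{\mathbf{P}}(I)}\Big(\sum_{g\in H_{i}}\alpha_{(i)}(g)\Big)\cdot\prod_{i\in\max_{\mathbf{P}}(I)}\Big(\sum_{g\in H_{i}-\{1_{H_{i}}\}}\alpha_{(i)}(g)\Big).
\]
By character orthogonality, $\sum_{g\in H_{i}}\alpha_{(i)}(g)$ equals $h_{i}$ when $i\notin\supp(\alpha)$ and $0$ otherwise; accordingly $\sum_{g\in H_{i}-\{1_{H_{i}}\}}\alpha_{(i)}(g)$ equals $h_{i}-1$ when $i\notin\supp(\alpha)$ and $-1$ when $i\in\supp(\alpha)$. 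Thus the product vanishes unless $(I-\max_{\mathbf{P}}(I))\cap\supp(\alpha)=\emptyset$, i.e., unless $I\cap\supp(\alpha)\subseteq\max_{\mathbf{P}}(I)$, in which case the product collapses to
\[
(-1)^{|\max_{\mathbf{P}}(I)\cap\supp(\alpha)|}\Big(\prod_{i\in I-\max_{\mathbf{P}}(I)}h_{i}\Big)\Big(\prod_{i\in\max_{\mathbf{P}}(I)-\supp(\alpha)}(h_{i}-1)\Big).
\]

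The last and most delicate step is to translate this $\supp(\alpha)$-formulation into the $D$-formulation stated in the proposition. I expect the main work to go into proving the equivalence $I\cap\supp(\alpha)\subseteq\max_{\mathbf{P}}(I)\Longleftrightarrow I\cap D\subseteq\max_{\mathbf{P}}(I)$ and, whenever either side holds, the chain of equalities $\max_{\mathbf{P}}(I)\cap\supp(\alpha)=\max_{\mathbf{P}}(I)\cap D=I\cap D$. The nontrivial direction relies on $I$ being a $\mathbf{P}$-ideal together with $D=\langle\supp(\alpha)\rangle_{\overline{\mathbf{P}}}$: given $j\in I\cap D$, there exists $s\in\supp(\alpha)$ with $s\preccurlyeq_{\mathbf{P}}j$, and then $s\in I$ by the ideal property; invoking the hypothesis $I\cap\supp(\alpha)\subseteq\max_{\mathbf{P}}(I)$ puts $s\in\max_{\mathbf{P}}(I)$, and maximality of $s$ combined with $s\preccurlyeq_{\mathbf{P}}j\in I$ forces $s=j$, so $j\in\supp(\alpha)\cap\max_{\mathbf{P}}(I)$. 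Substituting these equalities into the displayed formula yields precisely the expression in the proposition.
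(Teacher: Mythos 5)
Your proof is correct, and it takes a genuinely more direct route than the paper's. The paper also starts from the observation that $\langle\supp(\beta)\rangle_{\mathbf{P}}=I$ is equivalent to $\max_{\mathbf{P}}(I)\subseteq\supp(\beta)\subseteq I$ (stated there, as in your write-up, without proof; the missing direction is the equally easy fact that every element of $\max_{\mathbf{P}}(I)$ must itself lie in $\supp(\beta)$), but then proceeds by inclusion--exclusion: it rewrites the sum as an alternating sum over sets $A$ with $I-\max_{\mathbf{P}}(I)\subseteq A\subseteq I$ of full character sums $\sum_{\supp(\beta)\subseteq A}\alpha(\beta)$, applies orthogonality to each such block, converts the condition $\supp(\alpha)\subseteq\Omega-A$ into $A\subseteq\Omega-D$ via the fact that $\Omega-A$ is an ideal of $\overline{\mathbf{P}}$, and finishes with an omitted alternating-sum evaluation. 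You avoid the inclusion--exclusion entirely by noting that the constraint decouples coordinate by coordinate, so the sum factors into per-coordinate character sums ($h_i$, $0$, $h_i-1$, or $-1$ according to whether $i\in\supp(\alpha)$), which makes the computation transparent. The price is that the translation from the $\supp(\alpha)$-indexed answer to the $D$-indexed statement must be done by hand; your poset argument for $I\cap\supp(\alpha)\subseteq\max_{\mathbf{P}}(I)\Longleftrightarrow I\cap D\subseteq\max_{\mathbf{P}}(I)$ and for $\max_{\mathbf{P}}(I)\cap\supp(\alpha)=\max_{\mathbf{P}}(I)\cap D=I\cap D$ (using that $I$ is an ideal, $D$ is the up-closure of $\supp(\alpha)$, and maximality forces $s=j$) is complete and correct, and it plays the role that the ideal-complement argument plus the omitted computation play in the paper. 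Net effect: your version is more elementary and makes the sign and the index sets of the two products visibly match the claimed formula, while the paper's version keeps the coordinatewise character theory packaged in a single orthogonality statement at the cost of an inclusion--exclusion step.
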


\begin{proof}
Throughout the proof, we let $J=\max_{\mathbf{P}}(I)$. Note that for any $\beta\in\mathbf{H}$, it holds that
$$\langle \supp(\beta)\rangle_{\mathbf{P}}=I\Longleftrightarrow J\subseteq \supp(\beta) \subseteq I,$$
which, together with the principle of inclusion-exclusion (see [15, Theorem 2.1.1]) and some straightforward computation, implies that
\begin{eqnarray*}
\begin{split}
\sum_{(\beta\in\mathbf{H},\langle \supp(\beta)\rangle_{\mathbf{P}}=I)}\alpha(\beta)&=\text{$\sum_{(U,J\subseteq U \subseteq I)}\sum_{(\beta\in\mathbf{H}, \supp(\beta)=U)}\alpha(\beta)$}\\
&=\text{$\sum_{(A,I-J\subseteq A\subseteq I)}(-1)^{|I|-|A|}\left(\sum_{(\beta\in \mathbf{H},\supp(\beta)\subseteq A)}\alpha(\beta)\right)$.}
\end{split}
\end{eqnarray*}
Now, consider an arbitrary $A\subseteq \Omega$ with $I-J\subseteq A\subseteq I$. By the orthogonality relation (see [8, Theorem 5.4]), we conclude that
$$\text{$\sum_{(\beta\in \mathbf{H},\supp(\beta)\subseteq A)}\alpha(\beta)=\begin{cases}
\,\prod_{i\in A}h_{i},& \supp(\alpha)\subseteq \Omega-A;\\
\,0,& \supp(\alpha)\not\subseteq \Omega-A.
\end{cases}$}
$$
Since $I\in\mathcal{I}(\mathbf{P})$, $J=\max_{\mathbf{P}}(I)$, we have $A\in\mathcal{I}(\mathbf{P})$, and hence $\Omega-A\in\mathcal{I}(\mathbf{\overline{P}})$ by (3) of Lemma 2.2. Noticing that $\langle \supp(\alpha)\rangle_{\mathbf{\overline{P}}}=D$, we deduce that
$$\supp(\alpha)\subseteq \Omega-A\Longleftrightarrow D\subseteq \Omega-A\Longleftrightarrow A\subseteq \Omega-D.$$
By the above discussion, we further deduce that
\begin{eqnarray*}
\begin{split}
\sum_{(\beta\in\mathbf{H},\langle \supp(\beta)\rangle_{\mathbf{P}}=I)}\alpha(\beta)&=\sum_{(A,I-J\subseteq A\subseteq I,A\subseteq\Omega-D)}(-1)^{|I|-|A|}\left(\prod_{i\in A}h_{i}\right)\\
&=\sum_{(A,I-J\subseteq A\subseteq I-D)}(-1)^{|I|-|A|}\left(\prod_{i\in A}h_{i}\right).
\end{split}
\end{eqnarray*}
Now treating $I\cap D\subseteq J$ and $I\cap D\not\subseteq J$ separately, the proposition follows from some straightforward computation which we omit.
\end{proof}

\setlength{\parindent}{2em}
The following lemma is straightforward to verify.

\begin{lemma}
{For $D\in\mathcal{I}(\mathbf{\overline{P}})$ and $I\in\mathcal{I}(\mathbf{P})$, we have
$$\mbox{$I\cap D\subseteq \max_{\mathbf{P}}(I)\Longleftrightarrow I\cap D\subseteq \min_{\mathbf{P}}(D)\Longleftrightarrow I\subseteq (\Omega-D)\cup\min_{\mathbf{P}}(D)$}.$$
}
\end{lemma}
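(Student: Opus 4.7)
The plan is to handle the two biconditionals separately. The equivalence $I\cap D\subseteq \min_{\mathbf{P}}(D)\Longleftrightarrow I\subseteq (\Omega-D)\cup\min_{\mathbf{P}}(D)$ is a pure set-theoretic reformulation: an element of $I$ lies outside $D$ or in $I\cap D$, so the right-hand condition is equivalent to demanding that every element of $I\cap D$ already belong to $\min_{\mathbf{P}}(D)$. I would dispatch this in a single line.

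The real content sits in $I\cap D\subseteq \max_{\mathbf{P}}(I)\Longleftrightarrow I\cap D\subseteq \min_{\mathbf{P}}(D)$, which I would prove by contradiction in both directions, exploiting the ideal hypotheses dually. For the forward direction, a violation would yield some $x\in I\cap D$ and some $z\in D$ strictly below $x$ in $\mathbf{P}$; since $I\in\mathcal{I}(\mathbf{P})$ is downward-closed and $x\in I$, we obtain $z\in I\cap D$, and then $z\preccurlyeq_{\mathbf{P}} x$ with $x\in I$, $z\neq x$ contradicts the hypothesis $z\in\max_{\mathbf{P}}(I)$. For the reverse direction, a violation would yield $x\in I\cap D$ and some $y\in I$ strictly above $x$ in $\mathbf{P}$; since $D\in\mathcal{I}(\mathbf{\overline{P}})$ translates to upward-closure of $D$ in $\mathbf{P}$ (which I would justify by unwinding the definition of $\preccurlyeq_{\mathbf{\overline{P}}}$, or by invoking part (3) of Lemma 2.2 on $\Omega-D$), the element $y$ lands in $I\cap D$, and now $x\preccurlyeq_{\mathbf{P}} y$ with $x\in D$, $x\neq y$ contradicts $y\in\min_{\mathbf{P}}(D)$.

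There is no real obstacle here: the symmetric pair of ideal hypotheses on $I$ and on $D$ is exactly what makes the two directions go through, and once one notices that a $\mathbf{\overline{P}}$-ideal is synonymous with an upward-closed set in $\mathbf{P}$, the remaining verifications are entirely mechanical. This matches the authors' own assessment that the lemma is straightforward to verify.
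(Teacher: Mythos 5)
Your argument is correct: the second equivalence is indeed a one-line set-theoretic rewriting, and the two contradiction arguments correctly use downward-closure of $I$ in $\mathbf{P}$ and upward-closure of $D$ in $\mathbf{P}$ (i.e., $D\in\mathcal{I}(\mathbf{\overline{P}})$), respectively. The paper omits the proof as ``straightforward to verify,'' and your verification is exactly the routine argument intended.
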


\setlength{\parindent}{2em}
A combination of Proposition 2.1 and Lemma 2.3 yields the following theorem.

\begin{theorem}
{Let $\alpha\in \hat{\mathbf{H}}$ with $\langle \supp(\alpha)\rangle_{\mathbf{\overline{P}}}=D$. Then, $F(\alpha)$ is equal to
$$\text{\hspace*{-12mm}$\sum_{(I\in\mathcal{I}(\mathbf{P}),I\subseteq (\Omega-D)\cup\min_{\mathbf{P}}(D))}(-1)^{|I\cap D|}\left(\prod_{i\in I-\max_{\mathbf{P}}(I)}h_{i}\right)\left(\prod_{i\in \max_{\mathbf{P}}(I)-\min_{\mathbf{P}}(D)}(h_{i}-1)\right) x^{|I|}$.}$$
}
\end{theorem}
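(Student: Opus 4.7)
The plan is to combine Notation 2.1, Proposition 2.1 and Lemma 2.3 in a fairly direct way, and the only subtlety is to justify replacing the set $\max_{\mathbf{P}}(I)-D$ appearing in Proposition 2.1 by $\max_{\mathbf{P}}(I)-\min_{\mathbf{P}}(D)$ as in the theorem statement.

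First I would rewrite $F(\alpha)$ by partitioning the index set $\mathbf{H}$ according to the ideal $\langle\supp(\beta)\rangle_{\mathbf{P}}$. Since $\wt_{\mathbf{P}}(\beta)=|\langle\supp(\beta)\rangle_{\mathbf{P}}|$, grouping the terms with a common value of $I=\langle\supp(\beta)\rangle_{\mathbf{P}}$ yields
\[
F(\alpha)=\sum_{I\in\mathcal{I}(\mathbf{P})}\Bigl(\sum_{\beta\in\mathbf{H},\,\langle\supp(\beta)\rangle_{\mathbf{P}}=I}\alpha(\beta)\Bigr)\,x^{|I|}.
\]
Applying Proposition 2.1 to the inner sum, only those $I\in\mathcal{I}(\mathbf{P})$ with $I\cap D\subseteq\max_{\mathbf{P}}(I)$ contribute, and each contributes $(-1)^{|I\cap D|}\bigl(\prod_{i\in I-\max_{\mathbf{P}}(I)}h_i\bigr)\bigl(\prod_{i\in\max_{\mathbf{P}}(I)-D}(h_i-1)\bigr)$.

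Next I would invoke Lemma 2.3, which translates the condition $I\cap D\subseteq\max_{\mathbf{P}}(I)$ into the equivalent conditions $I\cap D\subseteq\min_{\mathbf{P}}(D)$ and $I\subseteq(\Omega-D)\cup\min_{\mathbf{P}}(D)$. The former statement is the one I would really use: it says that every element of $D$ that lies in $I$ (in particular, every element of $\max_{\mathbf{P}}(I)\cap D$) already lies in $\min_{\mathbf{P}}(D)$. Hence
\[
\max_{\mathbf{P}}(I)\cap D=\max_{\mathbf{P}}(I)\cap\min_{\mathbf{P}}(D),
\]
since the reverse containment is trivial from $\min_{\mathbf{P}}(D)\subseteq D$. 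Taking complements in $\max_{\mathbf{P}}(I)$ gives $\max_{\mathbf{P}}(I)-D=\max_{\mathbf{P}}(I)-\min_{\mathbf{P}}(D)$, so the two products over $(h_i-1)$ agree.

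Finally, I would use the third equivalent form of Lemma 2.3 to replace the condition $I\cap D\subseteq\max_{\mathbf{P}}(I)$ on the indexing ideal by $I\subseteq(\Omega-D)\cup\min_{\mathbf{P}}(D)$, which matches the range of summation in the theorem. Putting all these substitutions together yields exactly the stated expression for $F(\alpha)$. The main potential obstacle is the equality $\max_{\mathbf{P}}(I)-D=\max_{\mathbf{P}}(I)-\min_{\mathbf{P}}(D)$, but as noted it is an immediate consequence of the second equivalent condition in Lemma 2.3; everything else is rearrangement and bookkeeping.
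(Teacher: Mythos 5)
Your proposal is correct and follows exactly the route the paper intends: the paper states Theorem 2.1 as an immediate combination of Proposition 2.1 and Lemma 2.3, and your argument simply spells out the details — grouping $F(\alpha)$ by the ideal $I=\langle\supp(\beta)\rangle_{\mathbf{P}}$, applying Proposition 2.1, and using the equivalences of Lemma 2.3 both to rewrite the summation condition and to justify $\max_{\mathbf{P}}(I)-D=\max_{\mathbf{P}}(I)-\min_{\mathbf{P}}(D)$. No gaps.
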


\setlength{\parindent}{2em}
As a corollary of Theorem 2.1, we have the following proposition.

\setlength{\parindent}{0em}
\begin{proposition}
{Assume that $h_{i}\geqslant2$ for all $i\in\Omega$. Let $\alpha\in \hat{\mathbf{H}}$ with $\langle \supp(\alpha)\rangle_{\mathbf{\overline{P}}}=D$, and let $X=(\Omega-D)\cup\min_{\mathbf{P}}(D)$. Then, we have
$$\mbox{$\deg(F(\alpha))=|X|=|\Omega|-|D|+|\min_{\mathbf{P}}(D)|$},$$
and moreover, the leading coefficient of $F(\alpha)$ is equal to
$$(-1)^{|\min_{\mathbf{P}}(D)|}\left(\prod_{i\in X-\max_{\mathbf{P}}(X)}h_{i}\right)\left(\prod_{i\in \max_{\mathbf{P}}(X)-\min_{\mathbf{P}}(D)}(h_{i}-1)\right).$$
}
\end{proposition}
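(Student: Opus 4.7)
The plan is to read off both the degree and the leading coefficient directly from the closed form given in Theorem 2.1, where $F(\alpha)$ is expressed as a sum over ideals $I \in \mathcal{I}(\mathbf{P})$ with $I \subseteq X$. Since $|I|$ is the exponent of $x$ in each summand, the degree is bounded above by the maximum of $|I|$ over such $I$, which is clearly $|X|$ provided $X$ itself belongs to the summation set.

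The first step is therefore to check $X \in \mathcal{I}(\mathbf{P})$. By part (3) of Lemma 2.2, $\Omega - D \in \mathcal{I}(\mathbf{P})$. To extend this to $X$, I would take any $y \in \min_{\mathbf{P}}(D)$ and any $z \preccurlyeq_{\mathbf{P}} y$ and split on whether $z \in D$: if $z \notin D$, then $z \in \Omega - D \subseteq X$; if $z \in D$, the minimality of $y$ in $D$ forces $z = y \in X$. Combining with the fact that $\Omega - D$ is downward closed, $X$ is an ideal of $\mathbf{P}$. Hence $X$ is the unique element of largest cardinality in the summation set of Theorem 2.1, and all other $I$ appearing there satisfy $|I| < |X|$, so $\deg(F(\alpha)) \leqslant |X|$. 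The equality $|X| = |\Omega| - |D| + |\min_{\mathbf{P}}(D)|$ is then immediate from the disjointness of $\Omega - D$ and $\min_{\mathbf{P}}(D) \subseteq D$.

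Next, I would extract the coefficient of $x^{|X|}$ by substituting $I = X$ into the Theorem 2.1 formula. The key observation is that $X \cap D = ((\Omega - D) \cup \min_{\mathbf{P}}(D)) \cap D = \min_{\mathbf{P}}(D)$, which gives the sign $(-1)^{|\min_{\mathbf{P}}(D)|}$ and also identifies $\min_{\mathbf{P}}(D)$ as $X \cap D$ in the second product. The remaining factors match the asserted expression verbatim. Finally, the hypothesis $h_i \geqslant 2$ guarantees $h_i > 0$ and $h_i - 1 \geqslant 1 > 0$, so this coefficient is nonzero, upgrading $\deg(F(\alpha)) \leqslant |X|$ to equality.

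The only delicate point is verifying that $X$ is an ideal of $\mathbf{P}$; everything else is a direct reading of Theorem 2.1. I do not anticipate any serious obstacle, but it is worth double-checking that in the formula the second product for $I = X$ really reduces to $\prod_{i \in \max_{\mathbf{P}}(X) - \min_{\mathbf{P}}(D)}(h_i - 1)$, which follows from $X \cap D = \min_{\mathbf{P}}(D)$ combined with Lemma 2.3 (which ensures $\min_{\mathbf{P}}(D) \subseteq \max_{\mathbf{P}}(X)$, so the set difference is well-defined in the obvious sense).
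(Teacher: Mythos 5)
Your proposal is correct and follows exactly the route the paper intends: Proposition 2.2 is stated as a direct corollary of Theorem 2.1, and your verification that $X=(\Omega-D)\cup\min_{\mathbf{P}}(D)$ is an ideal of $\mathbf{P}$, hence the unique maximal set in the summation range, together with $X\cap D=\min_{\mathbf{P}}(D)$ and the nonvanishing of the coefficient from $h_i\geqslant 2$, is precisely the reading-off the paper leaves implicit.
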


\setlength{\parindent}{2em}
Proposition 2.2 provides a necessary condition for two codewords of $\hat{\mathbf{H}}$ belong to the same block of $\Lambda$, as detailed in the following proposition.

\setlength{\parindent}{0em}
\begin{proposition}
{Assume that $h_{i}\geqslant2$ for all $i\in\Omega$, and let $\alpha,\gamma\in \hat{\mathbf{H}}$ such that $\alpha\sim_{\Lambda}\gamma$. Then, it holds that
$$\mbox{$|D|-|\min_{\mathbf{P}}(D)|=|B|-|\min_{\mathbf{P}}(B)|$},$$
where $D=\langle \supp(\alpha)\rangle_{\mathbf{\overline{P}}}$, $B=\langle \supp(\gamma)\rangle_{\mathbf{\overline{P}}}$.
}
\end{proposition}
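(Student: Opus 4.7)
The plan is to apply Lemma 2.1 together with Proposition 2.2, which essentially reduces the statement to equating degrees of the two polynomials $F(\alpha)$ and $F(\gamma)$.

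First, I would invoke Lemma 2.1 to translate the hypothesis $\alpha \sim_{\Lambda} \gamma$ into the polynomial identity $F(\alpha) = F(\gamma)$. In particular, this forces $\deg(F(\alpha)) = \deg(F(\gamma))$.

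Next, since $h_i \geqslant 2$ for all $i \in \Omega$, Proposition 2.2 applies to both $\alpha$ and $\gamma$, yielding
\begin{equation*}
\deg(F(\alpha)) = |\Omega| - |D| + |\min_{\mathbf{P}}(D)|, \qquad \deg(F(\gamma)) = |\Omega| - |B| + |\min_{\mathbf{P}}(B)|.
\end{equation*}
Equating the two expressions and cancelling $|\Omega|$ from both sides immediately gives $|D| - |\min_{\mathbf{P}}(D)| = |B| - |\min_{\mathbf{P}}(B)|$, which is the desired identity.

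There is no real obstacle here: the entire argument is a direct concatenation of Lemma 2.1 (which recasts $\Lambda$-equivalence as equality of polynomials) and Proposition 2.2 (which computes the degree of each such polynomial in terms of the data $D$, $\min_{\mathbf{P}}(D)$). The proof is essentially two lines, and no further case analysis or additional machinery is needed.
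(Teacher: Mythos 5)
Your proposal is correct and coincides with the paper's own argument: Lemma 2.1 gives $F(\alpha)=F(\gamma)$, hence equal degrees, and Proposition 2.2's degree formula $\deg(F(\alpha))=|\Omega|-|D|+|\min_{\mathbf{P}}(D)|$ yields the identity after cancelling $|\Omega|$. No gaps.
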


\begin{proof}
By Lemma 2.1, we have $F(\alpha)=F(\gamma)$, and hence $\deg(F(\alpha))=\deg(F(\gamma))$. Now the result immediately follows from Proposition 2.2.
\end{proof}

\subsection{Establishing Conjecture 1.2 with an additional assumption}
\setlength{\parindent}{2em}
In this subsection, we prove Conjecture 1.2 with the additional assumption that $h_{i}=h_{l}\geqslant3$ for all $i,l\in \Omega$, and we begin with the following lemma.

\setlength{\parindent}{0em}
\begin{lemma}
{Assume $h_{i}=h_{l}\geqslant3$ for all $i,l\in \Omega$. Let $\alpha,\gamma\in \hat{\mathbf{H}}$ and let $X=(\Omega-D)\cup\min_{\mathbf{P}}(D)$, $V=(\Omega-B)\cup\min_{\mathbf{P}}(B)$, where $D=\langle \supp(\alpha)\rangle_{\mathbf{\overline{P}}}$, $B=\langle \supp(\gamma)\rangle_{\mathbf{\overline{P}}}$. Then, the following two conditions are equivalent to each other:

{\bf{(1)}}\,\,The leading coefficients of $F(\alpha)$ and $F(\gamma)$ have the same absolute value;

{\bf{(2)}}\,\,$|X|-|\max_{\mathbf{P}}(X)|=|V|-|\max_{\mathbf{P}}(V)|$, $|D|=|B|$.
}
\end{lemma}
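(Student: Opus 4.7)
The plan is to use Proposition 2.2 to express both leading coefficients as simple monomials in $h$ and $h-1$ (writing $h$ for the common value of the $h_i$), and then exploit the arithmetic of these two integers. A small preliminary observation needed to make the formulas clean is that $\min_{\mathbf{P}}(D) \subseteq \max_{\mathbf{P}}(X)$: for $d \in \min_{\mathbf{P}}(D)$ and any $x \in X$ with $d \preccurlyeq_{\mathbf{P}} x$, either $x \in \Omega - D$, which is impossible since $D \in \mathcal{I}(\overline{\mathbf{P}})$ and $d \in D$ would then force $x \in D$, or $x \in \min_{\mathbf{P}}(D)$, in which case the minimality of $x$ in $D$ together with $d \in D$ forces $d = x$. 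Combined with the obvious $\max_{\mathbf{P}}(X) \subseteq X$, this lets me rewrite Proposition 2.2 as: the leading coefficient of $F(\alpha)$ equals $(-1)^{|\min_{\mathbf{P}}(D)|} h^{a}(h-1)^{b}$ with $a := |X| - |\max_{\mathbf{P}}(X)|$ and $b := |\max_{\mathbf{P}}(X)| - |\min_{\mathbf{P}}(D)|$; analogously, the leading coefficient of $F(\gamma)$ is $\pm h^{a'}(h-1)^{b'}$ with $a', b'$ defined from $V$ and $B$.

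Condition (1) then becomes the numerical identity $h^{a}(h-1)^{b} = h^{a'}(h-1)^{b'}$ of positive integers. Here the hypothesis $h \geq 3$ enters essentially: it guarantees that $h$ and $h-1$ are coprime integers, each at least $2$, so unique factorization in $\mathbb{Z}$ forces $a = a'$ and $b = b'$. Conversely, $a = a'$ and $b = b'$ clearly imply (1). Hence (1) is equivalent to the conjunction
\begin{equation*}
|X| - |\max_{\mathbf{P}}(X)| = |V| - |\max_{\mathbf{P}}(V)| \quad \text{and} \quad |\max_{\mathbf{P}}(X)| - |\min_{\mathbf{P}}(D)| = |\max_{\mathbf{P}}(V)| - |\min_{\mathbf{P}}(B)|.
\end{equation*}

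The final step is to rephrase this conjunction as the pair of equalities in (2). Summing the two displayed equations gives $|X| - |\min_{\mathbf{P}}(D)| = |V| - |\min_{\mathbf{P}}(B)|$, and the identity $|X| = |\Omega| - |D| + |\min_{\mathbf{P}}(D)|$ from Proposition 2.2 (and its analog for $V, B$) collapses this to $|D| = |B|$. Conversely, given $a = a'$ together with $|D| = |B|$, the same identity yields $a + b = a' + b'$, and hence $b = b'$. This completes the chain of equivalences (1)$\Leftrightarrow$(2). The main technical hurdle is really the coprimality/unique-factorization argument, which pinpoints exactly why the assumption $h \geq 3$ is indispensable: when $h = 2$ the factor $h-1 = 1$ collapses and distinct exponent pairs can produce the same absolute value. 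Everything else amounts to bookkeeping with cardinalities via the identities supplied by Proposition 2.2.
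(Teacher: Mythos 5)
Your proof is correct and follows essentially the same route as the paper's: apply Proposition 2.2, write the absolute value of the leading coefficient as $h^{|X|-|\max_{\mathbf{P}}(X)|}(h-1)^{|\max_{\mathbf{P}}(X)|-|\min_{\mathbf{P}}(D)|}$ (using $\min_{\mathbf{P}}(D)\subseteq\max_{\mathbf{P}}(X)$, which you justify in more detail than the paper), invoke $\gcd(h,h-1)=1$ with $h\geqslant 3$ to equate exponents, and then convert the exponent identities into condition (2) via $|X|-|\min_{\mathbf{P}}(D)|=|\Omega|-|D|$.
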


\begin{proof}
Let $a:=h_{i}$ for all $i\in \Omega$. Noticing that $D,B\in\mathcal{I}(\mathbf{\overline{P}})$, we deduce that $\min_{\mathbf{P}}(D)\subseteq\max_{\mathbf{P}}(X)$, $\min_{\mathbf{P}}(B)\subseteq\max_{\mathbf{P}}(V)$. By Proposition 2.2, the absolute value of the leading coefficient of $F(\alpha)$ is equal to
$$a^{|X|-|\max_{\mathbf{P}}(X)|}(a-1)^{|\max_{\mathbf{P}}(X)|-|\min_{\mathbf{P}}(D)|},$$
and the absolute value of the leading coefficient of $F(\gamma)$ is equal to
$$a^{|V|-|\max_{\mathbf{P}}(V)|}(a-1)^{|\max_{\mathbf{P}}(V)|-|\min_{\mathbf{P}}(B)|}.$$
By $a\geqslant3$ and $\gcd(a,a-1)=1$, we deduce that (1) holds true if and only if the following two conditions hold:

$(i)$\,\,$|X|-|\max_{\mathbf{P}}(X)|=|V|-|\max_{\mathbf{P}}(V)|$.

$(ii)$\,\,$|\max_{\mathbf{P}}(X)|-|\min_{\mathbf{P}}(D)|=|\max_{\mathbf{P}}(V)|-|\min_{\mathbf{P}}(B)|$.

Moreover, it can be readily verified that $(i)$ and $(ii)$ hold true if and only if both $(i)$ and the following condition holds:
$$\mbox{$|X|-|\min_{\mathbf{P}}(D)|=|V|-|\min_{\mathbf{P}}(B)|$}.$$

Noticing that $|X|-|\min_{\mathbf{P}}(D)|=|\Omega|-|D|$, $|V|-|\min_{\mathbf{P}}(B)|=|\Omega|-|B|$, we deduce that $|X|-|\min_{\mathbf{P}}(D)|=|V|-|\min_{\mathbf{P}}(B)|\Longleftrightarrow|D|=|B|$, which immediately implies the lemma.
\end{proof}

\setlength{\parindent}{2em}
Now we prove Conjecture 1.2 with the aforementioned additional assumption.

\setlength{\parindent}{0em}
\begin{theorem}
{If $h_{i}=h_{l}\geqslant3$ for all $i,l\in \Omega$, then $\Lambda$ is finer than $\mathcal{Q}(\hat{\mathbf{H}},\mathbf{\overline{P}})$.
}
\end{theorem}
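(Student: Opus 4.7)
The plan is to use Lemma 2.1 to translate $\alpha\sim_{\Lambda}\gamma$ into the polynomial equality $F(\alpha)=F(\gamma)$, and then read off the required invariant from Lemma 2.4. Unwinding the definition of ``finer'' from Section 1, what must be shown is that every pair $\alpha,\gamma\in\hat{\mathbf{H}}$ with $\alpha\sim_{\Lambda}\gamma$ also satisfies $\alpha\sim_{\mathcal{Q}(\hat{\mathbf{H}},\mathbf{\overline{P}})}\gamma$, which by Notation 1.1 (applied to $\mathbf{\overline{P}}$ acting on $\hat{\mathbf{H}}$) amounts to $\wt_{\mathbf{\overline{P}}}(\alpha)=\wt_{\mathbf{\overline{P}}}(\gamma)$. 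By the definition of poset weight in (1.5), setting $D:=\langle\supp(\alpha)\rangle_{\mathbf{\overline{P}}}$ and $B:=\langle\supp(\gamma)\rangle_{\mathbf{\overline{P}}}$ reduces the goal to establishing $|D|=|B|$.

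The execution is then immediate in three strokes. First, fix arbitrary $\alpha,\gamma\in\hat{\mathbf{H}}$ with $\alpha\sim_{\Lambda}\gamma$, and apply Lemma 2.1 to obtain $F(\alpha)=F(\gamma)$. Second, polynomial equality forces equality of the leading coefficients, hence equality of their absolute values, which is precisely condition (1) of Lemma 2.4. Third, invoking the equivalence stated in Lemma 2.4 under the hypothesis $h_{i}=h_{l}\geqslant 3$, condition (2) of that lemma holds as well, whose second clause reads exactly $|D|=|B|$. Combined with the reduction in the previous paragraph, this gives $\wt_{\mathbf{\overline{P}}}(\alpha)=|D|=|B|=\wt_{\mathbf{\overline{P}}}(\gamma)$, which closes the argument.

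The genuinely delicate step has already been absorbed into Lemma 2.4, where the assumption $h_{i}=h_{l}\geqslant 3$ is used in an essential way: writing $a:=h_{i}$, the absolute value of the leading coefficient of $F(\alpha)$ takes the form $a^{p}(a-1)^{q}$ for nonnegative integers $p,q$, and the coprimality $\gcd(a,a-1)=1$ (which fails precisely for $a=2$) is what allows one to unambiguously recover both exponents individually from this product. Without this coprimality, one could only recover $p+q$, which is exactly $\deg F(\alpha)$, and this is insufficient to conclude $|D|=|B|$; this is why the present short argument does not settle Conjecture 1.2 for binary alphabets and why the hierarchical case must be handled separately later. Beyond the invocation of Lemmas 2.1 and 2.4, the proof requires no further calculation.
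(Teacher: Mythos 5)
Your proof is correct and takes essentially the same route as the paper: Lemma 2.1 converts $\alpha\sim_{\Lambda}\gamma$ into $F(\alpha)=F(\gamma)$, and comparing the (absolute values of the) leading coefficients via Lemma 2.4 yields $|D|=|B|$, hence $\wt_{\mathbf{\overline{P}}}(\alpha)=\wt_{\mathbf{\overline{P}}}(\gamma)$. One inessential slip in your closing commentary: with $p=|X|-|\max_{\mathbf{P}}(X)|$ and $q=|\max_{\mathbf{P}}(X)|-|\min_{\mathbf{P}}(D)|$ one has $p+q=|\Omega|-|D|$ (not $\deg F(\alpha)=|X|$), so recovering $p+q$ would already suffice; the real obstruction at $a=2$ is that $a-1=1$, so the absolute value of the leading coefficient reveals only $p$.
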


\begin{proof}
Let $\alpha,\gamma\in \hat{\mathbf{H}}$ such that $\alpha\sim_{\Lambda}\gamma$. By Lemma 2.1, we have $F(\alpha)=F(\gamma)$, which, together with Lemma 2.4, further implies that $|\langle \supp(\alpha)\rangle_{\mathbf{\overline{P}}}|=|\langle \supp(\gamma)\rangle_{\mathbf{\overline{P}}}|$, and hence $\wt_{\mathbf{\overline{P}}}(\alpha)=\wt_{\mathbf{\overline{P}}}(\gamma)$. The above discussion establishes the fact that $\Lambda$ is finer than $\mathcal{Q}(\hat{\mathbf{H}},\mathbf{\overline{P}})$, completing the proof.
\end{proof}

\subsection{A necessary and sufficient condition for $\mathbf{P}$ to be hierarchical}
\setlength{\parindent}{2em}
The main result of this subsection is the following theorem, in which we use the polynomial $F(\alpha)$ to give a necessary and sufficient condition for $\mathbf{P}$ to be hierarchical.

\setlength{\parindent}{0em}
\begin{theorem}
{Assume that $h_{i}\geqslant2$ for all $i\in\Omega$. Then, the following two conditions are equivalent to each other:

{\bf{(1)}}\,\,$\mathbf{P}$ is hierarchical;

{\bf{(2)}}\,\,For any $\alpha,\gamma\in\hat{\mathbf{H}}$, $\wt_{\mathbf{\overline{P}}}(\alpha)=\wt_{\mathbf{\overline{P}}}(\gamma)\Longrightarrow\deg(F(\alpha))=\deg(F(\gamma))$.
}
\end{theorem}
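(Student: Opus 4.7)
The plan is to rewrite condition (2) in purely combinatorial terms via Proposition 2.2 and then verify the equivalence with hierarchy. For $\alpha\in\hat{\mathbf{H}}$, setting $D:=\langle\supp(\alpha)\rangle_{\mathbf{\overline{P}}}$, Proposition 2.2 gives $\deg(F(\alpha))=|\Omega|-|D|+|\min_{\mathbf{P}}(D)|$, while (1.5) gives $\wt_{\mathbf{\overline{P}}}(\alpha)=|D|$. Since $h_i\geqslant 2$, every $D\in\mathcal{I}(\mathbf{\overline{P}})$ arises this way: take $\alpha$ with $\supp(\alpha)=D$ (nontrivial components are available because $h_i\geqslant 2$), and note that $D$, being an ideal of $\mathbf{\overline{P}}$, equals $\langle D\rangle_{\mathbf{\overline{P}}}$. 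Hence (2) is equivalent to the combinatorial condition $(2')$: for all $D,B\in\mathcal{I}(\mathbf{\overline{P}})$ with $|D|=|B|$, $|\min_{\mathbf{P}}(D)|=|\min_{\mathbf{P}}(B)|$.

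For $(1)\Rightarrow(2')$, I would stratify $\Omega$ into levels $L_j:=\{x:\len(x)=j\}$ for $1\leqslant j\leqslant r$, where $r:=\max_{x\in\Omega}\len(x)$, and set $T_j:=|L_{j+1}|+\cdots+|L_r|$. The hierarchy assumption makes each $L_j$ an antichain and places every element of $L_j$ strictly below every element of $L_{j'}$ for $j'>j$. Consequently every nonempty $\mathbf{P}$-filter $D$ equals $S\cup L_{k_0+1}\cup\cdots\cup L_r$, where $k_0$ is the lowest level meeting $D$ and $S:=D\cap L_{k_0}$; one reads off $|\min_{\mathbf{P}}(D)|=|S|$ and $|D|=|S|+T_{k_0}$. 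Since the intervals $[T_{k_0}+1,T_{k_0-1}]$ for $k_0=r,r-1,\ldots,1$ partition $[1,n]$, $|D|$ uniquely determines $k_0$, so $|\min_{\mathbf{P}}(D)|=|D|-T_{k_0}$ is a function of $|D|$.

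For $(2')\Rightarrow(1)$, I argue by contrapositive. Assume $\mathbf{P}$ is not hierarchical and pick $v\in\Omega$ with $\len(v)=k$ minimal subject to the existence of $u\in\Omega$ with $\len(u)+1\leqslant\len(v)$ and $u\not\preccurlyeq_{\mathbf{P}}v$. Extracting a chain of cardinality $k$ ending at $v$ whose $i$-th term has $\len$ exactly $i$, the minimality of $k$ forces $\len(u)=k-1$ (otherwise the level-$(k-1)$ term would yield a smaller counterexample), and it further shows that every element of level $\leqslant k-2$ lies $\preccurlyeq_{\mathbf{P}}$ every element of level $\geqslant k-1$. It remains to exhibit two $\mathbf{P}$-filters of the same cardinality but with distinct $|\min_{\mathbf{P}}|$.

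The production of this pair is the main obstacle. Using the parametrisation $\mathrm{Fil}(S,T):=S\cup T\cup F_{k+1}$, where $S\subseteq L_{k-1}$, $T\subseteq L_k$, $T\supseteq N_k(S):=\bigcup_{w\in S}(L_k\cap\langle w\rangle_{\mathbf{\overline{P}}})$, and $F_{k+1}:=L_{k+1}\cup\cdots\cup L_r$, each $\mathrm{Fil}(S,T)$ is a $\mathbf{P}$-filter of size $|S|+|T|+T_k$ with $|\min_{\mathbf{P}}(\mathrm{Fil}(S,T))|=|S|+|T|-|N_k(S)|$. Taking $(S_1,T_1)=(\{u\},L_k)$ yields a filter of size $1+T_{k-1}$ and min size $1+m_k-\tau(u)\geqslant 2$, where $\tau(u):=|N_k(\{u\})|<m_k$ because $v\notin N_k(\{u\})$. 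If some other $u'\in L_{k-1}$ has $\tau(u')\neq\tau(u)$, the pair $(\{u'\},L_k)$ gives a second filter of the same size with a different min size. Otherwise $\tau$ is constant on $L_{k-1}$, and one must increase $|S_2|$ while compensatingly shrinking $T_2$, using the non-completeness of the bipartite relation between $L_{k-1}$ and $L_k$ (witnessed by $u\not\preccurlyeq v$) to make $|N_k(S_2)|$ deviate from the linear pattern $s\mapsto s\tau$ and produce the required discrepancy in min sizes, completing the contrapositive.
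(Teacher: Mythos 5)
Your reduction of condition (2) to the purely combinatorial statement $(2')$ (via Proposition 2.2 and the availability, for $h_i\geqslant 2$, of $\alpha$ with $\langle\supp(\alpha)\rangle_{\mathbf{\overline{P}}}=D$ for every $D\in\mathcal{I}(\mathbf{\overline{P}})$) is exactly the paper's route, and your proof of $(1)\Rightarrow(2')$ by level stratification is a correct rendering of what the paper gets from Lemma 2.5 and Lemma 2.6. The gap is in $(2')\Rightarrow(1)$, which is the crux (in the paper it is the implication $(2)\Rightarrow(3)\Rightarrow(1)$ of Lemma 2.6). Your final branch --- ``otherwise $\tau$ is constant on $L_{k-1}$, and one must increase $|S_2|$ while compensatingly shrinking $T_2$ \dots to produce the required discrepancy'' --- is a statement of intent, not a construction: no pair of filters is exhibited and no verification is given, and you yourself flag this as the main obstacle. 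Moreover, the formula you rely on, $|\min_{\mathbf{P}}(\mathrm{Fil}(S,T))|=|S|+|T|-|N_k(S)|$, is only valid when every element of $F_{k+1}=L_{k+1}\cup\cdots\cup L_r$ lies above some element of $S\cup T$ (automatic when $T=L_k$, since each element of level $\geqslant k+1$ sits above a level-$k$ element by the exact-level chain argument). Once you shrink $T\subsetneqq L_k$ --- which is precisely what the constant-$\tau$ branch requires --- an element of $L_{k+1}$ may lie above only elements of $L_k-T$ and above nothing in $S$ (the minimality of $k$ only excludes violations whose upper element has level $<k$, so elements of $L_{k-1}$ need not lie below elements of level $\geqslant k+1$), and such an element becomes minimal in $\mathrm{Fil}(S,T)$, breaking the count. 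The need for partial $T$ is real: if, say, any two distinct elements of $L_{k-1}$ jointly cover $L_k$, then every filter in your family of size $1+|L_k|+T_k$ has $S$ a singleton and $T=L_k$, so all have the same min count under constant $\tau$, and the comparison must move to other sizes and partial $T$, where the bookkeeping you omit is exactly what is missing.

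For comparison, the paper avoids this level-by-level analysis entirely. In Lemma 2.6 it takes $A\subseteq\Omega$ with $a\preccurlyeq_{\mathbf{P}}b$ for all $a\in A$, $b\in\Omega-A$, assumes some $u\in\max_{\mathbf{P}}(A-\max_{\mathbf{P}}(A))$ and $y\in\max_{\mathbf{P}}(A)$ satisfy $u\not\preccurlyeq_{\mathbf{P}}y$, and compares the two ideals of $\mathbf{\overline{P}}$ given by $B=(\Omega-A)\cup\max_{\mathbf{P}}(A)$ and $D=(\Omega-A)\cup(\max_{\mathbf{P}}(A)-\{y\})\cup\{u\}$: they have equal cardinality, $\min_{\mathbf{P}}(B)=\max_{\mathbf{P}}(A)$, while $\min_{\mathbf{P}}(D)\subseteq(\max_{\mathbf{P}}(A)-\{y,v\})\cup\{u\}$ for any $v\in\max_{\mathbf{P}}(A)$ with $u\preccurlyeq_{\mathbf{P}}v$, so the min counts differ; the choice of $u$ maximal in $A-\max_{\mathbf{P}}(A)$ is what makes $D$ upward closed. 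A separate structural step then upgrades this ``top-slice'' condition (3) to hierarchy. If you want to salvage your contrapositive, the cleanest fix is to replace your parametrized family by this explicit two-ideal construction (or prove condition (3) of Lemma 2.6 and then its implication to (1)); as written, the existence of the required equal-size pair with distinct $|\min_{\mathbf{P}}|$ has not been established.
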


\setlength{\parindent}{2em}
We begin with the following lemma, which is straightforward to verify by using Definition 1.2.
\setlength{\parindent}{0em}
\begin{lemma}
{Let $m$ denote the largest cardinality of a chain in $\mathbf{P}$, and for any $j\in[1,m]$, let $W_{j}=\{u\mid u\in\Omega,~\len(u)=j\}$. Fix $D\subseteq\Omega$. Then, there uniquely exists $r\in[1,m]$ such that $D\subseteq \bigcup_{j=r}^{m}W_{j}$ and $D\not\subseteq \bigcup_{j=p}^{m}W_{j}$ for all $p\in[r+1,m]$. Moreover, if $\mathbf{P}$ is hierarchical and $D\in\mathcal{I}(\mathbf{\overline{P}})$, then we have $D=(D\cap W_{r})\cup(\bigcup_{j=r+1}^{m}W_{j})$ and $D\cap W_{r}=\min_{\mathbf{P}}(D)$.
}
\end{lemma}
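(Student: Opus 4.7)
The plan is to treat the two assertions separately: part~(1) is a purely order-theoretic claim about the grading $\Omega=\bigsqcup_{j=1}^{m}W_{j}$ induced by $\len$, while part~(2) is where the hierarchical hypothesis and the assumption $D\in\mathcal{I}(\mathbf{\overline{P}})$ enter. I would use two simple observations throughout: (a) the $W_{j}$ partition $\Omega$, since every $u\in\Omega$ has a unique $\len(u)\in[1,m]$; and (b) for part~(2), $D\in\mathcal{I}(\mathbf{\overline{P}})$ is equivalent to saying $D$ is an up-set in $\mathbf{P}$, i.e., $b\in D$ and $b\preccurlyeq_{\mathbf{P}}a$ imply $a\in D$.

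For part~(1), if $D=\emptyset$ then $r=m$ is the unique value satisfying the two conditions (any smaller $r$ fails the non-containment clause at $p=r+1$ because $\emptyset$ is contained in everything). If $D\neq\emptyset$, I set $r:=\min\{\len(u)\mid u\in D\}$. Then $D\subseteq\bigcup_{j=r}^{m}W_{j}$ by the choice of $r$, and for any $p\in[r+1,m]$ the element of $D$ attaining the minimum lies in $W_{r}$ and hence outside $\bigcup_{j=p}^{m}W_{j}$, so $D\not\subseteq\bigcup_{j=p}^{m}W_{j}$. Uniqueness is immediate from the two conditions themselves.

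For part~(2), I first establish $\bigcup_{j=r+1}^{m}W_{j}\subseteq D$ (vacuous when $D=\emptyset$ and $r=m$): fix any $u\in D$ with $\len(u)=r$ and any $v\in W_{j}$ with $j\geqslant r+1$; then $\len(u)+1\leqslant\len(v)$, so Definition~1.2 gives $u\preccurlyeq_{\mathbf{P}}v$, and the up-set property of $D$ yields $v\in D$. Combined with the containment $D\subseteq\bigcup_{j=r}^{m}W_{j}$ from part~(1), this gives the decomposition $D=(D\cap W_{r})\cup\bigl(\bigcup_{j=r+1}^{m}W_{j}\bigr)$. For the identity $D\cap W_{r}=\min_{\mathbf{P}}(D)$ I argue by double inclusion: for $u\in D\cap W_{r}$, any $v\in D$ with $v\preccurlyeq_{\mathbf{P}}u$ and $v\neq u$ would force $\len(v)<\len(u)=r$ by extending a chain witnessing $\len(v)$ by $u$, contradicting $v\in D\subseteq\bigcup_{j\geqslant r}W_{j}$; conversely, for $u\in\min_{\mathbf{P}}(D)$ with $\len(u)>r$, any $u'\in D$ with $\len(u')=r$ satisfies $\len(u')+1\leqslant\len(u)$, so $u'\preccurlyeq_{\mathbf{P}}u$ and $u'\neq u$, contradicting minimality of $u$.

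The only substantive point is the recognition that hierarchicity places the length-$r$ stratum of $D$ strictly below everything of greater length, which is precisely what bridges the combinatorial claim of part~(1) and the sharp decomposition in part~(2); the remainder is bookkeeping with strict versus non-strict comparisons and with the empty-$D$ edge case.
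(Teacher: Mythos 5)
Your proof is correct, and it is precisely the direct verification the paper has in mind: the paper offers no written proof for this lemma, declaring it ``straightforward to verify by using Definition 1.2,'' and your argument (stratifying $\Omega$ by $\len$, taking $r=\min\{\len(u)\mid u\in D\}$, and using the up-set property of $D\in\mathcal{I}(\mathbf{\overline{P}})$ together with the hierarchical condition for the two identities) is exactly that verification, with the empty-$D$ edge case handled properly.
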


\setlength{\parindent}{2em}
We also need the following lemma. Recall that for any $Y\subseteq \Omega$, $Y$ is said to be an \emph{anti-chain in $\mathbf{P}$} if for any $u,v\in Y$, $u\preccurlyeq_{\mathbf{P}}v$ implies $u=v$.

\setlength{\parindent}{0em}
\begin{lemma}
{The following five statements are equivalent to each other:

{\bf{(1)}}\,\,$\mathbf{P}$ is hierarchical;

{\bf{(2)}}\,\,For any $D,B\in\mathcal{I}(\mathbf{\overline{P}})$, $|D|=|B|\Longrightarrow|\min_{\mathbf{P}}(D)|=|\min_{\mathbf{P}}(B)|$;

{\bf{(3)}}\,\,For any $A\subseteq\Omega$ such that $a\preccurlyeq_{\mathbf{P}}b$ for all $a\in A$, $b\in \Omega-A$, it holds that $c\preccurlyeq_{\mathbf{P}}d$ for all $c\in A-\max_{\mathbf{P}}(A)$, $d\in \max_{\mathbf{P}}(A)$;

{\bf{(4)}}\,\,For any $I,J\in\mathcal{I}(\mathbf{P})$, $|I|=|J|\Longrightarrow|\max_{\mathbf{P}}(I)|=|\max_{\mathbf{P}}(J)|$;

{\bf{(5)}}\,\,For any $B\subseteq\Omega$ such that $a\preccurlyeq_{\mathbf{P}}b$ for all $a\in \Omega-B$, $b\in B$, it holds that $c\preccurlyeq_{\mathbf{P}}d$ for all $c\in \min_{\mathbf{P}}(B)$, $d\in B-\min_{\mathbf{P}}(B)$.
}
\end{lemma}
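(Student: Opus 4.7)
The plan is to organize the five equivalences around two main implications, $(1)\Leftrightarrow(3)$ and $(1)\Leftrightarrow(4)$, and to obtain $(1)\Leftrightarrow(2)$ and $(1)\Leftrightarrow(5)$ by duality. First I would observe that $\min_{\mathbf{P}}=\max_{\mathbf{\overline{P}}}$ and that, by Lemma 2.2(3), $\mathcal{I}(\mathbf{\overline{P}})=\{\Omega-A\mid A\in\mathcal{I}(\mathbf{P})\}$, so statement (2) for $\mathbf{P}$ is literally statement (4) applied to $\mathbf{\overline{P}}$. After rewriting $a\preccurlyeq_{\mathbf{P}}b\iff b\preccurlyeq_{\mathbf{\overline{P}}}a$, statement (5) for $\mathbf{P}$ coincides with (3) applied to $\mathbf{\overline{P}}$. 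Since chains in $\mathbf{P}$ and $\mathbf{\overline{P}}$ are identical as sets, a direct length comparison shows $\mathbf{P}$ is hierarchical iff $\mathbf{\overline{P}}$ is, so once $(1)\Leftrightarrow(3)$ and $(1)\Leftrightarrow(4)$ are established for an arbitrary poset, $(1)\Leftrightarrow(2)$ and $(1)\Leftrightarrow(5)$ follow formally.

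The implications $(1)\Rightarrow(3)$ and $(1)\Rightarrow(4)$ are structural corollaries of Lemma 2.5 (and its $\mathbf{P}$-side analogue, proved identically). For $(1)\Rightarrow(3)$, under hierarchy the hypothesis on $A$ forces $\max_{a\in A}\len(a)<\min_{b\in\Omega-A}\len(b)$, which together with the contiguity of nonempty levels pins $A=\bigcup_{j\le t}W_j$ for some $t$; then $\max_{\mathbf{P}}(A)=W_t$ and the conclusion is immediate from hierarchy. For $(1)\Rightarrow(4)$, in a hierarchical poset every $I\in\mathcal{I}(\mathbf{P})$ has the form $\bigcup_{j<r}W_j\cup(I\cap W_r)$ with $\max_{\mathbf{P}}(I)=I\cap W_r$, and both $r$ and $|I\cap W_r|$ are determined by $|I|$ alone.

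For $(3)\Rightarrow(1)$, I would iteratively peel maximal layers: applying (3) to $A_0=\Omega$ (vacuous hypothesis) produces $L_m\triangleq\max_{\mathbf{P}}(\Omega)$ together with the fact that every element of $\Omega-L_m$ is $\preccurlyeq_{\mathbf{P}}$ every element of $L_m$; this verifies the hypothesis of (3) for $A_1\triangleq\Omega-L_m$, and continuing yields a partition $\Omega=L_1\sqcup L_2\sqcup\cdots\sqcup L_m$ in which $L_i$ lies entirely below $L_j$ whenever $i<j$. A short chain argument (using that each $L_j$ is an antichain) then identifies $L_j=W_j$, whence the hierarchical condition is immediate.

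The main obstacle is $(4)\Rightarrow(1)$, which I would prove by contrapositive, exhibiting two ideals of equal size with unequal max-set sizes. Choose a witness $(u,v)$ to non-hierarchy with $k\triangleq\len(v)$ minimal; this forces $W_1,\ldots,W_{k-1}$ to behave hierarchically among themselves, so $I_2\triangleq\bigcup_{j<k}W_j$ is an ideal with $\max_{\mathbf{P}}(I_2)=W_{k-1}$. Letting $S_v\triangleq\langle\{v\}\rangle_{\mathbf{P}}-\{v\}$, the same minimality gives $S_v=\bigcup_{j<k-1}W_j\cup(S_v\cap W_{k-1})$, with $\emptyset\neq S_v\cap W_{k-1}\subsetneq W_{k-1}$; the strict inclusion is the key use of $u\not\preccurlyeq_{\mathbf{P}}v$, since otherwise every $w\in W_{k-1}$ would interpolate a chain $u\preccurlyeq w\preccurlyeq v$. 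Setting $a\triangleq|W_{k-1}|-|S_v\cap W_{k-1}|\ge1$ and adjoining any $a-1$ elements of $W_{k-1}-S_v$ to $\langle\{v\}\rangle_{\mathbf{P}}$ produces an ideal $I_1'$ (each adjoined element's down-closure is already contained in $\langle\{v\}\rangle_{\mathbf{P}}$) with $|I_1'|=|I_2|$ and $\max_{\mathbf{P}}(I_1')$ equal to $\{v\}$ together with the adjoined points, of total size $a<|W_{k-1}|$, contradicting (4). The delicate points are verifying $I_1'$ is an ideal and pinning down its max-set exactly; both rest on the minimality of $k$ and the resulting partial hierarchy within $W_1,\ldots,W_{k-1}$.
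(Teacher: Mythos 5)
Your proposal is correct, but it is organized differently from the paper's proof, so a comparison is worthwhile. The paper runs the cycle $(1)\Rightarrow(2)\Rightarrow(3)\Rightarrow(1)$ and then gets $(1)\Leftrightarrow(4)$ and $(1)\Leftrightarrow(5)$ by applying $(1)\Leftrightarrow(2)$ and $(1)\Leftrightarrow(3)$ to $\overline{\mathbf{P}}$; its only substantial combinatorial step is $(2)\Rightarrow(3)$, done by a one-element swap entirely inside $\mathcal{I}(\overline{\mathbf{P}})$: from $u\in\max_{\mathbf{P}}(A-\max_{\mathbf{P}}(A))$ and $y\in\max_{\mathbf{P}}(A)$ with $u\not\preccurlyeq_{\mathbf{P}}y$ it compares $B=(\Omega-A)\cup\max_{\mathbf{P}}(A)$ with $D=(\Omega-A)\cup(\max_{\mathbf{P}}(A)-\{y\})\cup\{u\}$, two dual ideals of equal size with $|\min_{\mathbf{P}}(D)|<|\min_{\mathbf{P}}(B)|$, needing no level sets and no minimal counterexample. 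You instead prove $(1)\Leftrightarrow(3)$ and $(1)\Leftrightarrow(4)$ directly and dualize to obtain $(2)$ and $(5)$; your identifications ($(2)$ for $\mathbf{P}$ is $(4)$ for $\overline{\mathbf{P}}$, $(5)$ for $\mathbf{P}$ is $(3)$ for $\overline{\mathbf{P}}$, and hierarchy is self-dual) are exactly the duality facts the paper records, and your $(3)\Rightarrow(1)$ peeling of maximal layers coincides with the paper's argument. What is genuinely different is your $(4)\Rightarrow(1)$: a minimal-$\len(v)$ witness to non-hierarchy, the ideal $I_2=\bigcup_{j\leqslant k-1}W_{j}$ with $\max_{\mathbf{P}}(I_2)=W_{k-1}$, and the competing ideal $\langle\{v\}\rangle_{\mathbf{P}}$ padded by $a-1$ elements of $W_{k-1}-S_v$; your size count, the ideal property of the padded set, and the identification of its maximal set as $\{v\}$ together with the padding all check out, so the contradiction with $(4)$ is sound. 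The one step you state too quickly is $\bigcup_{j\leqslant k-2}W_{j}\subseteq S_v$: this does not follow from minimality of $k$ alone (these pairs have second coordinate of length exactly $k$), but needs the interpolation you invoke only for the strictness claim --- any $x$ with $\len(x)\leqslant k-2$ lies below some element of $W_{k-1}\cap S_v$, which is nonempty because $\len(v)=k$, hence $x\preccurlyeq_{\mathbf{P}}v$. On balance, the paper's swap argument is shorter and avoids levels entirely, while your route proves the ideal-side statement $(4)\Rightarrow(1)$ without passing through $(3)$ and gives a more concrete picture of how a failure of hierarchy is detected by the statistic $|\max_{\mathbf{P}}(\cdot)|$ on equal-size ideals.
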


\begin{proof}
$(1)\Longrightarrow(2)$\,\,This immediately follows from Lemma 2.5.

$(2)\Longrightarrow(3)$\,\,Assume that (2) holds. Fixing $A\subseteq\Omega$ such that $a\preccurlyeq_{\mathbf{P}}b$ for all $a\in A$, $b\in \Omega-A$, we will show
\begin{equation}\mbox{$c\preccurlyeq_{\mathbf{P}}d$ for all $c\in A-\max_{\mathbf{P}}(A)$, $d\in \max_{\mathbf{P}}(A)$},\end{equation}
which immediately implies (3). If $A=\emptyset$, then (2.1) trivially holds. Therefore in the following, we assume $A\neq\emptyset$. Let $B=(\Omega-A)\cup\max_{\mathbf{P}}(A)$. Then, it can be readily verified that $B\in\mathcal{I}(\mathbf{\overline{P}})$. Since $\max_{\mathbf{P}}(A)\neq\emptyset$, we deduce that $\min_{\mathbf{P}}(B)=\max_{\mathbf{P}}(A)$. Now, by way of contradiction, we assume that (2.1) does not hold. Then, we can choose $u\in\max_{\mathbf{P}}(A-\max_{\mathbf{P}}(A))$ and $y\in \max_{\mathbf{P}}(A)$ such that $u\not\preccurlyeq_{\mathbf{P}}y$. Let $D=(\Omega-A)\cup(\max_{\mathbf{P}}(A)-\{y\})\cup\{u\}$. It can then be readily verified that $D\in\mathcal{I}(\mathbf{\overline{P}})$, $|D|=|B|$. Since $u\in A$, we can choose $v\in \max_{\mathbf{P}}(A)$ such that $u\preccurlyeq_{\mathbf{P}}v$. Furthermore, it can be readily verified that $\min_{\mathbf{P}}(D)\subseteq\left(\max_{\mathbf{P}}(A)-\{y,v\}\right)\cup\{u\}$. Noticing that $y\neq v$, we deduce that $|\min_{\mathbf{P}}(D)|\leqslant|\max_{\mathbf{P}}(A)|-1=|\min_{\mathbf{P}}(B)|-1$, a contradiction to (2), as desired.

$(3)\Longrightarrow(1)$\,\,Assume that (3) holds. First of all, there exist $s\in\mathbb{Z}^{+}$ and a tuple of nonempty sets $(A_1,\dots,A_{s})$ such that

\hspace{6mm}\,\,$(i)$\,\,$\Omega=\bigcup_{j=1}^{s}A_{j}$;

\hspace{6mm}\,\,$(ii)$\,\,For any $j,t\in[1,s]$ with $j\neq t$, it holds that $A_{j}\cap A_{t}=\emptyset$;

\hspace{6mm}\,\,$(iii)$\,\,For any $t\in[1,s]$, it holds that $A_{t}=\max_{\mathbf{P}}\left(\bigcup_{j=1}^{t}A_{j}\right)$.

Now using (3) and an induction argument, it can be readily verified that
\begin{equation}\mbox{$\forall~j,t\in[1,s]$ s.t. $j+1\leqslant t$, we have $a\preccurlyeq_{\mathbf{P}}b$ for all $a\in A_{j}$, $b\in A_{t}$}.\end{equation}
For any $t\in[1,s]$, by $(iii)$, we deduce that $A_{t}$ is an anti-chain in $\mathbf{P}$, which, together with (2.2), further implies that $A_{t}=\{u\mid u\in\Omega,~\len(u)=t\}$. Now we apply Definition 1.2 and reach $\mathbf{P}$ is hierarchical, proving (1).

Finally, based on the following three facts:

\hspace{6mm}\,\,$(iv)$\,\,$\mathbf{P}$ is hierarchical if and only if $\overline{\mathbf{P}}$ is hierarchical;

\hspace{6mm}\,\,$(v)$\,\,$\overline{\overline{\mathbf{P}}}=\mathbf{P}$;

\hspace{6mm}\,\,$(vi)$\,\,$\min_{\mathbf{P}}(B)=\max_{\overline{\mathbf{P}}}(B)$, $\max_{\mathbf{P}}(B)=\min_{\overline{\mathbf{P}}}(B)$ for all $B\subseteq\Omega$;

$(1)\Longleftrightarrow(4)$ and $(1)\Longleftrightarrow(5)$ follow from applying $(1)\Longleftrightarrow(2)$ and $(1)\Longleftrightarrow(3)$ to $\overline{\mathbf{P}}$, respectively.
\end{proof}

\setlength{\parindent}{2em}
We are now ready to prove Theorem 2.3.\\

\setlength{\parindent}{0em}
{\textbf{Proof of Theorem 2.3:}} $(1)\Longrightarrow(2)$\,\,We fix $\alpha,\gamma\in\hat{\mathbf{H}}$ such that $\wt_{\mathbf{\overline{P}}}(\alpha)=\wt_{\mathbf{\overline{P}}}(\gamma)$, and let $D=\langle \supp(\alpha)\rangle_{\mathbf{\overline{P}}}$, $B=\langle \supp(\gamma)\rangle_{\mathbf{\overline{P}}}$. Thus we have $|D|=|B|$. Since $\mathbf{P}$ is hierarchical, by Lemma 2.6, we deduce that $|\min_{\mathbf{P}}(D)|=|\min_{\mathbf{P}}(B)|$. Now we apply Proposition 2.2 and reach $\deg(F(\alpha))=\deg(F(\gamma))$, proving (2).

$(2)\Longrightarrow(1)$\,\,We fix $D,B\in\mathcal{I}(\mathbf{\overline{P}})$ such that $|D|=|B|$. Since $h_{i}\geqslant2$ for all $i\in\Omega$, we can choose $\alpha,\gamma\in\hat{\mathbf{H}}$ such that $\langle \supp(\alpha)\rangle_{\mathbf{\overline{P}}}=D$, $\langle \supp(\gamma)\rangle_{\mathbf{\overline{P}}}=B$. Since $|D|=|B|$, we deduce that $\wt_{\mathbf{\overline{P}}}(\alpha)=\wt_{\mathbf{\overline{P}}}(\gamma)$, and hence $\deg(F(\alpha))=\deg(F(\gamma))$. Now we apply Proposition 2.2 and reach $|\min_{\mathbf{P}}(D)|=|\min_{\mathbf{P}}(B)|$. Using $(2)\Longrightarrow(1)$ of Lemma 2.6, we conclude that $\mathbf{P}$ is hierarchical, proving (1).

\subsection{Establishing Conjecture 1.1}

\setlength{\parindent}{2em}
We state the following theorem, which we will prove in Section 3.

\setlength{\parindent}{0em}
\begin{theorem}
{If $h_{i}\geqslant2$ for all $i\in\Omega$, then the following three conditions are equivalent to each other:

{\bf{(1)}}\,\,$\mathcal{Q}(\mathbf{H},\mathbf{P})$ is Fourier-reflexive;

{\bf{(2)}}\,\,$\Lambda=\mathcal{Q}(\hat{\mathbf{H}},\mathbf{\overline{P}})$;

{\bf{(3)}}\,\,$\mathbf{P}$ is hierarchical, and moreover, for any $u,v\in \Omega$ with $\len(u)=\len(v)$, it holds that $h_{u}=h_{v}$.
}
\end{theorem}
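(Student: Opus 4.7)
I would prove the three implications $(2)\Rightarrow(1)$, $(3)\Rightarrow(2)$ and $(1)\Rightarrow(3)$ separately. The first two should fall out of tools already in place, while $(1)\Rightarrow(3)$, which is the statement of Conjecture 1.1, will be the main obstacle and will require the machinery of Section 3.

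For $(2)\Rightarrow(1)$, I would apply Corollary 2.1 to the dual pair $(\hat{\mathbf{H}},\overline{\mathbf{P}})$; since $|\hat{H_{i}}|=h_{i}\geqslant 2$ this gives $|\mathcal{Q}(\hat{\mathbf{H}},\overline{\mathbf{P}})|=n+1$, whence $|\Lambda|=n+1$ by hypothesis, and Fourier-reflexivity of $\mathcal{Q}(\mathbf{H},\mathbf{P})$ follows from another application of Corollary 2.1.

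For $(3)\Rightarrow(2)$, I would fix $\alpha,\gamma\in\hat{\mathbf{H}}$ with $\wt_{\overline{\mathbf{P}}}(\alpha)=\wt_{\overline{\mathbf{P}}}(\gamma)$, set $D=\langle\supp(\alpha)\rangle_{\overline{\mathbf{P}}}$ and $B=\langle\supp(\gamma)\rangle_{\overline{\mathbf{P}}}$, and exploit the symmetry provided by hierarchicality together with the equal-level hypothesis. By Lemma 2.5 both ideals have the form $(\cdot\cap W_{r})\cup\bigcup_{j>r}W_{j}$ with $\min_{\mathbf{P}}(\cdot)=\cdot\cap W_{r}$, sharing the same $r$ with $|D\cap W_{r}|=|B\cap W_{r}|$ (forced by $|D|=|B|$ and the fixed $|W_{j}|$). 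Any bijection $\sigma$ of $\Omega$ that permutes each $W_{j}$ setwise and sends $D\cap W_{r}$ to $B\cap W_{r}$ is then an automorphism of $\mathbf{P}$ (since $\mathbf{P}$ is hierarchical) that also satisfies $h_{\sigma(i)}=h_{i}$ (by the equal-level assumption); applying $\sigma$ termwise to the formula in Theorem 2.1 transforms the sum for $F(\alpha)$ into the sum for $F(\gamma)$, so $F(\alpha)=F(\gamma)$ and Lemma 2.1 yields $\alpha\sim_{\Lambda}\gamma$. Hence $\mathcal{Q}(\hat{\mathbf{H}},\overline{\mathbf{P}})$ is finer than $\Lambda$, and since both partitions have $n+1$ blocks they coincide.

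For $(1)\Rightarrow(3)$, which I expect to be the main obstacle, my plan is to develop the generalized polynomials $\pi(Y,D)$ of Section 3 and to prove the monotonicity statement Theorem 3.4: for a suitable total order on $\mathbb{R}[x]$, $\pi(\Omega,D)<\pi(\Omega,A)$ whenever $A\subsetneqq D$ in $\mathcal{I}(\overline{\mathbf{P}})$. Along any maximal chain $\emptyset=D_{0}\subsetneqq\cdots\subsetneqq D_{n}=\Omega$ in $\mathcal{I}(\overline{\mathbf{P}})$ (which exists by Lemma 2.2), this forces the $n+1$ polynomials $F_{D_{0}},\ldots,F_{D_{n}}$ to be pairwise distinct; combined with Fourier-reflexivity (which gives $|\Lambda|=n+1$ via Corollary 2.1), these exhaust the distinct values of $F$, so every $D\in\mathcal{I}(\overline{\mathbf{P}})$ satisfies $F_{D}=F_{D_{k}}$ for a unique $k$. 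I would then match degrees via Proposition 2.2 to conclude that $|\min_{\mathbf{P}}(D)|$ depends only on $|D|$, invoke $(2)\Rightarrow(1)$ of Lemma 2.6 to obtain hierarchicality, and finally match the leading coefficients of $F_{D}$ against $F_{D_{|D|}}$ (again via Proposition 2.2) to force the group orders to be constant on each level. The hardest step will be Theorem 3.4 itself, which demands a careful coefficient-by-coefficient analysis of how $\pi(\Omega,D)$ changes when $D$ grows by a single element.
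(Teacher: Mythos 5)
Your proposal is correct and, for the pivotal direction $(1)\Rightarrow(3)$, follows essentially the same route as the paper: the paper's proof of Theorem 2.4 also reduces everything to the polynomials $\pi(\Omega,M)$, $M\in\mathcal{I}(\overline{\mathbf{P}})$, proves the strict $\curlyeqprec$-monotonicity along inclusions (Theorem 3.4), and then uses $|\Lambda|=n+1$ (Corollary 2.1 plus Lemma 3.4) together with maximal chains in $\mathcal{I}(\overline{\mathbf{P}})$ to conclude that $\pi(\Omega,D)$ depends only on $|D|$ (Theorem 3.5), after which hierarchicality and level-constancy of the $h_i$ are extracted. One point to make explicit in your write-up: from ``$F_D=F_{D_k}$ for a unique $k$'' alone, matching degrees via Proposition 2.2 does not yet give that $|\min_{\mathbf{P}}(D)|$ depends only on $|D|$; you first need $k=|D|$, i.e.\ that $F_D$ agrees with the chain polynomial of the same cardinality. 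This is exactly what the paper's Theorem 3.5 supplies by running Theorem 3.4 along a second maximal chain passing through $D$ and using that a finite set has a unique strictly $\curlyeqprec$-decreasing enumeration; since you already have Theorem 3.4 and the total order, this is a one-line completion, but it should be stated. Your endgame then diverges mildly from the paper: you get hierarchicality from degree-matching plus Lemma 2.6 (this is in effect the $(2)\Rightarrow(1)$ direction of Theorem 2.3) and the level-constancy of the $h_i$ from leading coefficients via Proposition 2.2 (valid, since $h_i-1\geqslant1$ permits cancellation), whereas the paper routes through Propositions 3.3 and 3.4 and Theorem 3.3. Your treatment of $(3)\Rightarrow(2)$ is also genuinely different: you use a level-preserving permutation of $\Omega$, which is an automorphism of a hierarchical $\mathbf{P}$ respecting the $h_i$, to transform the sum in Theorem 2.1 for $F(\alpha)$ into that for $F(\gamma)$, then finish by block counting; the paper instead computes $\pi(\Omega,D)$ explicitly for hierarchical posets (Lemma 3.2) and derives the bijection criterion of Proposition 3.4. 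Your symmetry argument is shorter and avoids that computation; the paper's pays for itself by yielding the stronger Theorem 2.5 (an explicit criterion for $\alpha\sim_{\Lambda}\gamma$, with no hypothesis $h_i\geqslant2$), which your argument does not give but which Theorem 2.4 does not require. The $(2)\Rightarrow(1)$ step matches the paper.
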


\setlength{\parindent}{2em}
It is straightforward to see that Theorem 2.4 implies Conjecture 1.1. We note that $(2)\Longleftrightarrow(3)$ of Theorem 2.4 has already been established in [4, Theorems 5.4 and 5.5]. We will prove a generalization of Theorem 2.4 in Section 3.3, and then prove Theorem 2.4 in Section 3.4.

\subsection{Characterizing $\Lambda$ for the case that $\mathbf{P}$ is hierarchical}

\setlength{\parindent}{2em}
If $\mathbf{P}$ is hierarchical, then by computing $F(\alpha)$ more explicitly, we are able to establish a relatively more explicit criterion for determining whether two codewords of $\hat{\mathbf{H}}$ belong to the same block of $\Lambda$, as detailed in the following theorem.

\setlength{\parindent}{0em}
\begin{theorem}
{Assume that $\mathbf{P}$ is hierarchical. Let $\alpha,\gamma\in\hat{\mathbf{H}}$ with
$$\langle \supp(\alpha)\rangle_{\mathbf{\overline{P}}}=D,~\langle \supp(\gamma)\rangle_{\mathbf{\overline{P}}}=B.$$
Then, the following two conditions are equivalent to each other:

{\bf{(1)}}\,\,$\alpha\sim_{\Lambda}\gamma$;

{\bf{(2)}}\,\,There exists a bijection $\varepsilon:B\longrightarrow D$ such that $h_{i}=h_{\varepsilon(i)}$ for all $i\in B$.
}
\end{theorem}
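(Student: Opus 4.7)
The plan is to use Lemma 2.1 to convert the problem into deciding when $F(\alpha)=F(\gamma)$, and then to exploit the hierarchical structure of $\mathbf{P}$ together with Lemma 2.5 to extract a manageable closed form for $F(\alpha)$. Writing $W_{j}=\{u\in\Omega\mid\len(u)=j\}$ and applying Lemma 2.5 to $D$ and $B$, we decompose
\begin{equation*}
D=T\cup\textstyle\bigcup_{j>r}W_{j},\qquad B=T'\cup\textstyle\bigcup_{j>s}W_{j},
\end{equation*}
where $T=\min_{\mathbf{P}}(D)=D\cap W_{r}$ and $T'=\min_{\mathbf{P}}(B)=B\cap W_{s}$ (the trivial case $D=\emptyset$ is handled directly by Proposition 2.2, since $\deg F(\alpha)=|\Omega|>\deg F(\gamma)$ whenever $B\neq\emptyset$). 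Substituting into Theorem 2.1, the set $X=(\Omega-D)\cup\min_{\mathbf{P}}(D)$ collapses to $W_{1}\cup\cdots\cup W_{r}$, and every nonempty ideal $I\in\mathcal{I}(\mathbf{P})$ contained in this union has, by hierarchy, the form $I=(W_{1}\cup\cdots\cup W_{t-1})\cup J$ for some $1\leqslant t\leqslant r$ and some nonempty $J\subseteq W_{t}$, with $\max_{\mathbf{P}}(I)=J$.

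Grouping by the top level $t$ and factoring the inner sum over $J$, for $t<r$ one obtains the factor $P_{t}(x)-1$ with $P_{t}(x)=\prod_{i\in W_{t}}(1+(h_{i}-1)x)$, while for $t=r$ splitting $J=(J\cap T)\sqcup(J\cap(W_{r}-T))$ and using the identity $\sum_{J_{1}\subseteq T}(-x)^{|J_{1}|}=(1-x)^{|T|}$ yields
\begin{equation*}
F(\alpha)=1+\sum_{t=1}^{r-1}E_{t}x^{w_{t}}\bigl(P_{t}(x)-1\bigr)+E_{r}x^{w_{r}}\Bigl((1-x)^{|T|}\textstyle\prod_{i\in W_{r}-T}(1+(h_{i}-1)x)-1\Bigr),
\end{equation*}
with $E_{t}=\prod_{i\in W_{1}\cup\cdots\cup W_{t-1}}h_{i}$ and $w_{t}=|W_{1}\cup\cdots\cup W_{t-1}|$. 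The decisive feature is that, once $r$ is fixed, every ingredient above is determined by the ambient data except the last factor $(1-x)^{|T|}\prod_{i\in W_{r}-T}(1+(h_{i}-1)x)$; after dividing by the $T$-independent $\prod_{i\in W_{r}}(1+(h_{i}-1)x)$, the entire dependence on $D$ collapses into $(1-x)^{|T|}\bigm/\prod_{i\in T}(1+(h_{i}-1)x)$.

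For $(2)\Rightarrow(1)$, a bijection $\varepsilon\colon B\to D$ with $h_{i}=h_{\varepsilon(i)}$ forces $|D|=|B|$ and matching multisets of $h$-values; since the proof of Lemma 2.6 guarantees every $W_{j}$ with $j\in[1,m]$ is nonempty in a hierarchical poset, the possibility $r<s$ is ruled out (it would give $|D|-|B|\geqslant|T|\geqslant 1$ using $|T'|\leqslant|W_{s}|\leqslant\sum_{j=r+1}^{s}|W_{j}|$), hence $r=s$, and cancelling the common $\bigcup_{j>r}W_{j}$ parts yields the multiset equality $\{h_{i}\mid i\in T\}=\{h_{i}\mid i\in T'\}$, whence $F(\alpha)=F(\gamma)$ by the closed form. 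For $(1)\Rightarrow(2)$, Proposition 2.2 applied to $F(\alpha)=F(\gamma)$ forces $|\bigcup_{j>r}W_{j}|=|\bigcup_{j>s}W_{j}|$, hence $r=s$, and the $t<r$ parts of the closed form coincide automatically; the identity between the $t=r$ parts, after clearing denominators, reduces to
\begin{equation*}
(1-x)^{|T|}\textstyle\prod_{i\in T'}(1+(h_{i}-1)x)=(1-x)^{|T'|}\textstyle\prod_{i\in T}(1+(h_{i}-1)x).
\end{equation*}
Since $h_{i}\geqslant 2$ makes every root $-1/(h_{i}-1)$ distinct from $1$, unique factorization in $\mathbb{Q}[x]$ first pins down the multiplicity of $x=1$ as $|T|=|T'|$ and then matches the remaining linear factors to give $\{h_{i}\mid i\in T\}=\{h_{i}\mid i\in T'\}$ as multisets; any bijection $T'\to T$ realizing this equality, extended by the identity on $\bigcup_{j>r}W_{j}$, provides the required $\varepsilon\colon B\to D$. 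The central technical step, and the main obstacle, is the derivation of the hierarchical closed form for $F(\alpha)$; once this is in hand, both implications reduce to standard unique-factorization manipulations.
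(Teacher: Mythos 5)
Your main line is the same as the paper's: reduce to $F(\alpha)=F(\gamma)$ via Lemma 2.1, derive the hierarchical closed form for $F(\alpha)$ (your formula is exactly the specialization $\tau_{(i)}=h_{i}$, $\eta_{(i)}=h_{i}-1$ of Lemma 3.2), and then match the level-$r$ factors $(1-x)^{|T|}\prod_{i\in W_{r}-T}(1+(h_{i}-1)x)$ by unique factorization, which is what Lemma 3.4 together with Proposition 3.4 does. The closed-form derivation and the direction $(2)\Rightarrow(1)$ are sound. However, there is a genuine defect in $(1)\Rightarrow(2)$: Theorem 2.5 carries no hypothesis $h_{i}\geqslant2$ (the paper stresses this right after Corollary 2.2), yet you invoke Proposition 2.2 — whose statement assumes $h_{i}\geqslant2$ — to force $\sigma(D)=\sigma(B)$, and you later argue ``since $h_{i}\geqslant2$ makes every root $-1/(h_{i}-1)$ distinct from $1$''. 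If some $h_{i}=1$, the degree formula of Proposition 2.2 fails (a factor $h_{i}-1=0$ kills the leading coefficient, so $\deg F(\alpha)$ can drop below $|\Omega|-|D|+|\min_{\mathbf{P}}(D)|$), so as written your argument proves the theorem only under the extra assumption $h_{i}\geqslant2$. The paper's Proposition 3.4 avoids degrees altogether: it pins down $\sigma(C)=\sigma(D)$ directly from the closed form using only $\eta_{(i)}\neq-1_{K}$ (automatic, since $h_{i}\geqslant1$), and the factor-matching step survives constant factors because the multiplicity of the root $x=1$ already gives $|T|=|T'|$. Your degree-based step would have to be replaced by such a structural comparison, not merely patched.

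A second, independent error: your disposal of the case $D=\emptyset$ is false even when $h_{i}\geqslant2$. The claim ``$\deg F(\alpha)=|\Omega|>\deg F(\gamma)$ whenever $B\neq\emptyset$'' fails for any nonempty up-set $B\subseteq W_{m}$, since then $\min_{\mathbf{P}}(B)=B$ and Proposition 2.2 gives $\deg F(\gamma)=|\Omega|-|B|+|\min_{\mathbf{P}}(B)|=|\Omega|$ (in the antichain/Hamming case every $F(\gamma)$ has degree $|\Omega|$). The case is actually covered by your main computation — with $\sigma(\emptyset)=m$ and $T=\emptyset$ the closed form still applies, and the level-$m$ identity $\prod_{i\in T'}(1+(h_{i}-1)x)=(1-x)^{|T'|}$ forces $|T'|=0$ by evaluating at $x=1$ — so the special-casing is unnecessary, but as written it is a false statement and should be removed or corrected.
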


\setlength{\parindent}{2em}
Theorem 2.5 has the following straightforward corollary, which establishes Conjecture 1.2 with the additional assumption that $\mathbf{P}$ is hierarchical.

\begin{corollary}
{If $\mathbf{P}$ is hierarchical, then $\Lambda$ is finer than $\mathcal{Q}(\hat{\mathbf{H}},\mathbf{\overline{P}})$.
}
\end{corollary}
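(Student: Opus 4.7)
The plan is to obtain this corollary as an immediate consequence of Theorem 2.5. The hypothesis that $\mathbf{P}$ is hierarchical is exactly what Theorem 2.5 requires, and the conclusion we want here is strictly weaker than what Theorem 2.5 already provides, so the proof is purely an extraction step rather than an independent argument.

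Concretely, I would fix arbitrary $\alpha,\gamma\in\hat{\mathbf{H}}$ with $\alpha\sim_{\Lambda}\gamma$, and set $D=\langle\supp(\alpha)\rangle_{\overline{\mathbf{P}}}$ and $B=\langle\supp(\gamma)\rangle_{\overline{\mathbf{P}}}$. Applying Theorem 2.5 yields a bijection $\varepsilon\colon B\to D$ satisfying $h_{i}=h_{\varepsilon(i)}$ for all $i\in B$. Discarding the component-size matching and keeping only the existence of a bijection, one obtains $|D|=|B|$. By the definition of poset weight in (1.5), this means $\wt_{\overline{\mathbf{P}}}(\alpha)=|D|=|B|=\wt_{\overline{\mathbf{P}}}(\gamma)$, so $\alpha$ and $\gamma$ belong to the same block of $\mathcal{Q}(\hat{\mathbf{H}},\overline{\mathbf{P}})$. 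Since $\alpha,\gamma$ were arbitrary elements of a common block of $\Lambda$, every block of $\Lambda$ is contained in some block of $\mathcal{Q}(\hat{\mathbf{H}},\overline{\mathbf{P}})$, which is exactly the finer-than relation in the sense defined in Section 1.

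There is no genuine obstacle in the corollary itself; all the difficulty is concentrated in Theorem 2.5. In fact the proof uses only the weakest consequence of Theorem 2.5, namely the equality of cardinalities $|D|=|B|$, and makes no use of the more refined structural information that $\varepsilon$ pairs indices of equal group order. This asymmetry is consistent with the content of Conjecture 1.2, which only demands agreement of $\overline{\mathbf{P}}$-weights, whereas Theorem 2.5 already classifies the blocks of $\Lambda$ in the hierarchical case at a much finer granularity than poset weight alone detects.
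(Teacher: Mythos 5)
Your proposal is correct and follows exactly the paper's own argument: apply Theorem 2.5 to two codewords in a common block of $\Lambda$, extract from the bijection only the equality $|D|=|B|$, and conclude $\wt_{\overline{\mathbf{P}}}(\alpha)=\wt_{\overline{\mathbf{P}}}(\gamma)$, hence the finer-than relation. Nothing is missing.
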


\begin{proof}
Let $\alpha,\gamma\in\hat{\mathbf{H}}$ such that $\alpha\sim_{\Lambda}\gamma$. By Theorem 2.5, we deduce that $|\langle \supp(\alpha)\rangle_{\mathbf{\overline{P}}}|=|\langle \supp(\gamma)\rangle_{\mathbf{\overline{P}}}|$, and hence $\wt_{\mathbf{\overline{P}}}(\alpha)=\wt_{\mathbf{\overline{P}}}(\gamma)$. It immediately follows that $\Lambda$ is finer than $\mathcal{Q}(\hat{\mathbf{H}},\mathbf{\overline{P}})$, completing the proof.
\end{proof}

\setlength{\parindent}{2em}
We note that neither Theorem 2.5 nor Corollary 2.2 requires the assumption that $h_{i}\geqslant2$ for all $i\in\Omega$. We will establish a generalization of Theorem 2.5 in Section 3.2, and then prove Theorem 2.5 in Section 3.4.

\section{Generalizing $F(\alpha)$}
\setlength{\parindent}{2em}
In this section, we study a generalized version of $F(\alpha)$ defined in Section 2. Throughout this section, we let $\Omega$ be a finite set with $|\Omega|=n$ and let $\mathbf{P}=(\Omega,\preccurlyeq_{\mathbf{P}})$ be a poset.

We begin with the following notation.

\begin{notation}
Let $Y\subseteq \Omega$. Then, $\max_{\mathbf{P}}(Y)$ and $\min_{\mathbf{P}}(Y)$ will be denoted by $\max(Y)$ and $\min(Y)$, respectively. For the poset $\mathbf{Q}=(Y,\preccurlyeq_{\mathbf{Q}})$ defined as
$$\mbox{$u\preccurlyeq_{\mathbf{Q}} v\Longleftrightarrow u\preccurlyeq_{\mathbf{P}}v$ for all $u,v\in Y$},$$ 
$\mathcal{I}(\mathbf{Q})$ and $\mathcal{I}(\mathbf{\overline{Q}})$ will be denoted by $\mathcal{I}(Y)$ and $\mathcal{I}^{\mathbf{c}}(Y)$, respectively. Furthermore, for any $A\subseteq Y$, $\langle A\rangle_{\mathbf{Q}}$ will be denoted by $\langle A\rangle_{Y}$.
\end{notation}

Now let $K$ be a commutative ring and fix $\tau,\eta\in K^{\Omega}$. For any $D,I\subseteq\Omega$, we define $\varphi(D,I)\in K$ as
\begin{equation}
\hspace*{-2mm}\begin{cases}
(-1)^{|I\cap D|}\left(\prod_{i\in I-\max(I)}\tau_{(i)}\right)\left(\prod_{i\in \max(I)-D}\eta_{(i)}\right),&I\cap D\subseteq \max(I);\\
0,&I\cap D\not\subseteq \max(I).
\end{cases}
\end{equation}
For any $Y\subseteq\Omega$ and $D\subseteq Y$, we define $\pi(Y,D)\in K[x]$ as follows,
\begin{equation}\text{$\pi(Y,D)=\sum_{I\in\mathcal{I}(Y)}\varphi(D,I) x^{|I|}$.}\end{equation}
Finally, for $f\in K[x]$ and $i\in\mathbb{N}$, we let $f_{(i)}$ denote the coefficient of $x^{i}$ in $f$. Then, for $Y\subseteq\Omega$, $D\subseteq Y$ and $l\in\mathbb{N}$, it is straightforward to verify that
\begin{equation}\text{$(\pi(Y,D))_{(l)}=\sum_{(I\in\mathcal{I}(Y),|I|=l)}\varphi(D,I)$.}\end{equation}

\subsection{Some basic properties of $\pi(\cdot,\cdot)$}

\setlength{\parindent}{2em}
We begin by establishing the following analogue of Theorem 2.1.

\setlength{\parindent}{0em}
\begin{theorem}
For $X\subseteq\Omega$ and $D\in\mathcal{I}^{\mathbf{c}}(X)$, let $Y=\min(D)\cup(X-D)$. Then, we have $\min(D)\subseteq\max(Y)$, and it holds that
\begin{eqnarray*}
\begin{split}
\pi(X,D)&=\pi(Y,\min(D))\\
&=\sum_{I\in\mathcal{I}(Y)}(-1)^{|I\cap \min(D)|}\left(\prod_{i\in I-\max(I)}\tau_{(i)}\right)\left(\prod_{i\in \max(I)-\min(D)}\eta_{(i)}\right)x^{|I|}.
\end{split}
\end{eqnarray*}
\end{theorem}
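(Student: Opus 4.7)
The plan is to peel off the vanishing terms from the sum defining $\pi(X,D)$ by invoking Lemma 2.3 in the subposet on $X$, and then to reinterpret the surviving terms as the sum defining $\pi(Y,\min(D))$. The argument splits naturally into three stages.

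First I would verify the inclusion $\min(D)\subseteq\max(Y)$ by showing that if some $m\in\min(D)$ were strictly below some $y\in Y$ in $\mathbf{P}$, then $y\in\min(D)$ would contradict the antichain property of $\min(D)$, while $y\in X-D$ would violate upward-closedness of $D$ in $X$ (the defining property of $D\in\mathcal{I}^{\mathbf{c}}(X)$). Second, I would establish the identification $\mathcal{I}(Y)=\{I\in\mathcal{I}(X):I\subseteq Y\}$: the inclusion $\supseteq$ is immediate, while for $\subseteq$ I would argue that an ideal $I$ of $Y$ is automatically an ideal of $X$, since any $a\in X$ lying below some $b\in I$ must actually sit in $Y$. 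Indeed, the only alternative $a\in X-Y=D-\min(D)$ leads to a contradiction both when $b\in\min(D)$ (via minimality of $b$ in $D$) and when $b\in X-D$ (via upward-closedness of $D$ in $X$).

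With these two preparatory facts in hand I would invoke Lemma 2.3, applied to the subposet on $X$, to conclude that the condition $I\cap D\subseteq\max(I)$ appearing in (3.1) is equivalent to $I\subseteq(X-D)\cup\min(D)=Y$, and hence that the nonzero terms in $\pi(X,D)=\sum_{I\in\mathcal{I}(X)}\varphi(D,I)\,x^{|I|}$ are indexed exactly by the ideals $I\in\mathcal{I}(Y)$. For each such $I$, the inclusion $I\subseteq Y$ forces every element of $I\cap D$ to lie in $\min(D)$ (anything else would have to lie in $X-D$, which is absurd), so $I\cap D=I\cap\min(D)$ and $\max(I)-D=\max(I)-\min(D)$. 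Substituting these equalities into the definition of $\varphi(D,I)$ yields $\varphi(\min(D),I)$, which proves $\pi(X,D)=\pi(Y,\min(D))$. The explicit formula in the statement then follows because $\min(D)\subseteq\max(Y)$ and $I\subseteq Y$ give $I\cap\min(D)\subseteq I\cap\max(Y)\subseteq\max(I)$, so the condition in (3.1) is automatically satisfied for $\varphi(\min(D),I)$ whenever $I\in\mathcal{I}(Y)$, and no vanishing terms need be removed from the sum.

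The argument involves no difficult computations; the main obstacle is pure bookkeeping, namely ensuring that $\max$, $\min$, and the word \emph{ideal} are unambiguously interpreted as one passes between the ambient poset $\mathbf{P}$, the subposet on $X$, and the subposet on $Y$, and that Notation 3.1 is applied consistently so that these operations agree on the common subsets.
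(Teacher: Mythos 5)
Your proposal is correct and follows essentially the same route as the paper's proof: both deduce $\min(D)\subseteq\max(Y)$ from $D\in\mathcal{I}^{\mathbf{c}}(X)$, apply Lemma 2.3 in the subposet on $X$ to restrict the sum in (3.2) to ideals contained in $Y$, identify these with $\mathcal{I}(Y)$, and check $\varphi(D,I)=\varphi(\min(D),I)$ there. You merely spell out in more detail the identification $\mathcal{I}(Y)=\mathcal{I}(X)\cap 2^{Y}$ and the fact that the nonvanishing case of (3.1) applies, which the paper leaves as routine verification.
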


\begin{proof}
First of all, from $D\in\mathcal{I}^{\mathbf{c}}(X)$, we deduce that $\min(D)\subseteq\max(Y)$. Now applying Lemma 2.3 to the poset $\mathbf{Q}=(X,\preccurlyeq_{\mathbf{Q}})$ defined as
$$\mbox{$u\preccurlyeq_{\mathbf{Q}} v\Longleftrightarrow u\preccurlyeq_{\mathbf{P}}v$ for all $u,v\in X$},$$ 
for any $I\in\mathcal{I}(X)$, we deduce that
$$\text{$I\cap D\subseteq \max(I)\Longleftrightarrow I\subseteq Y$}.$$
Since $Y\in \mathcal{I}(X)$, we deduce that $\mathcal{I}(X)\cap 2^{Y}=\mathcal{I}(Y)$. Moreover, for any $I\in\mathcal{I}(Y)$, it can be readily verified that $\varphi(D,I)=\varphi(\min(D),I)$. Now the result immediately follows from (3.1) and (3.2).
\end{proof}

\begin{remark}
Adopting the notations in Section 2, if we set $K=\mathbb{R}$ and $\tau_{(i)}=h_{i}$, $\eta_{(i)}=h_{i}-1$ for all $i\in\Omega$, then for $\alpha\in \hat{\mathbf{H}}$ with $\langle \supp(\alpha)\rangle_{\mathbf{\overline{P}}}=D$, by Theorem 2.1 and Theorem 3.1, we have $\pi(\Omega,D)=F(\alpha)$. Therefore, the notion $\pi(Y,D)$ can be regarded as a generalization of the notion $F(\alpha)$.
\end{remark}

\setlength{\parindent}{2em}
Parallel to Proposition 2.2, we have the following corollary of Theorem 3.1 concerning the degree and leading coefficient of $\pi(\cdot,\cdot)$.

\begin{corollary}
If $K$ is a field and $\tau_{(i)}\neq0$, $\eta_{(i)}\neq0$ for all $i\in\Omega$, then for $X\subseteq\Omega$ and $D\in\mathcal{I}^{\mathbf{c}}(X)$, we have $\deg(\pi(X,D))=|X|-|D|+|\min(D)|$, and the leading coefficient of $\pi(X,D)$ is $\varphi(\min(D),(X-D)\cup\min(D))$.
\end{corollary}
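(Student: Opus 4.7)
The proof will be a direct calculation from Theorem 3.1 combined with the observation that $Y$ itself is the unique ideal of $Y$ of maximal cardinality. The plan is as follows.

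First I would apply Theorem 3.1 to rewrite $\pi(X,D) = \pi(Y,\min(D))$, where $Y = \min(D) \cup (X-D)$, and recall from that theorem that $\min(D) \subseteq \max(Y)$. Since $\min(D)$ and $X-D$ are disjoint (as $\min(D) \subseteq D$), we immediately get $|Y| = |X-D| + |\min(D)| = |X| - |D| + |\min(D)|$, which will be the claimed degree.

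Next I would observe that in the summation
$$\pi(Y,\min(D)) = \sum_{I \in \mathcal{I}(Y)} \varphi(\min(D),I)\, x^{|I|}$$
every index $I$ satisfies $|I| \le |Y|$, with equality if and only if $I = Y$, since $Y$ is the unique ideal of $Y$ having full cardinality. Therefore the coefficient of $x^{|Y|}$ in $\pi(X,D)$ equals $\varphi(\min(D),Y) = \varphi(\min(D),(X-D)\cup\min(D))$, as claimed. Because $\min(D) \subseteq \max(Y)$, the set $Y \cap \min(D) = \min(D)$ lies inside $\max(Y)$, so the defining formula (3.1) for $\varphi(\min(D),Y)$ falls into the nonzero case and yields
$$\varphi(\min(D),Y) = (-1)^{|\min(D)|}\left(\prod_{i \in Y-\max(Y)}\tau_{(i)}\right)\left(\prod_{i \in \max(Y)-\min(D)}\eta_{(i)}\right).$$

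Finally I would justify that this coefficient is nonzero, so that $\deg(\pi(X,D))$ is genuinely $|Y|$ rather than something smaller. Since $K$ is a field and each $\tau_{(i)}, \eta_{(i)}$ is nonzero, every factor in the product above is a nonzero element of $K$, hence their product is nonzero. This gives both the degree and the leading coefficient simultaneously. There is no real obstacle here: the only subtlety is confirming that we land in the nonzero branch of (3.1), which is exactly what Theorem 3.1 hands us through the inclusion $\min(D) \subseteq \max(Y)$.
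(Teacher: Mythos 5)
Your proof is correct and follows exactly the route the paper intends: the paper states this as an immediate consequence of Theorem 3.1, and you simply spell out the implicit argument that $Y=(X-D)\cup\min(D)$ is the unique ideal of $Y$ of cardinality $|Y|=|X|-|D|+|\min(D)|$, that $\min(D)\subseteq\max(Y)$ puts $\varphi(\min(D),Y)$ in the nonzero branch of (3.1), and that this coefficient is nonzero since $K$ is a field and all $\tau_{(i)},\eta_{(i)}$ are nonzero. Nothing is missing.
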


\setlength{\parindent}{2em}
The following Lemma will be used frequently in our proofs.

\setlength{\parindent}{0em}
\begin{lemma}
For $Y\subseteq\Omega$, $D\in\mathcal{I}^{\mathbf{c}}(Y)$ and $e\in \max(Y)$, it holds that
\begin{eqnarray*}
\begin{split}
&\pi(Y,D)-\pi(Y-\{e\},D-\{e\})\\
&=\mbox{$\begin{cases}
\eta_{(e)}\left(\prod_{i\in \langle\{e\}\rangle_{Y}-\{e\}}\tau_{(i)}\right) x^{|\langle\{e\}\rangle_{Y}|}\pi(Y-\langle\{e\}\rangle_{Y},D),&e\not\in D;\\
0,&e\in D-\min(D);\\
-\left(\prod_{i\in \langle\{e\}\rangle_{Y}-\{e\}}\tau_{(i)}\right) x^{|\langle\{e\}\rangle_{Y}|}\pi(Y-\langle\{e\}\rangle_{Y},D-\{e\}),&e\in \min(D).
\end{cases}$}
\end{split}
\end{eqnarray*}
\end{lemma}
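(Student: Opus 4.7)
The plan is to split the sum defining $\pi(Y,D)$ according to whether the indexing ideal $I \in \mathcal{I}(Y)$ contains $e$. Since $e \in \max(Y)$, one verifies directly that $\{I \in \mathcal{I}(Y) : e \notin I\} = \mathcal{I}(Y-\{e\})$, and for every such $I$ one has $e \notin \max(I)$, so $I \cap D = I \cap (D-\{e\})$ and $\max(I) - D = \max(I) - (D-\{e\})$. Hence $\varphi(D,I) = \varphi(D-\{e\},I)$ on this part of the sum, and the contribution from ideals not containing $e$ is exactly $\pi(Y-\{e\}, D-\{e\})$. The task is thus to evaluate
\[
S \,\triangleq\, \sum_{I \in \mathcal{I}(Y),\, e \in I} \varphi(D,I)\, x^{|I|}
\]
and show it matches the claimed right-hand side in each of the three cases.

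Write $Z = \langle\{e\}\rangle_Y$ and $Y' = Y - Z$. Any $I \in \mathcal{I}(Y)$ containing $e$ must contain $Z$, so we obtain a bijection $I \leftrightarrow J$ given by $I = Z \cup J$ with $J \in \mathcal{I}(Y')$; under this bijection $|I| = |Z| + |J|$, $\max(I) = \max(J) \cup \{e\}$ (disjoint union), and $I - \max(I) = (Z - \{e\}) \cup (J - \max(J))$. The support condition $I \cap D \subseteq \max(I)$ then splits into separate conditions on $Z \cap D$ and $J \cap D$. The decisive observation is that since $D \in \mathcal{I}^{\mathbf{c}}(Y)$ is up-closed in $Y$, any $z \in Z \cap D$ forces $e \in D$ (because $z \preccurlyeq_{\mathbf{P}} e$); consequently $Z \cap D = \emptyset$ whenever $e \notin D$, while, when $e \in D$, we have $(Z-\{e\})\cap D = \emptyset$ iff $e \in \min(D)$.

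Combining these observations with the formula for $\varphi(D,I)$ yields the three cases. If $e \notin D$, then the condition $I \cap D \subseteq \max(I)$ reduces to $J \cap D \subseteq \max(J)$, and a direct substitution gives $\varphi(D, Z \cup J) = \eta_{(e)} \bigl(\prod_{i \in Z - \{e\}} \tau_{(i)}\bigr)\, \varphi(D,J)$, so that $S = \eta_{(e)} \bigl(\prod_{i \in Z - \{e\}} \tau_{(i)}\bigr) x^{|Z|}\, \pi(Y', D)$. If $e \in D - \min(D)$, there exists $d \in D$ with $d \prec e$, hence $d \in Z - \{e\} \subseteq I$ for every $I \ni e$; but $d \prec e \in I$ forces $d \notin \max(I)$, so $d \in (I \cap D) \setminus \max(I)$ and $\varphi(D,I) = 0$ for every $I \ni e$, giving $S = 0$. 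Finally, if $e \in \min(D)$, then $Z \cap D = \{e\} \subseteq \max(I)$ and the condition again reduces to $J \cap D \subseteq \max(J)$; using $\varphi(D,J) = \varphi(D-\{e\},J)$ for $J \subseteq Y'$ (valid because $e \notin J$ and $e \notin \max(J)$) together with the extra sign from $|I \cap D| = |J \cap D| + 1$ yields $\varphi(D, Z \cup J) = -\bigl(\prod_{i \in Z-\{e\}} \tau_{(i)}\bigr) \varphi(D-\{e\}, J)$ and hence the third formula. The main obstacle is purely bookkeeping: tracking how $\max(\cdot)$, the intersection with $D$, and the sign $(-1)^{|I \cap D|}$ transform under the substitution $I = Z \cup J$, and spotting that the middle case collapses precisely because of the up-closedness of $D$ in $Y$.
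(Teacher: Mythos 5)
Your proposal is correct and follows essentially the same route as the paper: split $\pi(Y,D)$ over the ideals of $Y$ containing $e$ versus not, identify the latter contribution with $\pi(Y-\{e\},D-\{e\})$ using $e\in\max(Y)$, and evaluate the former by reindexing $I=\langle\{e\}\rangle_{Y}\cup J$ and factoring $\varphi(D,I)$ through $\langle\{e\}\rangle_{Y}$, exactly as in the paper's computation. The only (harmless) deviation is the case $e\in D-\min(D)$, which the paper dispatches by invoking Theorem 3.1, whereas you stay inside the same decomposition and observe that every term with $e\in I$ vanishes because some $d\in (D\cap\langle\{e\}\rangle_{Y})-\{e\}$ lies in $I-\max(I)$; both arguments are valid.
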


\begin{proof}
Throughout the proof, we let $U=\langle\{e\}\rangle_{Y}$. First, if $e\in D-\min(D)$, then we have $\min(D-\{e\})=\min(D)$ and $(Y-\{e\})-(D-\{e\})=Y-D$, which, together with Theorem 3.1, implies that $\pi(Y-\{e\},D-\{e\})=\pi(Y,D)$. Therefore in the following, we assume either $e\not\in D$ or $e\in \min(D)$. Then, it is straightforward to verify that
\begin{equation}\text{$\left(e\in \min(D)\Longrightarrow D\cap U=\{e\}\right)$ and $\left(e\not\in D\Longrightarrow D\cap U=\emptyset\right)$}.\end{equation}
By (3.2), we have
$$\pi(Y,D)=\left(\sum_{(I\in\mathcal{I}(Y),e\in I)}\varphi(D,I) x^{|I|}\right)+\left(\sum_{(I\in\mathcal{I}(Y),e\not\in I)}\varphi(D,I) x^{|I|}\right).$$
Since $e\in \max(Y)$, we have $\{I\mid I\in\mathcal{I}(Y),~e\not\in I\}=\mathcal{I}(Y-\{e\})$. For any $I\in\mathcal{I}(Y-\{e\})$, by straightforward computation, we deduce that $\varphi(D,I)=\varphi(D-\{e\},I)$. It then follows that
$$\sum_{(I\in\mathcal{I}(Y),e\not\in I)}\varphi(D,I)x^{|I|}=\sum_{I\in\mathcal{I}(Y-\{e\})}\varphi(D-\{e\},I)x^{|I|}=\pi(Y-\{e\},D-\{e\}).$$
Also noticing that for any $I\in\mathcal{I}(Y)$, $e\in I\Longleftrightarrow U\subseteq I$, we deduce that
\begin{equation}\pi(Y,D)-\pi(Y-\{e\},D-\{e\})=\sum_{(I\in\mathcal{I}(Y),U\subseteq I)}\varphi(D,I) x^{|I|}.\end{equation}
For any $I\in\mathcal{I}(Y)$ such that $U\subseteq I$, by (3.4) and some straightforward computation, we deduce that
$$\varphi(D,I)=(-1)^{|D\cap U|}\left(\prod_{i\in U-\{e\}}\tau_{(i)}\right)\left(\prod_{i\in \{e\}-D}\eta_{(i)}\right)\varphi(D-U,I-U).$$
The above discussion, together with $U\in\mathcal{I}(Y)$, implies that
\begin{eqnarray*}
\begin{split}
&\sum_{(I\in\mathcal{I}(Y),U\subseteq I)}\varphi(D,I) x^{|I|}\\
&=(-1)^{|D\cap U|}\left(\prod_{i\in U-\{e\}}\tau_{(i)}\right)\left(\prod_{i\in \{e\}-D}\eta_{(i)}\right)x^{|U|}\left(\sum_{(I\in\mathcal{I}(Y),U\subseteq I)}\varphi(D-U,I-U) x^{|I-U|}\right)\\
&=(-1)^{|D\cap U|}\left(\prod_{i\in U-\{e\}}\tau_{(i)}\right)\left(\prod_{i\in \{e\}-D}\eta_{(i)}\right)x^{|U|}\left(\sum_{J\in\mathcal{I}(Y-U)}\varphi(D-U,J) x^{|J|}\right)\\
&=(-1)^{|D\cap U|}\left(\prod_{i\in U-\{e\}}\tau_{(i)}\right)\left(\prod_{i\in \{e\}-D}\eta_{(i)}\right)x^{|U|}\pi(Y-U,D-U).
\end{split}
\end{eqnarray*}
Now the lemma immediately follows from (3.4) and (3.5).
\end{proof}

\setlength{\parindent}{2em}
Lemma 3.1 immediately implies the following proposition.

\begin{proposition}
For $Y\subseteq\Omega$, $L\in\mathcal{I}^{\mathbf{c}}(Y)$ and $e\in \max(Y)-L$, let $k=|\langle\{e\}\rangle_{Y}|$. Then, we have
\begin{eqnarray*}
\begin{split}
\pi(Y,L)-\pi(Y,L\cup\{e\})=\mbox{$(\eta_{(e)}+1_{K})\left(\prod_{i\in \langle\{e\}\rangle_{Y}-\{e\}}\tau_{(i)}\right)x^{k}\pi(Y-\langle\{e\}\rangle_{Y},L)$.}
\end{split}
\end{eqnarray*}
\end{proposition}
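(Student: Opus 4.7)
The plan is to deduce the identity by invoking Lemma 3.1 twice, for two carefully chosen values of the ``$D$'' argument, and then taking the difference so that a common ``residual'' term $\pi(Y-\{e\},L)$ cancels telescopically.

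First I would establish two preliminary facts about $L\cup\{e\}$. Since $L\in\mathcal{I}^{\mathbf{c}}(Y)$ is a filter in the induced poset $\mathbf{Q}=(Y,\preccurlyeq_{\mathbf{P}}|_{Y})$ and $e\in\max(Y)$ admits no strict successor inside $Y$, adjoining $e$ preserves the filter property; hence $L\cup\{e\}\in\mathcal{I}^{\mathbf{c}}(Y)$. Moreover, $e\in\min(L\cup\{e\})$: if some $a\in L$ satisfied $a\preccurlyeq_{\mathbf{P}}e$, then the filter property of $L$ in $\mathbf{Q}$ would force $e\in L$, contradicting $e\in\max(Y)-L$. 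As a byproduct of the same argument, $L\cap\langle\{e\}\rangle_{Y}=\emptyset$, so $L\subseteq Y-\langle\{e\}\rangle_{Y}$ and the expressions $\pi(Y-\langle\{e\}\rangle_{Y},L)$ appearing below are well-defined in the sense of (3.2).

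With these in hand, I would apply Lemma 3.1 twice. Choosing $D:=L$ (so $D-\{e\}=L$ and the case $e\notin D$ applies) yields
\begin{equation*}
\pi(Y,L)-\pi(Y-\{e\},L)=\eta_{(e)}\left(\prod_{i\in\langle\{e\}\rangle_{Y}-\{e\}}\tau_{(i)}\right)x^{k}\pi(Y-\langle\{e\}\rangle_{Y},L).
\end{equation*}
Choosing $D:=L\cup\{e\}$ (so $D-\{e\}=L$ again, and the case $e\in\min(D)$ applies) yields
\begin{equation*}
\pi(Y,L\cup\{e\})-\pi(Y-\{e\},L)=-\left(\prod_{i\in\langle\{e\}\rangle_{Y}-\{e\}}\tau_{(i)}\right)x^{k}\pi(Y-\langle\{e\}\rangle_{Y},L).
\end{equation*}
Subtracting the second identity from the first cancels the shared $\pi(Y-\{e\},L)$, and the right-hand sides combine as $\eta_{(e)}-(-1_{K})=\eta_{(e)}+1_{K}$ times the common factor, producing precisely the claimed formula.

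I do not anticipate any substantive obstacle; the argument is essentially bookkeeping once the two case-selections in Lemma 3.1 are verified. The only mildly nontrivial step is the verification that $e\in\min(L\cup\{e\})$, which is where the hypothesis $L\in\mathcal{I}^{\mathbf{c}}(Y)$ actually gets used — without the filter assumption, $e$ might fail to be minimal in $L\cup\{e\}$ and Lemma 3.1 would land in its middle case (difference equal to zero), breaking the telescoping that drives the proof.
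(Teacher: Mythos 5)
Your proposal is correct and is essentially the paper's own proof: one checks that $L\cup\{e\}\in\mathcal{I}^{\mathbf{c}}(Y)$ with $e\in\min(L\cup\{e\})$, applies the three-case difference lemma (Lemma 3.1) to $(Y,L)$ and $(Y,L\cup\{e\})$, and subtracts so that the common term $\pi(Y-\{e\},L)$ cancels, giving the factor $\eta_{(e)}+1_{K}$. Your additional observation that $L\cap\langle\{e\}\rangle_{Y}=\emptyset$ (so the right-hand side is well-defined) is a detail the paper leaves implicit, but it introduces no divergence in method.
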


\begin{proof}
Since $L\in\mathcal{I}^{\mathbf{c}}(Y)$, $e\in \max(Y)-L$, we deduce that $L\cup\{e\}\in\mathcal{I}^{\mathbf{c}}(Y)$, $e\in\min(L\cup\{e\})$. Now the result follows from applying Lemma 3.2 to $(Y,L)$ and $(Y,L\cup\{e\})$ respectively.
\end{proof}

\setlength{\parindent}{2em}
Now we consider the special case that
\begin{equation}\text{$\eta_{(i)}=\tau_{(i)}-1_{K}$ for all $i\in \Omega$}.\end{equation}
We note that by Remark 3.1, (3.6) is satisfied by $F(\alpha)$ defined in Section 2.

\setlength{\parindent}{0em}
\begin{theorem}
Assume (3.6) holds, and let $Y\subseteq\Omega$. Then, it holds that

{\bf{(1)}}\,\,For any $D\subseteq\max(Y)$,
$$\pi(Y,D)=\sum_{A\subseteq D}(-1)^{|A|}\left(\prod_{i\in \langle A\rangle_{Y}}\tau_{(i)}x\right)\pi(Y-\langle A\rangle_{Y},\emptyset);$$
{\bf{(2)}}\,\,For any $D\subseteq\max(Y)$,
$$\left(\prod_{i\in \langle D\rangle_{Y}}\tau_{(i)}x\right) \pi(Y-\langle D\rangle_{Y},\emptyset)=\sum_{A\subseteq D}(-1)^{|A|}\pi(Y,A).$$
\end{theorem}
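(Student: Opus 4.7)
The plan is to prove (1) by induction on $|D|$, and then deduce (2) from (1) via Möbius inversion on the Boolean lattice of subsets of $D$. Concretely, set $f(A)=\pi(Y,A)$ and $g(A)=\left(\prod_{i\in\langle A\rangle_{Y}}\tau_{(i)}x\right)\pi(Y-\langle A\rangle_{Y},\emptyset)$. Once (1) is known for every subset of $D$, Möbius inversion applied to $\widetilde{g}(A)\triangleq(-1)^{|A|}g(A)$ (so that (1) reads $f(T)=\sum_{A\subseteq T}\widetilde{g}(A)$) yields $\widetilde{g}(D)=\sum_{A\subseteq D}(-1)^{|D|-|A|}f(A)$, which simplifies to the identity in (2) since $(-1)^{|D|-|A|}=(-1)^{|D|}(-1)^{|A|}$. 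Hence it suffices to prove (1).

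The base case $D=\emptyset$ is immediate, as both sides reduce to $\pi(Y,\emptyset)$. For the inductive step, I fix $e\in D$ and set $D'=D-\{e\}$. Since $D\subseteq\max(Y)$, we have $D'\in\mathcal{I}^{\mathbf{c}}(Y)$ and $e\in\max(Y)-D'$, so Proposition 3.1 applies with $L=D'$. Using assumption (3.6) to replace $\eta_{(e)}+1_{K}$ by $\tau_{(e)}$ and absorbing $x^{|\langle\{e\}\rangle_{Y}|}$ into the product, that identity becomes
$$\pi(Y,D)=\pi(Y,D')-\left(\prod_{i\in\langle\{e\}\rangle_{Y}}\tau_{(i)}x\right)\pi(Y-\langle\{e\}\rangle_{Y},D').$$
I then apply the inductive hypothesis to $\pi(Y,D')$ and to $\pi(Y-\langle\{e\}\rangle_{Y},D')$; the latter requires $D'\subseteq\max(Y-\langle\{e\}\rangle_{Y})$, which holds because any $d\in D'\subseteq\max(Y)$ satisfies $d\neq e$ and cannot lie in $\langle\{e\}\rangle_{Y}$ (else $d\preccurlyeq_{\mathbf{P}}e$ with $d\neq e$ contradicts $d\in\max(Y)$), and maximality in $Y$ is clearly inherited by any sub-poset containing $d$.

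After multiplying the second expansion through by $\prod_{i\in\langle\{e\}\rangle_{Y}}\tau_{(i)}x$, the set-theoretic identities
$$\langle A\rangle_{Y-\langle\{e\}\rangle_{Y}}=\langle A\rangle_{Y}-\langle\{e\}\rangle_{Y},\qquad\langle A\cup\{e\}\rangle_{Y}=\langle A\rangle_{Y-\langle\{e\}\rangle_{Y}}\sqcup\langle\{e\}\rangle_{Y}$$
ensure that $\prod_{i\in\langle A\rangle_{Y-\langle\{e\}\rangle_{Y}}}\tau_{(i)}x$ combined with $\prod_{i\in\langle\{e\}\rangle_{Y}}\tau_{(i)}x$ yields $\prod_{i\in\langle A\cup\{e\}\rangle_{Y}}\tau_{(i)}x$, and that the residual poset $(Y-\langle\{e\}\rangle_{Y})-\langle A\rangle_{Y-\langle\{e\}\rangle_{Y}}$ equals $Y-\langle A\cup\{e\}\rangle_{Y}$. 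Re-indexing via $A\mapsto A\cup\{e\}$ (with the sign flip $(-1)^{|A|}=-(-1)^{|A\cup\{e\}|}$) absorbs the minus sign in the displayed equation and produces exactly the contribution of subsets of $D$ containing $e$, while the first sum covers subsets of $D$ not containing $e$. Together they form the right-hand side of (1) for $D$. I expect the main obstacle to be the careful bookkeeping of these two set identities, which make the products and ``leftover'' sub-posets telescope correctly; once those are in hand, the Proposition 3.1 reduction and the Möbius transfer are mechanical.
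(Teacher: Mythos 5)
Your proposal is correct and follows essentially the same route as the paper: the inductive step is exactly the paper's reduction (3.7) obtained from Proposition 3.1 with $L=D-\{e\}$ and assumption (3.6), with the inductive hypothesis applied to $(Y,L)$ and $(Y-\langle\{e\}\rangle_{Y},L)$, and your Möbius-inversion derivation of (2) is the same inclusion--exclusion argument the paper invokes. Your explicit verification of $D'\subseteq\max(Y-\langle\{e\}\rangle_{Y})$ and of the set identities simply fills in the "straightforward computation which we omit."
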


\begin{proof}
(1)\,\,Let $D\subseteq\max(Y)$. If $D=\emptyset$, then the result trivially holds. Therefore in the following, we assume $D\neq\emptyset$. Fix $e\in D$ and let $L=D-\{e\}$. Then, by Proposition 3.1, we deduce that
\begin{equation}\pi(Y,D)=\pi(Y,L)-\left(\prod_{i\in \langle\{e\}\rangle_{Y}}\tau_{(i)}x\right) \pi(Y-\langle\{e\}\rangle_{Y},L).\end{equation}
Applying an induction argument to $(Y,L)$ and $(Y-\langle\{e\}\rangle_{Y},L)$ respectively, (1) follows from (3.7) and some straightforward computation which we omit.

(2)\,\,This immediately follows from (1) and the principle of inclusion-exclusion (see [15, Theorem 2.1.1]).
\end{proof}

\setlength{\parindent}{2em}
A combination of Theorems 3.1 and 3.2 yields the following result.

\setlength{\parindent}{0em}
\begin{corollary}
Assume (3.6) holds, and let $X\subseteq\Omega$, $D\in\mathcal{I}^{\mathbf{c}}(X)$. Then, it holds that
$$\pi(X,D)=\sum_{A\subseteq \min(D)}(-1)^{|A|}\left(\prod_{i\in \langle A\rangle_{X}}\tau_{(i)}x\right)\pi((X-\langle A\rangle_{X})-(D-\min(D)),\emptyset).$$
\end{corollary}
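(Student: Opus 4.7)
The plan is to chain Theorem 3.1 with Theorem 3.2 part (1) and then identify the set-theoretic data on both sides. Let $Y = \min(D) \cup (X-D)$. By Theorem 3.1, we have $\pi(X,D) = \pi(Y,\min(D))$ and $\min(D) \subseteq \max(Y)$, so Theorem 3.2(1) applies to the pair $(Y,\min(D))$ and yields
$$\pi(Y,\min(D)) = \sum_{A\subseteq \min(D)} (-1)^{|A|} \left(\prod_{i\in \langle A\rangle_{Y}}\tau_{(i)}x\right) \pi(Y - \langle A\rangle_{Y}, \emptyset).$$
Thus the remaining task is purely combinatorial: identify $\langle A\rangle_Y$ with $\langle A\rangle_X$ and $Y - \langle A\rangle_Y$ with $(X - \langle A\rangle_X) - (D - \min(D))$ for every $A \subseteq \min(D)$.

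First I would verify that $\langle A\rangle_Y = \langle A\rangle_X$. Since $D \in \mathcal{I}^{\mathbf{c}}(X)$, $D$ is up-closed in $X$ under $\preccurlyeq_{\mathbf{P}}$, i.e., for $d \in D$ and $c \in X$ with $d \preccurlyeq_{\mathbf{P}} c$ we have $c \in D$. Now pick $a \in A \subseteq \min(D)$ and $b \in X$ with $b \preccurlyeq_{\mathbf{P}} a$. If $b \in D$, then by minimality of $a$ in $D$ we must have $b = a$; otherwise $b \in X - D$. In either case $b \in \min(D) \cup (X-D) = Y$, so $\langle A\rangle_X \subseteq Y$, and the restriction of $\preccurlyeq_{\mathbf{P}}$ to $Y$ then gives $\langle A\rangle_Y = \langle A\rangle_X$.

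Next, using $Y = \min(D) \cup (X-D)$ and $\langle A\rangle_Y = \langle A\rangle_X$, a direct element-chase shows
$$Y - \langle A\rangle_Y = \bigl((X-D) \cup \min(D)\bigr) - \langle A\rangle_X = \bigl(X - \langle A\rangle_X\bigr) - (D - \min(D)),$$
since $x \in X$ lies in the right-hand side iff $x \notin \langle A\rangle_X$ and $x \notin D - \min(D)$, iff $x \notin \langle A\rangle_X$ and ($x \notin D$ or $x \in \min(D)$). Substituting these two identities into the expansion above yields exactly the claimed formula, completing the proof.

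There is no real obstacle here; the whole statement is a straightforward composition of the two preceding results together with the set identity derived from the fact that $D$ is up-closed in $X$ and $\min(D)$ is an antichain lying at the top of $D$. The only mildly delicate point is remembering that ``$D \in \mathcal{I}^{\mathbf{c}}(X)$'' means $D$ is up-closed in $X$ under $\preccurlyeq_{\mathbf{P}}$, which is what permits the conclusion ``below $\min(D)$ in $X$ one leaves $D$''; once this is observed the rest is bookkeeping.
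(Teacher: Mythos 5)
Your proposal is correct and follows the paper's own proof essentially verbatim: set $Y=(X-D)\cup\min(D)$, invoke Theorem 3.1 to get $\pi(X,D)=\pi(Y,\min(D))$ with $\min(D)\subseteq\max(Y)$, apply Theorem 3.2(1), and then verify the set identities $\langle A\rangle_{Y}=\langle A\rangle_{X}$ and $Y-\langle A\rangle_{Y}=(X-\langle A\rangle_{X})-(D-\min(D))$. The only difference is that you spell out the element-chasing details which the paper declares ``straightforward to verify,'' and your verification (using that $D$ is up-closed in $X$ and $\min(D)$ consists of minimal elements of $D$) is accurate.
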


\begin{proof}
Let $Y=(X-D)\cup\min(D)$. By Theorem 3.1, we have $\pi(X,D)=\pi(Y,\min(D))$ and $\min(D)\subseteq\max(Y)$, which, together with (1) of Theorem 3.2, implies that
$$\pi(X,D)=\pi(Y,\min(D))=\sum_{A\subseteq \min(D)}(-1)^{|A|}\left(\prod_{i\in \langle A\rangle_{Y}}\tau_{(i)}x\right)\pi(Y-\langle A\rangle_{Y},\emptyset).$$
Now, for any $A\subseteq \min(D)$, it is straightforward to verify that $\langle A\rangle_{Y}=\langle A\rangle_{X}$, $Y-\langle A\rangle_{Y}=(X-\langle A\rangle_{X})-(D-\min(D))$, which immediately implies the corollary.
\end{proof}

\subsection{Conditions for $\mathbf{P}$ to be hierarchical}

\setlength{\parindent}{2em}
In this subsection, we present some necessary or sufficient conditions for $\mathbf{P}$ to be hierarchical.

\setlength{\parindent}{2em}
We begin with the following generalization of Theorem 2.3.

\setlength{\parindent}{0em}
\begin{proposition}
If $K$ is a field and $\tau_{(i)}\neq0$, $\eta_{(i)}\neq0$ for all $i\in\Omega$, then the following two conditions are equivalent to each other:

{\bf{(1)}}\,\,$\mathbf{P}$ is hierarchical;

{\bf{(2)}}\,\,For any $B,D\in\mathcal{I}(\mathbf{\overline{P}})$, $|B|=|D|\Longrightarrow\deg(\pi(\Omega,B))=\deg(\pi(\Omega,D))$.
\end{proposition}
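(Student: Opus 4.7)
The plan is to reduce Proposition 3.2 directly to the equivalence $(1)\Longleftrightarrow(2)$ of Lemma 2.6 by using the degree formula established in Corollary 3.1. This is the natural generalization of the proof of Theorem 2.3, with $\pi(\Omega,D)$ playing the role of $F(\alpha)$ and Corollary 3.1 playing the role of Proposition 2.2.

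First I would observe that since $K$ is a field and $\tau_{(i)}\neq 0$, $\eta_{(i)}\neq 0$ for every $i\in\Omega$, Corollary 3.1 applies to $X=\Omega$ with any $D\in\mathcal{I}^{\mathbf{c}}(\Omega)=\mathcal{I}(\overline{\mathbf{P}})$, giving
\[
\deg(\pi(\Omega,D)) = n - |D| + |\min_{\mathbf{P}}(D)|.
\]
Consequently, for any $B,D\in\mathcal{I}(\overline{\mathbf{P}})$ with $|B|=|D|$, one has $\deg(\pi(\Omega,B))=\deg(\pi(\Omega,D))$ if and only if $|\min_{\mathbf{P}}(B)|=|\min_{\mathbf{P}}(D)|$. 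Thus condition (2) of the proposition is literally condition (2) of Lemma 2.6.

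For the direction $(1)\Longrightarrow(2)$, I would assume $\mathbf{P}$ is hierarchical and invoke $(1)\Longrightarrow(2)$ of Lemma 2.6 to conclude that $|\min_{\mathbf{P}}(B)|=|\min_{\mathbf{P}}(D)|$ whenever $B,D\in\mathcal{I}(\overline{\mathbf{P}})$ satisfy $|B|=|D|$; combining with the degree formula above yields (2). Conversely, for $(2)\Longrightarrow(1)$, any pair $B,D\in\mathcal{I}(\overline{\mathbf{P}})$ with $|B|=|D|$ must satisfy $\deg(\pi(\Omega,B))=\deg(\pi(\Omega,D))$, and then the degree formula forces $|\min_{\mathbf{P}}(B)|=|\min_{\mathbf{P}}(D)|$, so $(2)\Longrightarrow(1)$ of Lemma 2.6 delivers hierarchicality.

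There is no real obstacle here beyond verifying that Corollary 3.1 is applicable in this setting (which is immediate since $\Omega\in\mathcal{I}^{\mathbf{c}}(\Omega)$ trivially contains every $D\in\mathcal{I}(\overline{\mathbf{P}})$), and noting that unlike the proof of Theorem 2.3 we do not have to separately exhibit codewords $\alpha,\gamma\in\hat{\mathbf{H}}$ realizing prescribed ideal hulls — the polynomials $\pi(\Omega,D)$ are already parameterized directly by ideals $D\in\mathcal{I}(\overline{\mathbf{P}})$. The proof should therefore be essentially a one-line reduction plus citation of Corollary 3.1 and Lemma 2.6.
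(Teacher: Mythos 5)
Your proposal is correct and follows exactly the paper's own argument: apply Corollary 3.1 to get $\deg(\pi(\Omega,D))=n-|D|+|\min_{\mathbf{P}}(D)|$, so that for $|B|=|D|$ equality of degrees is equivalent to $|\min_{\mathbf{P}}(B)|=|\min_{\mathbf{P}}(D)|$, and then invoke the equivalence $(1)\Longleftrightarrow(2)$ of Lemma 2.6. No gaps; your observation that one need not realize the ideals by codewords (unlike Theorem 2.3) is also consistent with how the paper treats this as an immediate reduction.
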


\begin{proof}
For any $B,D\in\mathcal{I}(\mathbf{\overline{P}})$ with $|B|=|D|$, by Corollary 3.1, we have
$$\deg(\pi(\Omega,D))=\deg(\pi(\Omega,B))\Longleftrightarrow|\min(D)|=|\min(B)|.$$
Now the result immediately follows from Lemma 2.6.
\end{proof}

\begin{remark}
By Remark 3.1, it is straightforward to verify that Proposition 3.2 generalizes Theorem 2.3.
\end{remark}

\setlength{\parindent}{2em}
Now we establish a sufficient condition for $\mathbf{P}$ to be hierarchical with the assumption that $K$ is a field and $\tau_{(i)}\neq0$ for all $i\in\Omega$. The following proposition is an analogue of [7, Lemma 2.3] and [13. Lemma 1].

\setlength{\parindent}{0em}
\begin{proposition}
Suppose that $K$ is a field, $\tau_{(i)}\neq0$ for all $i\in\Omega$ and that
$$\mbox{$|B|=|D|\Longrightarrow\pi(\Omega,B)=\pi(\Omega,D)$ for all $B,D\in\mathcal{I}(\mathbf{\overline{P}})$}.$$
Then, $\mathbf{P}$ is hierarchical.
\end{proposition}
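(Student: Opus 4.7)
The plan is to prove the contrapositive: assume $\mathbf{P}$ is not hierarchical, and exhibit $B,D\in\mathcal{I}(\overline{\mathbf{P}})$ with $|B|=|D|$ but $\pi(\Omega,B)\neq\pi(\Omega,D)$, contradicting the hypothesis. By the equivalence $(1)\Leftrightarrow(2)$ in Lemma~2.6, non-hierarchicity furnishes $B,D\in\mathcal{I}(\overline{\mathbf{P}})$ with $|B|=|D|$ and $|\min(B)|>|\min(D)|$. Setting $Y_B:=\min(B)\cup(\Omega-B)$ and $Y_D:=\min(D)\cup(\Omega-D)$, Theorem~3.1 rewrites $\pi(\Omega,B)=\pi(Y_B,\min(B))$ and $\pi(\Omega,D)=\pi(Y_D,\min(D))$; since $|Y_B|>|Y_D|\geq\deg\pi(\Omega,D)$, the coefficient of $x^{|Y_B|}$ in $\pi(\Omega,D)$ is automatically $0$. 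It remains to show that the coefficient of $x^{|Y_B|}$ in $\pi(\Omega,B)$, namely $(-1)^{|\min(B)|}\prod_{i\in Y_B-\max(Y_B)}\tau_{(i)}\prod_{i\in\max(Y_B)-\min(B)}\eta_{(i)}$, is nonzero in $K$ --- equivalently, to find a pair $(B,D)$ additionally satisfying $\max(Y_B)=\min(B)$, so that the $\eta$-factor collapses to an empty product.

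To secure $\max(Y_B)=\min(B)$, I refine the choice of $B$ so that the antichain $\min(B)$ dominates $\Omega-B$ from above, i.e., every $a\in\Omega-B$ lies strictly below some element of $\min(B)$. When $|\max(\Omega)|\geq 2$, I take $B:=\max(\Omega)\in\mathcal{I}(\overline{\mathbf{P}})$, which automatically gives $Y_B=\Omega$ and $\max(Y_B)=\max(\Omega)=\min(B)$. A matching $D$ with $|D|=|B|$ and $|\min(D)|<|\min(B)|$ is then built by choosing a non-maximal $u$ with $|U_u|\leq|\max(\Omega)|$ (where $U_u:=\{w:u\preccurlyeq_\mathbf{P} w\}$) and setting $D:=U_u\cup E$ for some $E\subseteq\max(\Omega)-U_u$ of the required size; such $u$ can be taken to be a $\preccurlyeq_\mathbf{P}$-maximal element of $\{w\notin\max(\Omega):\exists v\in\max(\Omega),\,w\not\preccurlyeq_\mathbf{P} v\}$, since then any $w\succ_\mathbf{P} u$ must itself lie in $\max(\Omega)$ by maximality, forcing $|U_u|\leq|\max(\Omega)|$. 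In the remaining cases --- e.g.\ $|\max(\Omega)|=1$, or when the auxiliary set above is empty --- I instead set $B:=\langle A\rangle_{\overline{\mathbf{P}}}$ for a saturating antichain $A$ of $\mathbf{P}$ extracted from a non-hierarchy witness via Lemma~2.6(5), and produce $D$ by perturbing $A$ along a chain that breaks the non-hierarchical pattern inside $B$.

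The main obstacle will be carrying out this combinatorial construction uniformly across all non-hierarchical $\mathbf{P}$. The ``generic'' case $|\max(\Omega)|\geq 2$ with the auxiliary set non-empty is straightforward, but the degenerate cases demand a careful extremal choice of the saturating antichain $A$ and the matching ideal $D$, informed by the specific non-hierarchy witness provided by Lemma~2.6. This combinatorial bookkeeping, rather than the polynomial argument itself, is the most delicate part of the proof.
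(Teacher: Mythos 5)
There is a genuine gap: your explicit construction only rules out failures of hierarchy visible at the top level of $\mathbf{P}$. Taking $B=\max_{\mathbf{P}}(\Omega)$ and $D=U_u\cup E$ does produce, when $|\max_{\mathbf{P}}(\Omega)|\geqslant 2$ and your auxiliary set is nonempty, two dual ideals of equal size whose coefficients at $x^{|\Omega|}$ differ (and, correctly, the nonvanishing one involves only $\tau$'s); but all this contradiction shows is that every non-maximal element lies below every maximal element. That statement is far from hierarchicity, which must hold at every level. Indeed, when $|\max_{\mathbf{P}}(\Omega)|=1$ the top-level condition holds automatically for any poset, so in that case your generic construction says nothing even though $\mathbf{P}$ may be badly non-hierarchical lower down. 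The ``remaining cases'' paragraph --- a saturating antichain $A$, $B=\langle A\rangle_{\overline{\mathbf{P}}}$, and a $D$ obtained by ``perturbing $A$ along a chain'' --- is not a proof, and you cannot simply recurse on $\Omega-\max_{\mathbf{P}}(\Omega)$ either, because the hypothesis concerns $\pi(\Omega,\cdot)$ for the full poset and does not transfer to the restricted poset without an argument. The difficulty you flag as ``combinatorial bookkeeping'' is in fact the mathematical content: at lower degrees a coefficient of $\pi(\Omega,\cdot)$ is a sum of many $\varphi$-terms, and with $\eta_{(i)}$ possibly zero you have no control over whether the two sums differ unless the examined coefficient collapses to a single term with an empty $\eta$-product. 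Your first-paragraph reduction via Lemma 2.6 $(1)\Leftrightarrow(2)$ and degrees is also a false start for the same reason (degree formulas need $\eta_{(i)}\neq 0$, i.e.\ Corollary 3.1), as you yourself notice before pivoting.

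The paper's proof supplies exactly the missing device, and it works uniformly at every level with no case analysis. One verifies condition (3) of Lemma 2.6: fix any $A\subseteq\Omega$ with $a\preccurlyeq_{\mathbf{P}}b$ for all $a\in A$, $b\in\Omega-A$. Such an $A$ is a $\mathbf{P}$-ideal and is the \emph{unique} ideal of cardinality $|A|$, so for every $M\in\mathcal{I}(\overline{\mathbf{P}})$ the coefficient of $x^{|A|}$ in $\pi(\Omega,M)$ is the single term $\varphi(M,A)$. Taking $B=(\Omega-A)\cup\max_{\mathbf{P}}(A)$ gives $\varphi(B,A)=(-1)^{|\max_{\mathbf{P}}(A)|}\prod_{i\in A-\max_{\mathbf{P}}(A)}\tau_{(i)}\neq 0$ (the $\eta$-product is empty --- the collapse you were trying to engineer), while if hierarchy fails inside $A$ one picks $u\in\max_{\mathbf{P}}(A-\max_{\mathbf{P}}(A))$, $y\in\max_{\mathbf{P}}(A)$ with $u\not\preccurlyeq_{\mathbf{P}}y$ and sets $D=(B-\{y\})\cup\{u\}$, so that $u\in D\cap A$ is non-maximal in $A$ and $\varphi(D,A)=0$. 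This yields $|B|=|D|$ but $\pi(\Omega,B)\neq\pi(\Omega,D)$ for an arbitrary level cut $A$, which is what your argument achieves only for $A=\Omega$. To repair your proposal you would need to replace the top-level construction and the vague fallback by this (or an equivalent) level-by-level construction.
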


\begin{proof}
Fixing $A\subseteq\Omega$ such that $a\preccurlyeq_{\mathbf{P}}b$ for all $a\in A$, $b\in \Omega-A$, we will show
\begin{equation}\text{$c\preccurlyeq_{\mathbf{P}}d$ for all $c\in A-\max(A)$, $d\in \max(A)$}.\end{equation}
First of all, it is straightforward to verify that $A\in\mathcal{I}(\mathbf{P})$, and
\begin{equation}\forall~I\in\mathcal{I}(\mathbf{P}),~|I|=|A|\Longrightarrow I=A.\end{equation}
Let $B=(\Omega-A)\cup\max(A)$. Then, it can be readily verified that $B\in\mathcal{I}(\mathbf{\overline{P}})$, $B\cap A=\max(A)$, which, together with (3.9) and the fact $\tau_{(i)}\neq0$ for all $i\in\Omega$, yields that $\left(\pi(\Omega,B)\right)_{(|A|)}=\varphi(B,A)\neq0$. Now, by way of contradiction, we assume that (3.8) does not hold. Then, we can choose $u\in\max(A-\max(A))$ and $y\in \max(A)$ such that $u\not\preccurlyeq_{\mathbf{P}}y$. Let $D=\left(B-\{y\}\right)\cup\{u\}$. Then, we can verify that $D\in\mathcal{I}(\mathbf{\overline{P}})$, $|D|=|B|$. By $u\in D\cap A$, $u\not\in \max(A)$ and (3.9), we deduce that $\left(\pi(\Omega,D)\right)_{(|A|)}=\varphi(D,A)=0$. Hence, we conclude that $\pi(\Omega,B)\neq\pi(\Omega,D)$, $B,D\in\mathcal{I}(\mathbf{\overline{P}})$, $|B|=|D|$, a contradiction, as desired. It then follows that (3.8) holds true. Now we apply Lemma 2.6 and reach $\mathbf{P}$ is hierarchical, completing the proof.
\end{proof}

\setlength{\parindent}{2em}
For further discussion, throughout the rest of this subsection, we let $m$ denote the largest cardinality of a chain in $\mathbf{P}$, and for any $j\in[1,m]$, we let $W_{j}=\{u\mid u\in\Omega,~\len(u)=j\}$. Moreover, for any $D\subseteq\Omega$, we let $\sigma(D)$ denote the integer $r\in[1,m]$ such that $D\subseteq \bigcup_{j=r}^{m}W_{j}$ and for any $p\in[r+1,m]$, $D\not\subseteq \bigcup_{j=p}^{m}W_{j}$.

\setlength{\parindent}{0em}
\begin{lemma}
If $\mathbf{P}$ is hierarchical, then for $D\in\mathcal{I}(\mathbf{\overline{P}})$ with $\sigma(D)=r$, $\pi(\Omega,D)$ is equal to
\begin{eqnarray*}
\begin{split}
&\left(\prod_{i\in \left(\bigcup_{j=1}^{r-1}W_{j}\right)}\tau_{(i)}x\right)\left((1_{K}-x)^{|W_{r}\cap D|}\left(\prod_{i\in W_{r}-D}\left(1_{K}+\eta_{(i)}x\right)\right)-1_{K}\right)\\
&+\left(\sum_{t=1}^{r-1}\left(\prod_{i\in \left(\bigcup_{j=1}^{t-1}W_{j}\right)}\tau_{(i)}x\right)\left(\left(\prod_{i\in W_{t}}\left(1_{K}+\eta_{(i)} x\right)\right)-1_{K}\right)\right)+1_{K}.
\end{split}
\end{eqnarray*}
\end{lemma}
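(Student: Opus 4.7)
The plan is to enumerate $\mathcal{I}(\mathbf{P})$ directly using the hierarchical structure and evaluate the defining sum $\pi(\Omega,D)=\sum_{I\in\mathcal{I}(\mathbf{P})}\varphi(D,I)x^{|I|}$ term by term. First I would apply Lemma 2.5 to obtain $\min(D)=D\cap W_r$ and $D=(D\cap W_r)\cup\bigcup_{j=r+1}^{m}W_j$, and then invoke Theorem 3.1 with $X=\Omega$: since $\Omega-D=\bigcup_{j=1}^{r-1}W_j\cup(W_r-D)$, one has $Y\triangleq(\Omega-D)\cup\min(D)=\bigcup_{j=1}^{r}W_j$, so that $\pi(\Omega,D)=\pi(Y,\min(D))$. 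This reduces the problem to computing $\pi(Y,M)$, where $M\triangleq\min(D)\subseteq W_r=\max(Y)$, inside the hierarchical sub-poset whose level sets are exactly $W_1,\dots,W_r$.

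Next I would exploit the fact that a hierarchical ideal is determined by its top level. Every nonempty $I\in\mathcal{I}(Y)$ has a unique $t\in[1,r]$ and a unique nonempty $S\subseteq W_t$ with $I=\bigcup_{j=1}^{t-1}W_j\cup S$, so $\max(I)=S$ and $I-\max(I)=\bigcup_{j=1}^{t-1}W_j$. The condition $I\cap M\subseteq\max(I)$ appearing in the definition of $\varphi$ holds automatically: for $t<r$ it is $\emptyset\subseteq S$ because $M\subseteq W_r$ is disjoint from $I$, and for $t=r$ it is $S\cap M\subseteq S$. The empty ideal contributes $1_K$. For $t\in[1,r-1]$, the corresponding piece equals $\left(\prod_{i\in\bigcup_{j=1}^{t-1}W_j}\tau_{(i)}x\right)\sum_{\emptyset\neq S\subseteq W_t}\left(\prod_{i\in S}\eta_{(i)}\right)x^{|S|}=\left(\prod_{i\in\bigcup_{j=1}^{t-1}W_j}\tau_{(i)}x\right)\bigl(\prod_{i\in W_t}(1_K+\eta_{(i)}x)-1_K\bigr)$, which is precisely the $t$-th summand in the middle sum of the claimed formula. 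For $t=r$ I would factor each $S=S_1\sqcup S_2$ with $S_1=S\cap M\subseteq M$ and $S_2=S-M\subseteq W_r-M$, so that the inner sum factorises as $\left(\sum_{S_1\subseteq M}(-1)^{|S_1|}x^{|S_1|}\right)\left(\sum_{S_2\subseteq W_r-M}\prod_{i\in S_2}\eta_{(i)}x^{|S_2|}\right)=(1_K-x)^{|M|}\prod_{i\in W_r-M}(1_K+\eta_{(i)}x)$; combined with the prefactor $\prod_{i\in\bigcup_{j=1}^{r-1}W_j}\tau_{(i)}x$, the identity $W_r-M=W_r-D$, and the $-1_K$ produced by excluding $S=\emptyset$, this reproduces the first bracket of the stated formula. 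Summing the three contributions then yields the claim.

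The main obstacle is the $t=r$ case, specifically recognising the product decomposition: one must observe that the sign $(-1)^{|S\cap M|}$ together with the monomial $x^{|S\cap M|}$ collapses, once summed over $S_1\subseteq M$, to the factor $(1_K-x)^{|M|}$, while simultaneously keeping track that the $S=\emptyset$ contribution is not part of the $t=r$ family and must be subtracted off as a single $1_K$. Everything else follows by routine expansion of products and sums, and in particular no additional assumption on $K$, $\tau$, or $\eta$ beyond what is built into (3.1)--(3.2) is required.
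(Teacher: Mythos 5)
Your proposal is correct and follows essentially the same route as the paper: invoke Lemma 2.5 to get $D=(D\cap W_r)\cup\bigcup_{j=r+1}^{m}W_j$ and $\min(D)=D\cap W_r$, reduce via Theorem 3.1, and parametrize the nonempty ideals by their top level and a nonempty subset of that level (the paper's map $g$), then compute. The only difference is that you carry out in detail the factorization over $S\cap M$ and $S-M$ that the paper dismisses as "straightforward computation which we omit," and that computation is done correctly.
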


\begin{proof}
We define $g:\bigcup_{t=1}^{m}\{t\}\times \left(2^{W_{t}}-\{\emptyset\}\right)\longrightarrow 2^{\Omega}$ as $$\mbox{$g(t,V)=\left(\bigcup_{j=1}^{t-1}W_{j}\right)\cup V$ for all $t\in[1,m]$, $V\subseteq W_{t}$, $V\neq\emptyset$}.$$
Since $\mathbf{P}$ is hierarchical, by Lemma 2.5 and (3) of Lemma 2.2, we deduce that $g$ is injective and the range of $g$ is equal to $\mathcal{I}(\mathbf{P})-\{\emptyset\}$. Again by Lemma 2.5, we deduce that $D=(D\cap W_{r})\cup(\bigcup_{j=r+1}^{m}W_{j})$ and $D\cap W_{r}=\min_{\mathbf{P}}(D)$. Now the result follows from Theorem 3.1, together with some straightforward computation which we omit.
\end{proof}

\setlength{\parindent}{2em}
The following proposition generalizes Theorem 2.5.

\setlength{\parindent}{0em}
\begin{proposition}
Assume that $\mathbf{P}$ is hierarchical, $K$ is a field and $\tau_{(i)}\neq0$, $\eta_{(i)}\neq-1_{K}$ for all $i\in \Omega$. Fix $D,C\in\mathcal{I}(\mathbf{\overline{P}})$. Then, the following two conditions are equivalent to each other:

{\bf{(1)}}\,\,$\pi(\Omega,C)=\pi(\Omega,D)$;

{\bf{(2)}}\,\,There exists a bijection $\varepsilon:C\longrightarrow D$ such that $\eta_{(i)}=\eta_{(\varepsilon(i))}$ for all $i\in C$.
\end{proposition}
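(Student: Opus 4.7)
The plan is to prove the proposition using the explicit formula for $\pi(\Omega, D)$ provided by Lemma 3.2, combined with unique factorization in the polynomial ring $K[x]$. The hypothesis $\eta_{(i)} \neq -1_K$ will play a pivotal role throughout. Recall from Lemma 2.5 that for hierarchical $\mathbf{P}$, any nonempty $D \in \mathcal{I}(\overline{\mathbf{P}})$ admits the canonical decomposition $D = (D \cap W_{r}) \cup \bigcup_{j > r} W_{j}$ with $r = \sigma(D)$ and $1 \leqslant |D \cap W_{r}| \leqslant |W_{r}|$, where $W_j$ and $\sigma$ are as defined in the paragraph preceding Lemma 3.2.

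The direction $(2) \Rightarrow (1)$ is the easier half. The $\eta$-preserving bijection $\varepsilon$ immediately yields $|C| = |D|$ and the multiset equality $\{\eta_{(i)} : i \in C\} = \{\eta_{(i)} : i \in D\}$. Since all $W_j$ are nonempty for $j \in [1, m]$, the size-ranges $[1 + \sum_{j > r}|W_j|,\, \sum_{j \geqslant r}|W_j|]$ for distinct $r$ are pairwise disjoint and consecutive in $\mathbb{Z}$, so $|C| = |D|$ forces $\sigma(C) = \sigma(D) = r$ (the case $C = D = \emptyset$ being trivial). Cancelling the common part $\bigcup_{j > r} W_j$ from the multiset identity yields $\{\eta_{(i)} : i \in W_r \cap C\} = \{\eta_{(i)} : i \in W_r \cap D\}$, and complementation within $W_r$ then gives $\{\eta_{(i)} : i \in W_r - C\} = \{\eta_{(i)} : i \in W_r - D\}$. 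Substituting these equalities into the formula of Lemma 3.2 yields $\pi(\Omega, C) = \pi(\Omega, D)$.

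For the direction $(1) \Rightarrow (2)$, the main obstacle is to show that $\pi(\Omega, C) = \pi(\Omega, D)$ already forces $\sigma(C) = \sigma(D)$. Assume without loss of generality that $C$ is nonempty, set $s = \sigma(C)$, and suppose for contradiction that $D \cap W_s = \emptyset$ (covering both $\sigma(D) > s$ and $D = \emptyset$). Using (3.3) together with the explicit description of $\mathcal{I}(\mathbf{P})$ for hierarchical $\mathbf{P}$ derived in the proof of Lemma 3.2, the coefficient of $x^{l}$ in $\pi(\Omega, \cdot)$ for each $l$ with $|\bigcup_{j < s} W_j| < l \leqslant |\bigcup_{j \leqslant s} W_j|$ is a sum over $V \subseteq W_s$ with $|V| = l - |\bigcup_{j < s} W_j|$, via direct computation of $\varphi(C, \bigcup_{j < s} W_j \cup V)$ and $\varphi(D, \bigcup_{j < s} W_j \cup V)$. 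Equating these coefficients from $\pi(\Omega, C)$ and $\pi(\Omega, D)$, after cancelling the common nonzero prefactor $\prod_{i \in \bigcup_{j < s} W_j} \tau_{(i)}$, reduces (as $|V|$ ranges over $1, \ldots, |W_s|$) to the polynomial identity
$$\prod_{i \in W_s}(1_K + \eta'_{(i)} y) = \prod_{i \in W_s}(1_K + \eta_{(i)} y) \text{ in } K[y],$$
where $\eta'_{(i)} = -1_K$ if $i \in W_s \cap C$ and $\eta'_{(i)} = \eta_{(i)}$ otherwise. Unique factorization in $K[y]$ forces the multisets $\{\eta'_{(i)} : i \in W_s\}$ and $\{\eta_{(i)} : i \in W_s\}$ to coincide; but the former contains $-1_K$ with multiplicity $|W_s \cap C| \geqslant 1$, while the latter contains $-1_K$ with multiplicity $0$ by the hypothesis $\eta_{(i)} \neq -1_K$, a contradiction. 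Hence $\sigma(C) = \sigma(D) = r$.

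With $\sigma(C) = \sigma(D) = r$ in hand, the remainder is a routine unique factorization argument. Lemma 3.2 gives
$$\pi(\Omega, C) - \pi(\Omega, D) = T_r(x)\bigl[(1-x)^{|W_r \cap C|}\prod_{i \in W_r - C}(1+\eta_{(i)} x) - (1-x)^{|W_r \cap D|}\prod_{i \in W_r - D}(1+\eta_{(i)} x)\bigr],$$
where $T_r(x)$ is a nonzero monomial in $K[x]$. Cancelling $T_r$ in the domain $K[x]$, invoking $\eta_{(i)} \neq -1_K$ (so that $(1-x) \nmid (1+\eta_{(i)} x)$ for any $i$), and comparing the multiplicity of the irreducible factor $(1-x)$ on both sides, one obtains $|W_r \cap C| = |W_r \cap D|$. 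Cancelling this common power of $(1-x)$ reduces the identity to $\prod_{i \in W_r - C}(1+\eta_{(i)} x) = \prod_{i \in W_r - D}(1+\eta_{(i)} x)$; since $(1+ax)$ and $(1+bx)$ with $a, b \neq 0$ are associate in $K[x]$ only if $a = b$, unique factorization yields the multiset equality $\{\eta_{(i)} : i \in W_r - C\} = \{\eta_{(i)} : i \in W_r - D\}$, and complementation in $W_r$ gives the analogous identity for $W_r \cap C$ and $W_r \cap D$. The desired bijection $\varepsilon : C \to D$ is then obtained as the identity on $\bigcup_{j > r} W_j$ together with any $\eta$-preserving bijection between $W_r \cap C$ and $W_r \cap D$.
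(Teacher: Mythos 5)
Your proof is correct and follows essentially the same route as the paper: both reduce, via the hierarchical decomposition of Lemma 2.5 and the formula of Lemma 3.2, to comparing $(1_{K}-x)^{|W_r\cap\,\cdot\,|}\prod_{i\in W_r-\cdot}\left(1_{K}+\eta_{(i)}x\right)$ for $C$ and $D$, and then use $\eta_{(i)}\neq-1_{K}$ together with unique factorization in $K[x]$ to force $\sigma(C)=\sigma(D)$, $|W_r\cap C|=|W_r\cap D|$, and the $\eta$-preserving bijection. The only cosmetic difference is that for the step $\sigma(C)=\sigma(D)$ you extract the relevant window of coefficients directly from (3.3) (elementary symmetric functions over $W_s$) instead of invoking Lemma 3.2 there as the paper does, and your appeals to ``multiset equality'' via unique factorization need only the trivial extra remark that zero values of $\eta$ are accounted for by the already-established equality of cardinalities.
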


\begin{proof}
Throughout the proof, we let $r=\sigma(D)$.

$(1)\Longrightarrow(2)$\,\,First, we show $\sigma(C)=r$. Without loss of generality, we assume $r\leqslant \sigma(C)$. If $r+1\leqslant \sigma(C)$, then by Lemma 3.2 and the fact that $K$ is a field and $\tau_{(i)}\neq0$ for all $i\in \Omega$, we deduce that
$$(1_{K}-x)^{|W_{r}\cap D|}\left(\prod_{i\in W_{r}-D}\left(1_{K}+\eta_{(i)}x\right)\right)=\prod_{i\in W_{r}}\left(1_{K}+\eta_{(i)}x\right).$$
Since $\eta_{(i)}\neq-1_{K}$ for all $i\in \Omega$, we deduce that $|W_{r}\cap D|=0$ and hence $W_{r}\cap D=\emptyset$. Noticing that $r=\sigma(D)$, we deduce that $r=m$, contrary to $r+1\leqslant \sigma(C)$. The above discussion implies that $\sigma(C)=r$. Now again by Lemma 3.2, we deduce that
$$\hspace*{-4mm}(1_{K}-x)^{|W_{r}\cap D|}\left(\prod_{i\in W_{r}-D}\left(1_{K}+\eta_{(i)}x\right)\right)=(1_{K}-x)^{|W_{r}\cap C|}\left(\prod_{i\in W_{r}-C}\left(1_{K}+\eta_{(i)}x\right)\right).$$
Noticing that $\eta_{(i)}\neq-1_{K}$ for all $i\in \Omega$, we deduce that $|W_{r}\cap C|=|W_{r}\cap D|$, $\prod_{i\in W_{r}-C}\left(1_{K}+\eta_{(i)} x\right)=\prod_{i\in W_{r}-D}\left(1_{K}+\eta_{(i)} x\right)$, which further implies that
$$\prod_{i\in W_{r}\cap C}\left(1_{K}+\eta_{(i)} x\right)=\prod_{i\in W_{r}\cap D}\left(1_{K}+\eta_{(i)} x\right).$$
Hence, there exists a bijection $\mu:W_{r}\cap C\longrightarrow W_{r}\cap D$ such that $\eta_{(i)}=\eta_{(\mu(i))}$ for all $i\in W_{r}\cap C$. Now define $\varepsilon=\mu\cup\{(i,i)\mid i\in\bigcup_{j=r+1}^{m}W_{j}\}$. Since $\mathbf{P}$ is hierarchical, by Lemma 2.5, we deduce that $\varepsilon$ is a bijection from $C$ to $D$ such that $\eta_{(i)}=\eta_{(\varepsilon(i))}$ for all $i\in C$, proving (2).

$(2)\Longrightarrow(1)$\,\,The existence of $\varepsilon$ implies that $|C|=|D|$ and
\begin{equation}\prod_{i\in C}\left(1_{K}+\eta_{(i)} x\right)=\prod_{i\in D}\left(1_{K}+\eta_{(i)} x\right).\end{equation}
Since $\mathbf{P}$ is hierarchical, by Lemma 2.5, we deduce that $\sigma(C)=\sigma(D)=r$, $|W_{r}\cap C|=|W_{r}\cap D|$. Moreover, by Lemma 2.5 and (3.10), we deduce that
$$\prod_{i\in W_{r}\cap C}\left(1_{K}+\eta_{(i)} x\right)=\prod_{i\in W_{r}\cap D}\left(1_{K}+\eta_{(i)} x\right),$$
which further implies that
$$\prod_{i\in W_{r}-C}\left(1_{K}+\eta_{(i)} x\right)=\prod_{i\in W_{r}-D}\left(1_{K}+\eta_{(i)} x\right).$$
Now (1) immediately follows from Lemma 3.2.
\end{proof}

\setlength{\parindent}{2em}
Proposition 3.4 will be used to prove Theorem 2.5 in Section 3.4. Now, combining Propositions 3.3 and 3.4, we have the following theorem.

\setlength{\parindent}{0em}
\begin{theorem}
Assume that $K$ is a field and $\tau_{(i)}\neq0$, $\eta_{(i)}\neq-1_{K}$ for all $i\in \Omega$. Then, the following three conditions are equivalent to each other:

{\bf{(1)}}\,\,For any $C,D\in\mathcal{I}(\mathbf{\overline{P}})$, $|C|=|D|\Longrightarrow\pi(\Omega,C)=\pi(\Omega,D)$;

{\bf{(2)}}\,\,For any $C,D\in\mathcal{I}(\mathbf{\overline{P}})$, $\pi(\Omega,C)=\pi(\Omega,D)\Longleftrightarrow |C|=|D|$;

{\bf{(3)}}\,\,$\mathbf{P}$ is hierarchical, and for any $u,v\in\Omega$ with $\len(u)=\len(v)$, it holds that $\eta_{(u)}=\eta_{(v)}$.
\end{theorem}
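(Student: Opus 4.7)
The argument will proceed by establishing the chain of implications $(2)\Longrightarrow(1)\Longrightarrow(3)\Longrightarrow(2)$. The first of these is immediate, since (1) is precisely one direction of the biconditional in (2). The other two will rest on Propositions 3.3 and 3.4, together with the explicit formula for $\pi(\Omega,D)$ afforded by Lemma 3.2 when $\mathbf{P}$ is hierarchical.

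For $(1)\Longrightarrow(3)$, I would first apply Proposition 3.3 directly to (1) to conclude that $\mathbf{P}$ is hierarchical. To show that $\eta$ is constant on each level, fix $u,v\in\Omega$ with $\len(u)=\len(v)=t$, and set $B=\{u\}\cup\bigcup_{j=t+1}^{m}W_{j}$, $D=\{v\}\cup\bigcup_{j=t+1}^{m}W_{j}$. By Lemma 2.5, both lie in $\mathcal{I}(\mathbf{\overline{P}})$ with $\sigma(B)=\sigma(D)=t$ and $|B|=|D|$, so by (1) we have $\pi(\Omega,B)=\pi(\Omega,D)$. Applying Lemma 3.2 to each side, everything is common except the factor $(1_{K}-x)\prod_{i\in W_{t}-B}(1_{K}+\eta_{(i)}x)$ versus $(1_{K}-x)\prod_{i\in W_{t}-D}(1_{K}+\eta_{(i)}x)$ inside the leading bracket. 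After cancelling the nonzero prefactor $\prod_{i\in\bigcup_{j=1}^{t-1}W_{j}}\tau_{(i)}x$ in the integral domain $K[x]$, the identity reduces to
$$(1_{K}-x)\prod_{i\in W_{t}-\{u\}}(1_{K}+\eta_{(i)}x)=(1_{K}-x)\prod_{i\in W_{t}-\{v\}}(1_{K}+\eta_{(i)}x).$$
Dividing out $(1_{K}-x)$ and then the common nonzero factor $\prod_{i\in W_{t}-\{u,v\}}(1_{K}+\eta_{(i)}x)$ (each $1_{K}+\eta_{(i)}x$ is a nonzero polynomial) yields $1_{K}+\eta_{(u)}x=1_{K}+\eta_{(v)}x$, hence $\eta_{(u)}=\eta_{(v)}$.

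For $(3)\Longrightarrow(2)$, the direction $\pi(\Omega,C)=\pi(\Omega,D)\Longrightarrow|C|=|D|$ is immediate from Proposition 3.4, which under (3) supplies a bijection $C\to D$. Conversely, assume $C,D\in\mathcal{I}(\mathbf{\overline{P}})$ with $|C|=|D|$. By Lemma 2.5 and hierarchy, any $E\in\mathcal{I}(\mathbf{\overline{P}})$ with $\sigma(E)=r$ has cardinality in the interval $[1+\sum_{j>r}|W_{j}|,\,\sum_{j\geqslant r}|W_{j}|]$, and these intervals are pairwise disjoint as $r$ varies, so $|C|=|D|$ forces $\sigma(C)=\sigma(D)=:r$ and consequently $|W_{r}\cap C|=|W_{r}\cap D|$. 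Since (3) makes $\eta$ constant on $W_{r}$, any bijection $W_{r}\cap C\to W_{r}\cap D$, extended by the identity on $\bigcup_{j>r}W_{j}$, gives a bijection $\varepsilon:C\to D$ with $\eta_{(i)}=\eta_{(\varepsilon(i))}$ for all $i\in C$; Proposition 3.4 then yields $\pi(\Omega,C)=\pi(\Omega,D)$.

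The delicate step is the reduction in $(1)\Longrightarrow(3)$: one has to isolate the $B$- versus $D$-dependent pieces in the Lemma 3.2 expansion and then justify the polynomial cancellations. The hypotheses that $K$ is a field and $\tau_{(i)}\neq0$ keep all relevant factors nonzero in the integral domain $K[x]$, while the separate hypothesis $\eta_{(i)}\neq -1_{K}$ is what makes Proposition 3.4 available for the $(3)\Longrightarrow(2)$ half. Beyond that, everything is routine bookkeeping on top of Propositions 3.3, 3.4, and Lemma 3.2.
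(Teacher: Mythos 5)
Your proposal is correct and follows essentially the same route as the paper: Proposition 3.3 yields that $\mathbf{P}$ is hierarchical, and Lemma 2.5 together with Proposition 3.4 (via the level-preserving bijection $\varepsilon$) handles the remaining implications. The only cosmetic difference is that for the $\eta$-constancy step in $(1)\Longrightarrow(3)$ you cancel the common factors in the Lemma 3.2 expansion directly in the integral domain $K[x]$, whereas the paper invokes Proposition 3.4 and counts the indices $i$ with $\eta_{(i)}=\eta_{(c)}$ — the same computation in substance.
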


\begin{proof}
$(2)\Longrightarrow(1)$\,\,This is trivial.

$(3)\Longrightarrow(1)$\,\,We fix $C,D\in\mathcal{I}(\overline{\mathbf{P}})$ such that $|C|=|D|$. Since $\mathbf{P}$ is hierarchical, by Lemma 2.5, we deduce that $\sigma(D)=\sigma(C)=r$ for some $r\in[1,m]$, and moreover, it holds that $|W_{r}\cap C|=|W_{r}\cap D|$. Now let $\mu:W_{r}\cap C\longrightarrow W_{r}\cap D$ be any fixed bijective map, and define $\varepsilon=\mu\cup\{(i,i)\mid i\in\bigcup_{j=r+1}^{m}W_{j}\}$. Since $\mathbf{P}$ is hierarchical, by Lemma 2.5, together with the assumption in (3), we deduce that $\varepsilon$ is a bijection from $C$ to $D$ such that $\eta_{(i)}=\eta_{(\varepsilon(i))}$ for all $i\in C$. Now by Proposition 3.4, we conclude that $\pi(\Omega,C)=\pi(\Omega,D)$, proving (1).

$(1)\Longrightarrow(2)$ $\wedge$ $(3)$\,\,First, by Proposition 3.3, we deduce that $\mathbf{P}$ is hierarchical. Now (2) immediately follows from Proposition 3.4. Next, for a fixed $r\in[1,m]$ and $b,c\in W_{r}$, we show that $\eta_{(b)}=\eta_{(c)}$. Define $D,C\in\mathcal{I}(\mathbf{\overline{P}})$ as
$$\mbox{$D=\{b\}\cup(\bigcup_{j=r+1}^{m}W_{j})$, $C=\{c\}\cup(\bigcup_{j=r+1}^{m}W_{j})$.}$$
Clearly, we have $|C|=|D|$. By (1), we deduce that $\pi(X,C)=\pi(X,D)$, which, together with Proposition 3.4, implies that
$$|\{i\mid i\in D,\eta_{(i)}=\eta_{(c)}\}|=|\{i\mid i\in C,\eta_{(i)}=\eta_{(c)}\}|,$$
which further implies that
$$|\{i\mid i\in \{b\},\eta_{(i)}=\eta_{(c)}\}|=|\{i\mid i\in \{c\},\eta_{(i)}=\eta_{(c)}\}|=1.$$
Therefore we conclude that $\eta_{(b)}=\eta_{(c)}$, proving (3).
\end{proof}

\begin{remark}
With the help of Lemma 2.1 and Remark 3.1, Theorem 3.3 recovers [4, Theorems 5.4 and 5.5].
\end{remark}

\subsection{The case $K=\mathbb{R}$}

\setlength{\parindent}{2em}
Throughout this subsection, we focus on the special case that $K=\mathbb{R}$, where we further assume that
\begin{equation}\text{$\tau_{(i)}>0$, $\eta_{(i)}>-1$, $\eta_{(i)}\neq0$ for all $i\in\Omega$}.\end{equation}

We begin by defining a total order on $\mathbb{R}[x]$, denoted by $\curlyeqprec$, such that for any $f,g\in\mathbb{R}[x]$, $f\curlyeqprec g$ if and only if one of the following three conditions holds:

{\bf{1)}}\,\,$f=g$;

{\bf{2)}}\,\,$\deg(f)\leqslant\deg(g)-1$;

{\bf{3)}}\,\,$\deg(f)=\deg(g)$, and there exists $k\in\mathbb{N}$ such that $f_{(k)}<g_{(k)}$ and
$$f\equiv g~(\bmod~x^{k}).$$

\setlength{\parindent}{0em}
\begin{lemma}
For $Y\subseteq\Omega$, $A\subsetneqq \max(Y)$ and $e\in\max(Y)-A$, we have
$$\pi(Y,A\cup\{e\})\neq\pi(Y,A),~\pi(Y,A\cup\{e\})\curlyeqprec\pi(Y,A).$$
\end{lemma}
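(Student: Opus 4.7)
The plan is to leverage Proposition 3.1 with $L = A$. First I would check that the hypotheses of that proposition are met: $A \in \mathcal{I}^{\mathbf{c}}(Y)$ because $A \subseteq \max(Y)$ forces every element of $A$ to be already maximal in $Y$, so the required up-closure property is trivial; and $e \in \max(Y) - A$ is given. Proposition 3.1 then produces
$$\pi(Y, A) - \pi(Y, A \cup \{e\}) = (\eta_{(e)} + 1)\left(\prod_{i \in \langle\{e\}\rangle_{Y} - \{e\}} \tau_{(i)}\right) x^{|\langle\{e\}\rangle_{Y}|}\,\pi(Y - \langle\{e\}\rangle_{Y}, A).$$
Setting $k = |\langle\{e\}\rangle_{Y}| \geqslant 1$ and $c = (\eta_{(e)} + 1)\prod_{i \in \langle\{e\}\rangle_{Y} - \{e\}} \tau_{(i)}$, the standing hypothesis (3.11) forces $c > 0$.

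The next step is to identify the smallest-degree nonzero term of the difference. Because $A$ is an antichain of maximal elements of $Y$ and every element of $\langle\{e\}\rangle_{Y}$ other than $e$ lies strictly below the maximal element $e$, we have $A \cap \langle\{e\}\rangle_{Y} = \emptyset$, so $\pi(Y - \langle\{e\}\rangle_{Y}, A)$ is well defined. By (3.3) combined with (3.1), its constant term equals $\varphi(A, \emptyset) = 1$. Consequently the lowest-degree term of $\pi(Y, A) - \pi(Y, A \cup \{e\})$ is $c x^{k}$ with $c > 0$, which already yields $\pi(Y, A \cup \{e\}) \neq \pi(Y, A)$.

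Finally, I would apply Corollary 3.1 to both $A$ and $A \cup \{e\}$, which lie in $\mathcal{I}^{\mathbf{c}}(Y)$ and coincide with their own $\min$'s (being antichains in $Y$), giving $\deg \pi(Y, A) = |Y| = \deg \pi(Y, A \cup \{e\})$. So the two polynomials share the same degree, agree on every coefficient of $x^{j}$ for $j < k$ (since the difference has no such terms), and at degree $k$ the relation $(\pi(Y, A))_{(k)} - (\pi(Y, A \cup \{e\}))_{(k)} = c > 0$ holds. This is exactly condition 3) in the definition of $\curlyeqprec$, yielding $\pi(Y, A \cup \{e\}) \curlyeqprec \pi(Y, A)$. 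I do not foresee any serious obstacle; the only point requiring any vigilance is the antichain observation that simultaneously justifies $A, A \cup \{e\} \in \mathcal{I}^{\mathbf{c}}(Y)$, $A \cap \langle\{e\}\rangle_{Y} = \emptyset$, and the degree computation via Corollary 3.1.
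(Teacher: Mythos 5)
Your proposal is correct and follows essentially the same route as the paper: equal degrees via Corollary 3.1 (using that $A$ and $A\cup\{e\}$ are antichains of maximal elements), then Proposition 3.1 together with (3.11) to get $\pi(Y,A)\equiv\pi(Y,A\cup\{e\})\pmod{x^{k}}$ with a strictly larger coefficient of $x^{k}$ for $\pi(Y,A)$, and finally condition 3) of $\curlyeqprec$. The extra details you supply (well-definedness of $\pi(Y-\langle\{e\}\rangle_{Y},A)$, the constant term $\varphi(A,\emptyset)=1$, positivity of the factor $c$) are exactly what the paper leaves implicit.
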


\begin{proof}
By Corollary 3.1, we have $\deg(\pi(Y,A\cup\{e\}))=|Y|=\deg(\pi(Y,A))$. Now let $k=|\langle\{e\}\rangle_{Y}|$. Then, by Proposition 3.1 and (3.11), we deduce that $\pi(Y,A)\equiv \pi(Y,A\cup\{e\})~(\bmod~x^{k})$, $(\pi(Y,A))_{(k)}>(\pi(Y,A\cup\{e\}))_{(k)}$. Now the lemma immediately follows from the definition of $\curlyeqprec$.
\end{proof}

\setlength{\parindent}{2em}
Now we are ready to prove the first main result in this subsection.

\setlength{\parindent}{0em}
\begin{theorem}
For $X\subseteq\Omega$ and $D,A\in\mathcal{I}^{\mathbf{c}}(X)$ with $A\subsetneqq D$, we have
$$\pi(X,D)\neq\pi(X,A),~\pi(X,D)\curlyeqprec\pi(X,A).$$
\end{theorem}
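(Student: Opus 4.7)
The plan is to induct on $s\triangleq|D|-|A|\geqslant 1$. In the inductive step $s\geqslant 2$, I would pick $e\in D-A$ maximal in $D-A$ with respect to $\preccurlyeq_{\mathbf{P}}$, and set $A'=A\cup\{e\}$. The key verification is $A'\in\mathcal{I}^{\mathbf{c}}(X)$: any element of $X$ strictly above $e$ lies in $D$ by upward-closedness of $D$ in $X$, and, not being an element of $D-A$ strictly above $e$ by maximality, lies in $A\subseteq A'$. Applying the base case to $(A,A')$ and the inductive hypothesis to $(A',D)$, then invoking transitivity of the strict version of $\curlyeqprec$ (which is routine to check from its three-clause definition), closes the induction.

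The heart of the argument is the base case $s=1$, where $D=A\cup\{e\}$. Because $X-A=(X-D)\cup\{e\}$ and both sets are ideals of the poset restricted to $X$, and because every element of $X$ strictly above $e$ must lie in $D$ (and hence in $A$), one sees that $e\in\max(X-A)$. By Corollary~3.1,
\begin{equation*}
\deg(\pi(X,A))-\deg(\pi(X,D))=1-\bigl(|\min(D)|-|\min(A)|\bigr),
\end{equation*}
and a short case analysis (splitting on whether some element of $A$ lies strictly below $e$) shows that this quantity is always non-negative, and equals $0$ precisely when $e\in\min(D)$ \emph{and} no element of $\min(A)$ lies strictly above $e$. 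If the quantity is strictly positive, condition \textbf{2)} in the definition of $\curlyeqprec$ immediately yields $\pi(X,D)\curlyeqprec\pi(X,A)$ with $\pi(X,D)\neq\pi(X,A)$, finishing that branch.

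The remaining equal-degree subcase is the main obstacle. Under its hypothesis, setting $Y=\min(A)\cup(X-A)$, a direct computation using $e\in X-A$ gives $Y=\min(D)\cup(X-D)$ as well, so Theorem~3.1 rewrites both sides over the common set $Y$:
\begin{equation*}
\pi(X,A)=\pi(Y,\min(A)),\qquad \pi(X,D)=\pi(Y,\min(A)\cup\{e\}).
\end{equation*}
Furthermore $e\in\max(Y)-\min(A)$, because any element of $Y$ strictly above $e$ must lie either in $\min(A)$ (excluded by the case hypothesis) or in $X-A$ (excluded by $e\in\max(X-A)$). Proposition~3.1 applied to $(Y,\min(A),e)$ then produces
\begin{equation*}
\pi(X,A)-\pi(X,D)=(\eta_{(e)}+1_{K})\left(\prod_{i\in\langle\{e\}\rangle_{Y}-\{e\}}\tau_{(i)}\right)x^{k}\,\pi(Y-\langle\{e\}\rangle_{Y},\min(A)),
\end{equation*}
where $k=|\langle\{e\}\rangle_{Y}|$. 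Since (3.11) forces $\eta_{(e)}+1>0$ and each $\tau_{(i)}>0$, and since $\pi(\cdot,\cdot)$ always has constant term $1_{K}$ (evaluate (3.2) at $I=\emptyset$), the right-hand side is congruent to $0$ modulo $x^{k}$ with strictly positive coefficient at $x^{k}$. Condition \textbf{3)} of the definition of $\curlyeqprec$ then delivers $\pi(X,D)\curlyeqprec\pi(X,A)$ together with $\pi(X,D)\neq\pi(X,A)$, completing the base case. The whole difficulty lies in realizing both polynomials over a single ambient set $Y$ on which $e$ is maximal; the rest is bookkeeping on minimal elements.
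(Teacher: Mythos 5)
Your argument is correct and is essentially the paper's own: the degree comparison via Corollary~3.1, the realization of both polynomials over a common set $Y$ via Theorem~3.1, and the coefficient drop at $x^{k}$ from Proposition~3.1 together with the positivity assumptions (3.11) are exactly the ingredients the paper uses (its Lemma~3.3 is precisely your equal-degree subcase). The only difference is bookkeeping: you induct on $|D|-|A|$ by adding one element at a time inside $\mathcal{I}^{\mathbf{c}}(X)$, whereas the paper reduces the general pair $(A,D)$ to the common $Y$ once and then applies its one-element Lemma~3.3 to $\min(A)\subsetneqq\min(D)\subseteq\max(Y)$.
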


\begin{proof}
First, by $\min(D)\subseteq(D-A)\cup\min(A)$, we have
\begin{equation}|A|-|\min(A)|\leqslant|D|-|\min(D)|,\end{equation}
and moreover,
\begin{equation}|A|-|\min(A)|=|D|-|\min(D)|\Longleftrightarrow\min(D)=(D-A)\cup\min(A).\end{equation}
By Corollary 3.1 and (3.12), we have $\deg(\pi(X,D))\leqslant\deg(\pi(X,A))$. Hence, by the definition of $\curlyeqprec$, we can assume $\deg(\pi(X,D))=\deg(\pi(X,A))$. Then, by Corollary 3.1, we have $|A|-|\min(A)|=|D|-|\min(D)|$, which, together with (3.13), implies that $\min(D)=(D-A)\cup\min(A)$, and furthermore,
$$(X-D)\cup\min(D)=(X-A)\cup\min(A):=Y.$$
By Theorem 3.1, we have
$$\pi(X,D)=\pi(Y,\min(D)),~\pi(X,A)=\pi(Y,\min(A)),~\min(D)\subseteq\max(Y).$$
Also noticing that $A\subsetneqq D$, $\min(D)=(D-A)\cup\min(A)$, we deduce that $\min(A)\subsetneqq\min(D)$. Now the result immediately follows from Lemma 3.3.
\end{proof}

\setlength{\parindent}{2em}
The following theorem is our second main result of this subsection, which, as detailed in Section 3.4, can be used to prove Theorem 2.4.

\setlength{\parindent}{0em}
\begin{theorem}
Let $\Theta=\{\pi(\Omega,M)\mid M\in\mathcal{I}(\mathbf{\overline{P}})\}$. Then, we have $|\Theta|\geqslant n+1$. Moreover, the following four conditions are equivalent to each other:

{\bf{(1)}}\,\,$|\Theta|=n+1$;

{\bf{(2)}}\,\,For any $C,D\in\mathcal{I}(\mathbf{\overline{P}})$, $|C|=|D|\Longrightarrow\pi(\Omega,C)=\pi(\Omega,D)$;

{\bf{(3)}}\,\,For any $C,D\in\mathcal{I}(\mathbf{\overline{P}})$, $\pi(\Omega,C)=\pi(\Omega,D)\Longleftrightarrow |C|=|D|$;

{\bf{(4)}}\,\,$\mathbf{P}$ is hierarchical, and for any $u,v\in \Omega$ with $\len(u)=\len(v)$, it holds that $\eta_{(u)}=\eta_{(v)}$.
\end{theorem}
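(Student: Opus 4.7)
The plan is to derive the lower bound $|\Theta|\geqslant n+1$ and the non-trivial implication $(1)\Longrightarrow(2)$ directly from Theorem 3.4, while obtaining $(2)\Longleftrightarrow(3)\Longleftrightarrow(4)$ for free from Theorem 3.3. Under the standing assumption (3.11), $K=\mathbb{R}$ is a field, $\tau_{(i)}>0\neq 0$ and $\eta_{(i)}>-1\neq -1_{K}$, so the hypotheses of Theorem 3.3 are met and it immediately yields the equivalence of $(2)$, $(3)$ and $(4)$.

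For the lower bound, I apply parts (1) and (3) of Lemma 2.2 to $\overline{\mathbf{P}}$ to build a saturated chain
$$\emptyset=M_{0}\subsetneqq M_{1}\subsetneqq\cdots\subsetneqq M_{n}=\Omega$$
with $M_{i}\in\mathcal{I}(\overline{\mathbf{P}})$ and $|M_{i}|=i$. Applying Theorem 3.4 to each consecutive pair gives
$$\pi(\Omega,M_{n})\curlyeqprec\pi(\Omega,M_{n-1})\curlyeqprec\cdots\curlyeqprec\pi(\Omega,M_{0})$$
with every inequality strict, so these $n+1$ polynomials are distinct elements of $\Theta$; this establishes $|\Theta|\geqslant n+1$. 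The easy direction $(2)\Longrightarrow(1)$ follows at once: if $(2)$ holds, $\pi(\Omega,M)$ depends only on $|M|\in\{0,1,\dots,n\}$, forcing $|\Theta|\leqslant n+1$, which combined with the lower bound gives equality.

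The main step is $(1)\Longrightarrow(2)$. Given any $D\in\mathcal{I}(\overline{\mathbf{P}})$, I invoke parts (1), (2) and (3) of Lemma 2.2 once more to extend $D$ on both sides to a saturated chain
$$\emptyset=N_{0}\subsetneqq N_{1}\subsetneqq\cdots\subsetneqq N_{n}=\Omega$$
in $\mathcal{I}(\overline{\mathbf{P}})$ with $N_{|D|}=D$. Exactly as in the lower bound, the $\pi(\Omega,N_{i})$ form a strictly $\curlyeqprec$-decreasing sequence of $n+1$ elements of $\Theta$, and under the hypothesis $|\Theta|=n+1$ this sequence must exhaust $\Theta$; the same is true of the reference sequence $\pi(\Omega,M_{0})\succ\cdots\succ\pi(\Omega,M_{n})$. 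Because $\curlyeqprec$ is a total order on $\mathbb{R}[x]$, any finite subset admits exactly one strictly $\curlyeqprec$-decreasing enumeration, so matching the two enumerations term by term yields $\pi(\Omega,N_{i})=\pi(\Omega,M_{i})$ for every $i\in[0,n]$. Specializing to $i=|D|$ gives $\pi(\Omega,D)=\pi(\Omega,M_{|D|})$, so $\pi(\Omega,D)$ depends only on $|D|$, which is exactly $(2)$.

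The main obstacle is this last matching step; all the technical content — strict $\curlyeqprec$-monotonicity along proper inclusions of co-ideals — is already packaged in Theorem 3.4, and the remaining work amounts to the observation that the total-order structure of $\curlyeqprec$ forces every maximal chain of ideals to induce the same $\curlyeqprec$-indexing of $\Theta$ by $\{0,1,\dots,n\}$.
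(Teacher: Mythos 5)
Your proposal is correct and follows essentially the same route as the paper: the equivalences $(2)\Longleftrightarrow(3)\Longleftrightarrow(4)$ come from Theorem 3.3, the bound $|\Theta|\geqslant n+1$ from a maximal chain in $\mathcal{I}(\mathbf{\overline{P}})$ together with the strict $\curlyeqprec$-monotonicity of Theorem 3.4, and $(1)\Longrightarrow(2)$ by matching strictly $\curlyeqprec$-decreasing enumerations of $\Theta$ produced by maximal chains, exactly as in the paper's comparison of the sequences $\alpha$ and $\beta$. Comparing each chain to a fixed reference chain rather than comparing two chains through $C$ and $D$ directly is only a cosmetic variation.
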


\begin{proof}
By Lemma 2.2 and Theorem 3.4, it can be readily verified that $|\Theta|\geqslant|\Omega|+1=n+1$. Now, $(2)\Longleftrightarrow(3)\Longleftrightarrow(4)$ follows from Theorem 3.3, and Lemma 2.2 immediately yields that $(3)\Longrightarrow(1)$. Therefore it remains to prove $(1)\Longrightarrow(2)$.

$(1)\Longrightarrow(2)$\,\,We fix $C,D\in\mathcal{I}(\mathbf{\overline{P}})$ with $|C|=|D|=m$. By Lemma 2.2, we can choose a sequence $A_{(0)}\subseteq A_{(1)}\subseteq\cdots \subseteq  A_{(n)}$ such that $A_{(m)}=C$ and for any $j\in[0,n]$, $A_{(j)}\in\mathcal{I}(\mathbf{\overline{P}})$, $|A_{(j)}|=j$. Similarly, we can choose a sequence $B_{(0)}\subseteq B_{(1)}\subseteq\cdots \subseteq B_{(n)}$ such that $B_{(m)}=D$ and for any $j\in[0,n]$, $B_{(j)}\in\mathcal{I}(\mathbf{\overline{P}})$, $|B_{(j)}|=j$. Now define $\alpha:[0,n]\longrightarrow\Theta$ as $\alpha(j)=\pi(\Omega,A_{(j)})$, and define $\beta:[0,n]\longrightarrow\Theta$ as $\beta(j)=\pi(\Omega,B_{(j)})$. By Theorem 3.4, for any $j,t\in[0,n]$ with $j+1\leqslant t$, we have $\alpha(t)\neq\alpha(j)$, $\alpha(t)\curlyeqprec\alpha(j)$, $\beta(t)\neq\beta(j),~\beta(t)\curlyeqprec\beta(j)$. Noticing that $|[0,n]|=|\Theta|=n+1$ and $\curlyeqprec$ is a total order on $\mathbb{R}[x]$, we deduce that $\alpha=\beta$. In particular, we have
$$\pi(\Omega,C)=\pi(\Omega,A_{(m)})=\alpha(m)=\beta(m)=\pi(\Omega,B_{(m)})=\pi(\Omega,D),$$
which immediately establishes (2).
\end{proof}

\subsection{Proofs of Theorem 2.4 and Theorem 2.5}

\setlength{\parindent}{2em}
In this subsection, we prove Theorem 2.4 and Theorem 2.5. Throughout, we assume $K=\mathbb{R}$, and we will use the notations in Section 2. As in Remark 3.1, we set $\tau_{(i)}=h_{i}$, $\eta_{(i)}=h_{i}-1$ for all $i\in\Omega$, and moreover, we define $\Theta$ as
$$\Theta=\{\pi(\Omega,M)\mid M\in\mathcal{I}(\mathbf{\overline{P}})\}.$$

We begin with the following lemma.

\setlength{\parindent}{0em}
\begin{lemma}
{\bf{(1)}}\,\,For $\alpha,\gamma\in\hat{\mathbf{H}}$ with $D=\langle \supp(\alpha)\rangle_{\mathbf{\overline{P}}}$, $B=\langle \supp(\gamma)\rangle_{\mathbf{\overline{P}}}$, it holds that $\alpha\sim_{\Lambda}\gamma\Longleftrightarrow\pi(\Omega,D)=\pi(\Omega,B)$.

{\bf{(2)}}\,\,If $h_{i}\geqslant2$ for all $i\in\Omega$, then $|\Lambda|=|\Theta|$, and moreover, $\mathcal{Q}(\mathbf{H},\mathbf{P})$ is Fourier-reflexive if and only if $|\Theta|=n+1$.
\end{lemma}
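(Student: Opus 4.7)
The plan is as follows. For part (1), the aim is to reduce the equivalence to an instance of Lemma 2.1 via the polynomial dictionary set up in Remark 3.1. First, I will apply Lemma 2.1 to translate $\alpha\sim_{\Lambda}\gamma$ into $F(\alpha)=F(\gamma)$. Next, with the substitutions $\tau_{(i)}=h_{i}$ and $\eta_{(i)}=h_{i}-1$ for $i\in\Omega$, Remark 3.1 (which itself rests on Theorem 2.1 and Theorem 3.1) identifies $F(\alpha)$ with $\pi(\Omega,D)$ whenever $D=\langle\supp(\alpha)\rangle_{\overline{\mathbf{P}}}$, and analogously identifies $F(\gamma)$ with $\pi(\Omega,B)$. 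The stated equivalence then follows by direct substitution.

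For part (2), I will establish a bijection between the quotient $\hat{\mathbf{H}}/\Lambda$ and $\Theta$. Define $\Phi:\hat{\mathbf{H}}\longrightarrow\Theta$ by $\Phi(\alpha)=\pi(\Omega,\langle\supp(\alpha)\rangle_{\overline{\mathbf{P}}})$. Part (1) shows $\Phi(\alpha)=\Phi(\gamma)\Longleftrightarrow\alpha\sim_{\Lambda}\gamma$, so $\Phi$ descends to a well-defined injection $\widetilde{\Phi}:\hat{\mathbf{H}}/\Lambda\hookrightarrow\Theta$. It therefore suffices to verify surjectivity: for every $M\in\mathcal{I}(\overline{\mathbf{P}})$ there exists some $\alpha\in\hat{\mathbf{H}}$ with $\langle\supp(\alpha)\rangle_{\overline{\mathbf{P}}}=M$. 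Here the hypothesis $h_{i}\geqslant 2$ enters in an essential way: since $|\hat{H_{i}}|=h_{i}\geqslant 2$, I can pick a non-trivial character $\alpha_{(i)}\neq 1_{\hat{H_{i}}}$ for each $i\in M$ and $\alpha_{(i)}=1_{\hat{H_{i}}}$ otherwise, obtaining $\alpha$ with $\supp(\alpha)=M$; because $M$ is already a $\overline{\mathbf{P}}$-ideal, $\langle\supp(\alpha)\rangle_{\overline{\mathbf{P}}}=M$. Hence $|\Lambda|=|\Theta|$.

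Finally, the Fourier-reflexivity claim is obtained by splicing the identity $|\Lambda|=|\Theta|$ with Corollary 2.1, which under $h_{i}\geqslant 2$ gives $|\mathcal{Q}(\mathbf{H},\mathbf{P})|=n+1$, $|\Lambda|\geqslant n+1$, and the criterion that $\mathcal{Q}(\mathbf{H},\mathbf{P})$ is Fourier-reflexive if and only if $|\Lambda|=n+1$. Rewriting $|\Lambda|$ as $|\Theta|$ yields the desired equivalence. There is no substantive obstacle in this proof; the argument is essentially an organized assembly of previously established results. The mildly delicate point is simply bookkeeping: one must keep the identification $\hat{\mathbf{H}}=\prod_{i\in\Omega}\hat{H_{i}}$ from (1.6) in force while invoking Remark 3.1 so that $\supp(\alpha)$, the ideal $\langle\supp(\alpha)\rangle_{\overline{\mathbf{P}}}$, and the polynomial $\pi(\Omega,\langle\supp(\alpha)\rangle_{\overline{\mathbf{P}}})$ are all unambiguously defined and coherent.
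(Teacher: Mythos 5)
Your proposal is correct and follows essentially the same route as the paper: part (1) is Lemma 2.1 combined with the identification $F(\alpha)=\pi(\Omega,D)$ from Remark 3.1, and part (2) uses the fact that $h_i\geqslant 2$ guarantees every $M\in\mathcal{I}(\overline{\mathbf{P}})$ arises as $\langle\supp(\alpha)\rangle_{\overline{\mathbf{P}}}$, then invokes Corollary 2.1. Your explicit bijection $\widetilde{\Phi}$ merely spells out what the paper leaves implicit.
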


\begin{proof}
(1)\,\,This immediately follows from Remark 3.1 and Lemma 2.1.

(2)\,\,Assume that $h_{i}\geqslant2$ for all $i\in\Omega$. Then, for any $D\in\mathcal{I}(\mathbf{\overline{P}})$, there exists $\alpha\in\hat{\mathbf{H}}$ with $D=\langle\supp(\alpha)\rangle_{\mathbf{\overline{P}}}$. Now the result immediately follows from (1) and Corollary 2.1.
\end{proof}

\setlength{\parindent}{2em}
Now we are ready to prove Theorems 2.4 and 2.5.\\

\setlength{\parindent}{0em}
{\textbf{Proof of Theorem 2.4:}} Since $h_{i}\geqslant2$ for all $i\in\Omega$, $(1)\Longleftrightarrow(3)$ immediately follows from Lemma 3.4 and Theorem 3.5. Moreover, (2) implies that $|\Lambda|=n+1$, which, together with Corollary 2.1, immediately implies (1). Therefore it remains to prove $(1)\Longrightarrow(2)$. Now assume (1) holds. Then, by Lemma 3.4, we have $|\Theta|=n+1$. For $\alpha,\gamma\in\hat{\mathbf{H}}$ with $D=\langle \supp(\alpha)\rangle_{\mathbf{\overline{P}}}$, $B=\langle \supp(\gamma)\rangle_{\mathbf{\overline{P}}}$, by Lemma 3.4 and Theorem 3.5, we deduce that
$$\alpha\sim_{\Lambda}\gamma\Longleftrightarrow\pi(\Omega,D)=\pi(\Omega,B)\Longleftrightarrow|D|=|B|\Longleftrightarrow\wt_{\mathbf{\overline{P}}}(\alpha)=\wt_{\mathbf{\overline{P}}}(\gamma).$$
It then follows that $\Lambda=\mathcal{Q}(\hat{\mathbf{H}},\mathbf{\overline{P}})$, which further establishes (2), completing the proof.\\

{\textbf{Proof of Theorem 2.5:}} For $\alpha,\gamma\in\hat{\mathbf{H}}$, we let $D=\langle \supp(\alpha)\rangle_{\mathbf{\overline{P}}}$, $B=\langle \supp(\gamma)\rangle_{\mathbf{\overline{P}}}$. By Lemma 3.4 and Proposition 3.4, we deduce that $\alpha\sim_{\Lambda}\gamma$ if and only if $\pi(\Omega,D)=\pi(\Omega,B)$, if and only if there exists a bijection $\varepsilon:B\longrightarrow D$ such that $h_{i}=h_{\varepsilon(i)}$ for all $i\in B$. The above discussion immediately establishes Theorem 2.5, completing the proof.\\

\section*{Appendix}\appendix

\section{An association scheme approach to Conjecture 1.1}

\setlength{\parindent}{2em}
In this section, we give an alternative proof of Conjecture 1.1 using the notion of association scheme. We begin with the following definition, where we follow [3, Definition 2.1].

\setcounter{equation}{0}
\renewcommand\theequation{E.\arabic{equation}}

\setlength{\parindent}{0em}
\begin{definition}
Let $H$ be a nonempty finite set. Then, for $\Theta\subseteq 2^{H\times H}$, $(H,\Theta)$ is said to be an association scheme if the following five conditions hold:

{\bf{(1)}}\,\,$\Theta$ is a partition of $H\times H$;

{\bf{(2)}}\,\,$id_{H}\triangleq\{(x,x)\mid x\in H\}\in\Theta$;

{\bf{(3)}}\,\,For any $R\in\Theta$, $R^{-1}\triangleq\{(x,y)\mid (y,x)\in R\}\in\Theta$;

{\bf{(4)}}\,\,For any $R,S,T\in\Theta$ and for any $(u,v),(x,y)\in T$, it holds that
$$|\{z\mid z\in H,(u,z)\in R,(z,v)\in S\}|=|\{z\mid z\in H,(x,z)\in R,(z,y)\in S\}|;$$
{\bf{(5)}}\,\,For any $R,S\in\Theta$ and for any $(u,v)\in H\times H$, it holds that
$$|\{z\mid z\in H,(u,z)\in R,(z,v)\in S\}|=|\{w\mid w\in H,(u,w)\in S,(w,v)\in R\}|.$$
\end{definition}

\setlength{\parindent}{2em}
Fourier-reflexive partitions and association schemes are closely related. The following theorem has been established by Zinoviev and Ericson in [17, Theorem 1] (also see the discussion after [4, Theorem 2.4]). Here we state it in a form that is convenient for our discussion later.

\setlength{\parindent}{0em}
\begin{theorem}
Let $H$ be a finite abelian group, and let $\Gamma$ be a partition of $H$. Moreover, let $\Theta$ denote the partition of $H\times H$ such that
$$\text{$(x,y)\sim_{\Theta}(u,v)\Longleftrightarrow x^{-1}y\sim_{\Gamma}u^{-1}v$ for all $(x,y),(u,v)\in H\times H$}.$$
Then, $\Gamma$ is Fourier-reflexive if and only if $(H,\Theta)$ is an association scheme. Alternatively speaking, $\Gamma$ is Fourier-reflexive if and only if the following three conditions hold:

{\bf{(1)}}\,\,$\{1_{H}\}\in\Gamma$;

{\bf{(2)}}\,\,For any $B\in\Gamma$, $\{b^{-1}\mid b\in B\}\in\Gamma$;

{\bf{(3)}}\,\,For any $U,V,W\in\Gamma$ and for any $w,z\in W$, it holds that
$$|\{(u,v)\mid u\in U,v\in V,uv=w\}|=|\{(u,v)\mid u\in U,v\in V,uv=z\}|.$$
\end{theorem}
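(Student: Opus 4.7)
The plan is to prove the theorem in two stages: first, translate the five axioms of Definition E.1 (applied to the translation-invariant $\Theta$ induced by $\Gamma$) into the three alternative conditions; second, prove via the Bose-Mesner algebra of $\Theta$ that $(H,\Theta)$ is an association scheme if and only if $\Gamma$ is Fourier-reflexive.

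For the translation step, each $R\in\Theta$ equals $R_{B}=\{(x,y)\in H\times H : x^{-1}y\in B\}$ for a unique $B\in\Gamma$, and $B\mapsto R_{B}$ is a bijection $\Gamma\to\Theta$, so axiom (1) of Definition E.1 is automatic. The identities $id_{H}=R_{\{1_{H}\}}$ and $R_{B}^{-1}=R_{B^{-1}}$, where $B^{-1}=\{b^{-1}:b\in B\}$, make axioms (2) and (3) coincide with conditions $(1)$ and $(2)$. Axiom (4) unfolds to $|\{z:(x,z)\in R_{U},(z,y)\in R_{V}\}|=|\{(u,v)\in U\times V : uv=x^{-1}y\}|$ depending only on the block $W\ni x^{-1}y$, which is condition $(3)$. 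Axiom (5) is automatic from the commutativity $uv=vu$ in $H$.

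For the second stage, define $M_{B}\in\mathbb{C}^{H\times H}$ by $M_{B}[x,y]=1$ if $x^{-1}y\in B$ and $0$ otherwise, and set $\mathcal{A}:=\text{span}\{M_{B}:B\in\Gamma\}$. Since $H$ is abelian the $M_{B}$'s pairwise commute and are simultaneously diagonalized by the characters of $H$, with $M_{B}\chi=K_{\chi,B}\chi$ for each $\chi\in\hat{H}$. The Fourier transform accordingly identifies $\mathcal{A}$ with $\mathcal{W}:=\text{span}\{\chi\mapsto K_{\chi,B}\}_{B\in\Gamma}\subseteq\mathbb{C}^{\hat{H}}$, and $\mathcal{A}$ is closed under matrix multiplication if and only if $\mathcal{W}$ is closed under pointwise product. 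By the definition of $\widehat{\Gamma}$, the spanning functions of $\mathcal{W}$ separate different $\widehat{\Gamma}$-blocks and lie inside the algebra $V_{\widehat{\Gamma}}$ of functions constant on $\widehat{\Gamma}$-blocks, and character orthogonality yields $\dim\mathcal{W}=|\Gamma|$. The standard fact that a unital subalgebra of $\mathbb{C}^{\hat{H}}$ (under pointwise multiplication) coincides with the space of functions constant on its own separation partition then implies that $\mathcal{W}$ is pointwise-closed and contains $1$ if and only if $\mathcal{W}=V_{\widehat{\Gamma}}$, i.e., $|\Gamma|=|\widehat{\Gamma}|$, which by Theorem 1.1 is Fourier-reflexivity of $\Gamma$.

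The main obstacle is obtaining conditions $(1)$ and $(2)$ for free whenever $\Gamma$ is Fourier-reflexive. I would argue that Fourier-reflexivity already gives $\mathcal{W}=V_{\widehat{\Gamma}}$ by the dimensional equality $|\Gamma|=|\widehat{\Gamma}|$, and that $V_{\widehat{\Gamma}}$ both contains the constant function $1$ and is closed under complex conjugation. These two features correspond, under the Fourier transform, to $I=M_{\{1_{H}\}}\in\mathcal{A}$ and $M_{B^{-1}}\in\mathcal{A}$ for every $B\in\Gamma$ (using $K_{\chi,B^{-1}}=\overline{K_{\chi,B}}$). Since the $\{0,1\}$-valued matrices $\{M_{B}:B\in\Gamma\}$ have pairwise disjoint supports, comparing to the $\{0,1\}$-valued targets $I$ and $M_{B^{-1}}$ forces $\{1_{H}\}\in\Gamma$ and forces each $B^{-1}$ to be a union of $\Gamma$-blocks; and since $|\{B^{-1}:B\in\Gamma\}|=|\Gamma|$, this union must consist of exactly one block, i.e., $B^{-1}\in\Gamma$. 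Combined with stage one, this closes the equivalence in both directions.
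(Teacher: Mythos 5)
The paper itself contains no proof of this statement: Theorem A.1 is quoted from Zinoviev and Ericson [17, Theorem 1] (see also Remark A.1), so there is no internal argument to compare yours against. Your proposal is, as far as I can check, a correct and self-contained proof along the standard Delsarte/Bose--Mesner lines: the translation step correctly identifies the blocks of $\Theta$ with the matrices $R_{B}$, reduces axioms (2), (3), (4) of Definition A.1 to conditions (1), (2), (3) and disposes of axiom (5) by commutativity; the second stage correctly uses that the group matrices $M_{B}$ are simultaneously diagonalized by characters with eigenvalues $K_{\chi,B}$, so that multiplicative closure of $\mathcal{A}$ is pointwise closure of $\mathcal{W}$, and the dimension count $|\Gamma|=\dim\mathcal{W}\leqslant|\widehat{\Gamma}|$ together with Theorem 1.1 closes the loop; and your disjoint-support arguments recovering $\{1_{H}\}\in\Gamma$ and $B^{-1}\in\Gamma$ from $I\in\mathcal{A}$ and $M_{B}^{T}\in\mathcal{A}$ are sound (the terse counting step is fine: the sets $B^{-1}$, $B\in\Gamma$, partition $H$ into $|\Gamma|$ parts, each a nonempty union of blocks, so each is a single block). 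Two points deserve to be made explicit in a final write-up. First, the bridge between condition (3) and multiplicative closure of $\mathcal{A}$ should be stated, though it is exactly the same unfolding $M_{U}M_{V}[x,y]=|\{(u,v)\in U\times V,\ uv=x^{-1}y\}|$ that you already perform for axiom (4). Second, the ``standard fact'' that a unital pointwise-closed subalgebra of $\mathbb{C}^{\hat{H}}$ equals the functions constant on its separation partition must be invoked in the form that does not assume closure under complex conjugation, since in the direction (1),(3) $\Rightarrow$ reflexive you do not yet have condition (2); this conjugation-free version is indeed true for finite index sets (construct the block indicators by Lagrange-type interpolation inside the algebra, or argue via semisimplicity and the Chinese Remainder Theorem), but it is worth a line because the usual Stone--Weierstrass phrasing includes self-adjointness. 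The fact that condition (2) turns out to be unnecessary for that direction, and the mild abuse of writing $M_{B^{-1}}$ before $B^{-1}$ is known to be a block, are harmless.
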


\setlength{\parindent}{0em}
\begin{remark}
In \cite{17}, the authors identify $\hat{H}$ with $H$ and use the notion of B-partitions, which are essentially Fourier-reflexive partitions.
\end{remark}

\setlength{\parindent}{2em}
Now we prove Conjecture 1.1. Our proof closely follows the spirits in the proofs of [11, Theorem 2.9] and [9, Theorem 3, part $\mathfrak{B}_3$], where the ambient space for codes are set to be $\mathbb{F}^{n}$ for some finite field $\mathbb{F}$ and $n\in\mathbb{Z}^{+}$. From now on until the end of this section, we let $\Omega$ be a finite set, and let $\left(H_{i}\mid i\in \Omega\right)$ be a family of finite abelian groups such that $|H_{i}|\geqslant2$ for all $i\in\Omega$. We set $\mathbf{H}=\prod_{i\in \Omega}H_{i}$, and we also fix a poset $\mathbf{P}=(\Omega,\preccurlyeq_{\mathbf{P}})$.\\

\setlength{\parindent}{0em}
{\textbf{Proof of Conjecture 1.1:}} Assume that $\mathcal{Q}(\mathbf{H},\mathbf{P})$ is Fourier-reflexive, and we show that $\mathbf{P}$ is hierarchical. Fixing $B\subseteq\Omega$ such that $a\preccurlyeq_{\mathbf{P}}b$ for all $a\in \Omega-B$, $b\in B$, we will show that
\begin{equation}\mbox{$c\preccurlyeq_{\mathbf{P}}d$ for all $c\in \min_{\mathbf{P}}(B)$, $d\in B-\min_{\mathbf{P}}(B)$}.\end{equation}
By way of contradiction, we assume that (E.1) does not hold. In the following, we will establish a series of facts step by step, and finally obtain a contradiction.

{\textbf{Step 1.}} Since (E.1) does not hold, we can choose $u\in\min_{\mathbf{P}}(B)$ and $v\in \min_{\mathbf{P}}(B-\min_{\mathbf{P}}(B))$ such that $u\not\preccurlyeq_{\mathbf{P}}v$. Since $|H_{v}|\geqslant2$, we can fix $\beta\in\mathbf{H}$ with $\supp(\beta)=\{v\}$. Moreover, it can be verified that there exists $k\in[2,|\min_{\mathbf{P}}(B)|]$ such that $\wt_{\mathbf{P}}(\beta)=|\Omega-B|+k$.

{\textbf{Step 2.}} For any $\alpha\in\mathbf{H}$ with $\wt_{\mathbf{P}}(\alpha)=|\Omega-B|+1$, it can be verified that $v\in\supp(\alpha^{-1}\beta)$, and hence $\wt_{\mathbf{P}}(\beta)\leqslant\wt_{\mathbf{P}}(\alpha^{-1}\beta)$. Therefore we have:
\begin{equation}\hspace*{-5mm}\{(\alpha,\eta)\in\mathbf{H}^{2}\mid\wt_{\mathbf{P}}(\alpha)=|\Omega-B|+1,\wt_{\mathbf{P}}(\eta)=\wt_{\mathbf{P}}(\beta)-1,\alpha\eta=\beta\}=\emptyset.\end{equation}

{\textbf{Step 3.}} Since $k\in[2,|\min_{\mathbf{P}}(B)|]$, there exists $M\subseteq\min_{\mathbf{P}}(B)$ such that $|M|=k$. Since $|H_{i}|\geqslant2$ for all $i\in\Omega$, we can choose $\theta\in \mathbf{H}$ such that $\supp(\theta)=M$. Moreover, it can be readily verified that
$$\wt_{\mathbf{P}}(\theta)=|\Omega-B|+k=\wt_{\mathbf{P}}(\beta).$$

{\textbf{Step 4.}} Fix $c\in M$. Then, there uniquely exists $\gamma\in\mathbf{H}$ such that $\supp(\gamma)=\{c\}$ and $\gamma_{(c)}=\theta_{(c)}$. It is straightforward to verify that $\wt_{\mathbf{P}}(\gamma)=|\Omega-B|+1$, $\wt_{\mathbf{P}}(\gamma^{-1}\theta)=\wt_{\mathbf{P}}(\theta)-1=\wt_{\mathbf{P}}(\beta)-1$, which, in particular, implies that
\begin{equation}\hspace*{-5mm}\{(\alpha,\eta)\in\mathbf{H}^{2}\mid\wt_{\mathbf{P}}(\alpha)=|\Omega-B|+1,\wt_{\mathbf{P}}(\eta)=\wt_{\mathbf{P}}(\beta)-1,\alpha\eta=\theta\}\neq\emptyset.\end{equation}

Since $\wt_{\mathbf{P}}(\theta)=\wt_{\mathbf{P}}(\beta)$, by (E.2), (E.3) and Theorem A.1, we deduce that $\mathcal{Q}(\mathbf{H},\mathbf{P})$ is not Fourier-reflexive, a contradiction, as desired. Therefore we conclude that (E.1) holds true. Finally, we apply Lemma 2.6 and reach $\mathbf{P}$ is hierarchical, completing the proof.

\end{document}